\newcommand\MoveEqLeft[1][2]{\kern #1em  &   \kern -#1em}
\let\originalleft\left
\let\originalright\right
\renewcommand{\left}{\mathopen{}\mathclose\bgroup\originalleft}
\renewcommand{\right}{\aftergroup\egroup\originalright}
\newcommand{\Exp}[1]{\ensuremath{\exp\left(#1\right)}}
\newcommand\inlinefrac[2]{#1/#2}
\newcommand\ifrac[2]{\IfInteger{#1}{\IfInteger{#2}{\nicefrac{#1}{#2}}{\inlinefrac{#1}{#2}}}{\inlinefrac{#1}{#2}}}
\definecolor{darkred}{rgb}{0.5,0,0}
\definecolor{lightblue}{rgb}{0,0.4,0.8}
\definecolor{darkgreen}{rgb}{0,0.5,0}
\def\NewTheorem#1#2{%
  \newaliascnt{#1}{theorem}
  \newtheorem{#1}[#1]{#2}
  \aliascntresetthe{#1}
  \expandafter\def\csname #1autorefname\endcsname{#2}
}
 \newtheorem{theorem}{Theorem}[section]
\newtheoremstyle{restate}{\topsep}{\topsep}{\itshape}{0pt}{\bfseries}{.}{5pt plus 1pt minus 1pt}{\thmname{#1}\thmnumber{ \begin{NoHyper}\ref{#3}\end{NoHyper}}}
\theoremstyle{restate}
\newtheorem{restate}{Theorem}
\theoremstyle{remark}
\newcommand{\ceil}[1]{\ensuremath{\left\lceil#1\right\rceil}}
\newcommand{\floor}[1]{\ensuremath{\left\lfloor#1\right\rfloor}}
\DeclareMathOperator{\BIGO}{O}
\DeclareMathOperator{\LITTLEO}{o}
\DeclareMathOperator{\BIGOMEGA}{\Omega}
\DeclareMathOperator{\LITTLEOMEGA}{\omega}
\DeclareMathOperator{\BIGTHETA}{\Theta}
\newcommand{\BigO}[1]{\ensuremath{\BIGO\left(#1\right)}}
\newcommand{\LittleO}[1]{\ensuremath{\LITTLEO\left(#1\right)}}
\newcommand{\BigOmega}[1]{\ensuremath{\BIGOMEGA\left(#1\right)}}
\newcommand{\LittleOmega}[1]{\ensuremath{\LITTLEOMEGA\left(#1\right)}}
\newcommand{\BigTheta}[1]{\ensuremath{\BIGTHETA\left(#1\right)}}
\newcommand{\onebit}{{\normalfont\texttt{OneExtraBit}}\xspace}
\newcommand\whp{with high probability\whpFootnote\xspace}
\newcommand\Whp{With high probability\whpFootnote\xspace}
\newcommand\whpFootnoteText{\footnote{Throughout this paper, the expression \emph{\whp} means a probability of at least $1 - n^{-\BigOmega{1}}$.}}
\newcommand\whpFootnote{\global\def\whpFootnote{}\whpFootnoteText}
\DeclareMathOperator{\PROBABILITY}{P\scriptstyle{r}}
\newcommand{\Probability}[1]{\ensuremath{ \PROBABILITY\left[#1\right] }}
\DeclareMathOperator{\EXPECTED}{E}
\newcommand{\Expected}[1]{\ensuremath{ \EXPECTED\left[#1\right] }}
\DeclareMathOperator{\VARIANCE}{Var}
\newcommand{\Variance}[1]{\ensuremath{ \VARIANCE\left[#1\right] }}
\newcommand\numberthis{\addtocounter{equation}{1}\tag{\theequation}}
\def\A{\ensuremath{\mathcal{A}}\xspace}
\def\B{\ensuremath{\mathcal{B}}\xspace}
\def\C#1{\ensuremath{\mathcal{C}_{#1}}\xspace}
\def\Ci{\C{i}}
\def\notC#1{\ensuremath{\overline{\mathcal{C}_{#1}}}\xspace}
\def\ttrue{\ensuremath{\text{\scriptsize{\textsc{True}}}}\xspace}
\def\tfalse{\ensuremath{\text{\scriptsize{\textsc{False}}}}\xspace}
\def\tnull{\ensuremath{\text{\scriptsize{\textsc{Null}}}}\xspace}
\newcommand\eps[1]{\ensuremath{\varepsilon_{\scriptscriptstyle #1}}}
\newcommand\teps[1]{\ensuremath{\varepsilon_{\scriptscriptstyle \text{\normalfont #1}}}}
\def\P#1{#1^{\scriptscriptstyle(P)}}
\def\PP#1{#1^{\scriptscriptstyle(P')}}
\def\lefttag#1{\tag*{\makebox[0pt][l]{\hspace*{-\textwidth}#1}}}
\newcommand\SG{Sync Gadget\xspace}
\newcommand\BP{Bit-Propagation\xspace}
\newcommand\TC{Two-Choices\xspace}
\title{Rapid Asynchronous Plurality Consensus}
\author[1]{Robert Elsässer}
\author[2]{Tom Friedetzky}
\author[1]{Dominik Kaaser}
\author[3]{Frederik Mallmann-Trenn}
\author[1]{Horst Trinker}
\affiliation{University of Salzburg, Austria\\
\textit{elsa@cosy.sbg.ac.at},\, \textit{dominik@cosy.sbg.ac.at},\, \textit{horst.trinker@sbg.ac.at}}
\affiliation{Durham University, U.K.\\
\textit{tom.friedetzky@dur.ac.uk}}
\affiliation{École normale supérieure, Paris, France\\
Simon Fraser University, Canada\\
\textit{mallmann@di.ens.fr}
}
\begin{document}
\maketitle
\thispagestyle{empty}

\begin{abstract}
We consider distributed plurality consensus in a complete graph of size $n$
with $k$ initial opinions. We design an efficient and simple protocol in the
asynchronous communication model that ensures that all nodes eventually agree
on the initially most frequent opinion. In this model, each node is equipped
with a random Poisson clock with parameter $\lambda=1$. Whenever a node's clock
ticks, it samples some neighbors, uniformly at random and with replacement, and
adjusts its opinion according to the sample.

Distributed plurality consensus has been deeply studied in the synchronous
communication model, where in each round, every node chooses a sample of its
neighbors, and revises its opinion according to the obtained sample. A
prominent example is the so-called two-choices algorithm, where in each round,
every node chooses two neighbors uniformly at random, and if the two sampled
opinions coincide, then that opinion is adopted. This protocol is very
efficient and well-studied when $k=2$. If $k=\BigO{n^\varepsilon}$ for some
small $\varepsilon$, we show that it converges to the initial plurality opinion
within $\BigO{k \cdot \log{n}}$ rounds, w.h.p., as long as the initial
difference between the largest and second largest opinion is $\BigOmega{\sqrt{n
\log n}}$. On the other side, we show that there are cases in which
$\BigOmega{k}$ rounds are needed, w.h.p. 

One can beat this lower bound by combining the two-choices protocol with
push-pull broadcasting. The main idea is to divide the process into several
phases, where each phase consists of a two-choices round followed by several
broadcasting rounds. This, however, is difficult to realize in the asynchronous
model, as we can no longer rely on nodes performing the same operations at the
same time.

Our main contribution is just that: a non-trivial adaptation of this approach
to the asynchronous model. If the support of the most frequent opinion is at
least $(1+\varepsilon)$ times that of the second-most frequent one and
$k=\BigO{\Exp{\log{n}/\log \log{n}}}$, then our protocol achieves the best
possible run time of $\BigO{\log n}$, w.h.p. Key to our adaptation is that we
relax full synchronicity by allowing $\LittleO{n}$ nodes to be poorly
synchronized, and the well synchronized nodes are only required to be within a
certain time difference from one another. We enforce this ``sufficient''
synchronicity by introducing a novel gadget into the protocol. Other parts of
the adaptation are made to work using arguments and techniques based on a Pólya
urn model.

\end{abstract}
\vspace{0.1cm}
\textbf{Keywords: } Plurality Consensus, Distributed Randomized Algorithms, Stability, Asynchronicity

\clearpage
\tableofcontents
\newpage
\section{Introduction} \label{sect:introduction}
Distributed voting is a fundamental problem in distributed computing with
applications in a multitude of fields. In distributed computing, these include,
among others, consensus \cite{HP01} and leader election \cite{BMPS04}.

We consider the following  plurality consensus process on the clique $K_n$ of
size $n$. Each node in the network starts with one initial opinion, which we
also refer to as color, from a finite set of possible opinions. We distinguish
between the synchronous and the asynchronous setting. In the \emph{synchronous
model}, all nodes communicate simultaneously with some of their neighbors and
update their opinions accordingly. In the \emph{asynchronous model}, we assume
that each node has a random clock which ticks according to a Poisson
distribution, once per unit of time in expectation. Again, upon activation a
node updates its opinion according to a sample of its neighborhood.

Regardless of the underlying model of synchronicity, if eventually all nodes
agree on one opinion, we say this opinion \emph{wins}, and the process
\emph{converges}. Typically, one would demand from such a voting procedure to
run accurately, that is, the opinion with the largest initial support should
win with decent probability $(1-\LittleO{1})$, and to be efficient, that is,
the voting process should converge within as few communication steps as
possible. Additionally, voting algorithms are usually required to be simple,
fault-tolerant, and easy to implement \cite{HP01,Joh89}.

\subsection{Model}
In the following section, we will introduce formally the model which we
consider in the remainder of this paper. We give a formal definition of the
consensus process in the synchronous and the asynchronous model followed by an
overview of our results in \autoref{sect:our-contribution}.

We consider the following plurality consensus process on the clique $G=(V,E) =
K_n$ of size $n$. Initially, the nodes are partitioned into $k$ groups
representing $k$ colors $\mathcal{C}_1, \dots, \mathcal{C}_k$. We will denote
the number of nodes having color \C{j} as $c_j$. We furthermore denote the set
of all colors as $C = \left\{\mathcal{C}_1, \dots, \mathcal{C}_k\right\}$.
Also, we will occasionally abuse notation and use $\mathcal{C}_i$ to denote the
\emph{set} of all vertices having color $\mathcal{C}_i$. W.l.o.g., we assume
that colors are ordered in descending order such that $c_1 \geq c_2 \geq \dots
\geq c_k$. We will denote the initial plurality color \C1 as \A with size $a =
c_1$ and we will use \B to denote the second largest color \C{2} of size $b =
c_2$.

\subsubsection*{Synchronous Model}
In the synchronous model we assume that the protocol operates in discrete
rounds. In each round, the nodes may simultaneously sample other nodes
uniformly at random and then simultaneously change their opinion as a function
of the observed samples. One prominent example here is the \TC process where in
each round every node samples two nodes chosen uniformly at random, with
replacement. If the chosen nodes' colors coincide, then the node adopts this
color. We denote this process as the \emph{plurality consensus process with two
choices}. Our first two results will be shown w.r.t.\ this synchronous model.

\subsubsection*{Parallel Asynchronous Model}
In the asynchronous model, every node $v$ is equipped with a random clock which
ticks according to a Poisson distribution with parameter $\lambda = 1$.
Whenever a node ticks, it may sample nodes chosen uniformly at random and
update its opinion based on the sampled values. That is, we assume a
memory-less random clock, such that for every node the time between two ticks
is exponentially distributed with parameter $\lambda = 1$. Consequently, from
the memory-less property it follows that at any time $t$ each node has the same
probability $1/n$ to be the next one to tick.

\subsubsection*{Sequential Asynchronous Model}
While the parallel model described above represents real-world processes for
which event frequencies are commonly modeled by Poisson clocks, we give in the
following a more theoretical yet equivalent model.

The Poisson distribution used for the clocks in the parallel model has the
so-called \emph{memory-less property}. That is, at any given time $t$,
regardless of the previous events, every node has exactly the same probability
to be the next node to tick, namely $1/n$. We furthermore assume that, upon a
node's activation, the execution of one step occurs atomically, that is, no two
nodes are ever active concurrently. Therefore, instead of considering the
asynchronous parallel process in continuous time, we rather analyze the process
in the so-called sequential model. In this sequential model, we assume that a
discrete time is given by the sequence of ticks, and at any of the discrete
time steps, a node is selected to perform its task uniformly at random from the
set of all nodes.

Observe that we can relate the number of ticks in the sequential model to the
continuous time in the asynchronous model as follows (see also \cite{AGV15}).
We have for any tick $t$ in the asynchronous sequential model that
$\Expected{T_t} = t/n$, where $T_t$ is the random variable for the continuous
time of tick $t$. Moreover, for the expected number of ticks allotted by the
asynchronous voting algorithm described in
\autoref{sect:asynchronous-analysis}, we obtain that the continuous time is
concentrated around the expected value such that \whp the asynchronous voting
process converges after at most $\BigO{\log{n}}$ time units. See, e.g.,
\cite[Lemma~1]{BGPS06} for details on the concentration. 

\subsection*{Stability}
In our analysis, we will show that the \TC process can tolerate the presence of
an adversary which is allowed to arbitrarily change the opinion of up to
$F=c_1(c_1-c_2)/(8n)$ arbitrarily selected nodes after every round. We will
show that under these assumptions our \TC process still guarantees  that \whp a
vast majority of nodes accept the plurality opinion, that is, the initially
dominant opinion. Observe that, similarly, all our theorems also hold if the
adversary is allowed to change opinions at the \emph{beginning} of a round. We
use a definition similar to the definition by Becchetti et al.\ \cite{BCNPT15},
which in turn has its roots in \cite{AAE08, AFJ06}.

\begin{definition} \label{def:stability}
A \emph{stabilizing near-plurality protocol} ensures the following properties:
\begin{enumerate}
\item \emph{Almost agreement.} Starting from any initial configuration, in a
finite number of rounds, the system must reach a regime of configurations where
all but a negligible \emph{bad} subset of nodes of size at most
$\BigO{n^\varepsilon}$ for some constant $\varepsilon < 1$ support the same
opinion.
\item \emph{Almost validity.} Given a large enough initial bias, the system is
required to converge to the plurality  opinion $\A$, \whp, where all but a
negligible \emph{bad} set of nodes have opinion $\A$.
\item \emph{Non-termination.} In dynamic distributed systems, nodes represent
simple and anonymous computing units which are not necessarily able to detect
any global property.
\item \emph{Stability.} The convergence to such a weaker form of agreement is
only guaranteed to hold \whp.
\end{enumerate}
\end{definition}

\subsection{Our Contributions} \label{sect:our-contribution}

In this paper we consider a modification of the \TC protocol to design an
efficient distributed voting algorithm, allowing for a large number of
different opinions in the asynchronous settings. So far, most work in this area
concentrated on the synchronous communication model. As we see below, the \TC
protocol has certain limitations -- even in this synchronous setting.

\paragraph{Limits of the Two-Choices Approach.}

The \TC protocol seems to be very efficient if the number of colors is two
\cite{CER14}. The following result can be seen as an extension of Cooper et
al.\ \cite{CER14} on the complete graph when initially the number of opinions
is larger than two. That is, we assume that every node of the clique $G = (V,E)
= K_n$ has one of $k$ possible opinions at the beginning, where $k =
\BigO{n^\epsilon}$ for some small positive constant $\epsilon$. Then, the
following theorem holds.

\begin{theorem} \label{thm:main-result}
Consider the synchronous model. Let $G=K_{n}$ be the complete graph with $n$
nodes. Let $k=\BigO{n^\varepsilon}$ be the number of opinions for some small
constant $\varepsilon > 0$. The \TC plurality consensus process defined in
\autoref{alg:two-choices} converges \whp to \A within $\BigO{{n}/{c_1} \cdot
\log{n}}$ rounds, if the initial bias is at least $c_1-c_2 \geq z \cdot
\sqrt{n\log{n}}$ for some constant $z$. Assuming this bias, the process
fulfills the stabilizing near-plurality conditions in presence of any
$F=c_1(c_1-c_2)/(8n)$-dynamic adversary.

Furthermore, if we assume that $c_1-c_2 = z \cdot \sqrt{n\log{n}}$ for some
constant $z$, and $c_j =c_2$ for any $j = 3, \dots , k$, then the \TC protocol
requires $\Omega(n/c_1+\log n)$ rounds in expectation to converge.
\end{theorem}

The difficulty in the analysis lies in the possibly diminishingly small initial
\emph{mass} of $\A$ in comparison to the mass of all other colors.
Interestingly, the required initial gap does not depend on the number of
opinions present. Moreover, we also show that if $c_1-c_2 = \BigO{\sqrt{n}}$,
then $\B$ wins with constant probability.

Slightly later (cf.~\cite{CRRS16,EFKMT16}), Cooper et al.~proved the same run
time in a much more general form by considering the class of regular expander
graphs, albeit assuming a slightly more restrictive initial bias. 

In order to overcome the $\BigOmega{k}$ lower bound in general, we need to
modify the \TC protocol. 

\paragraph{Breaking the $\mathbf{\Omega(k)}$ Barrier in the Synchronous Setting.}

To achieve a low run time, we combine the two choices process with a rumor
spreading algorithm. We first consider this approach in the synchronous setting
and denote the corresponding algorithm by \onebit. For this, we investigate a
slightly modified model called the \emph{memory model}, which is described in
full detail in \autoref{sect:memory}. In this model, we allow each node to
transmit one additional bit. As stated in \autoref{thm:memory}, this allows us
to reduce the run time from $\BigO{{n}/{c_1} \cdot \log{n}}$ to
$\BigO{\left(\log(c_1/\left(c_1-c_2\right))+\log\log{n}\right)\cdot\left(\log{k}
+ \log\log{n}\right)}=\BigO{\log^2 n}$, and the dominating color still wins
\whp, while the initial bias needs only to be slightly larger than in
\autoref{thm:main-result}. If we assume that a tight upper bound on ${n}/{c_1}$
is known to the nodes, the run time of \onebit can further be improved to
$\BigO{\left(\log\log{n}\right)\cdot\left(\log({{n}/{c_1}}) +
\log\log{n}\right)}$. The theorem is formally stated as follows.

\def\theoremmemory{
Consider the synchronous model. Let $G=K_{n}$ be the complete graph with $n$
nodes. Let $k=\BigO{n^\varepsilon}$ be the number of opinions for some small
constant $\varepsilon > 0$. Assume $c_1-c_2 \geq z \cdot \sqrt{n\log^3{n}}$ for
some constant~$z$, then the plurality consensus process \onebit defined in
\autoref{alg:memory} on $G$ converges within 
\begin{equation*}
\BigO{\left(\log(c_1/\left(c_1-c_2\right))+\log\log{n}\right)\cdot\left(\log{k} + \log\log{n}\right)}
\end{equation*}
rounds to \A, \whp.
}
\begin{theorem} \label{thm:memory}
\theoremmemory
\end{theorem}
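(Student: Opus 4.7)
The plan is to analyze the memory voting process on $K_n$ one phase at a time, decomposing each phase into the initial two-choices step and the subsequent gossip steps, and maintaining the invariant that $c_1 \geq c_2 \geq \cdots \geq c_k$ with $c_1 - c_2 \geq z\sqrt{n\log^3 n}$ at the start of every phase.

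For a single phase, I first apply the flow analysis of the two-choices step from the proof of \autoref{thm:main-result}. The number $B_i$ of nodes that set their bit to \ttrue\ and hold color $\mathcal{C}_i$ after this step is a sum of independent Bernoulli variables with expectation $(n-c_i)(c_i/n)^2$, and Chernoff bounds give $B_i = (n-c_i)(c_i/n)^2 \pm \BigO{\sqrt{n\log n}}$ \whp. For the subsequent gossip steps, I model the pull-based spreading of the \ttrue\ bit: in each step, each bit-\tfalse\ node becomes bit-\ttrue\ with color $\mathcal{C}_i$ with probability $B_i^{(t)}/n$, so the per-color counts evolve as $B_i^{(t+1)} = B_i^{(t)}\bigl(2 - B^{(t)}/n\bigr)$ in expectation, which preserves the ratios $B_i^{(t)}/B^{(t)}$ while $B^{(t)}$ itself doubles until it saturates at $n$. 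Since $B \geq B_1 = \BigOmega{n/k^{2}}$, the pull-gossip saturates after $\BigO{\log(n/B)+\log\log n}=\BigO{\log k+\log\log n}$ steps. Combining a step-by-step Chernoff argument with a union bound over all phases and gossip iterations, one obtains that the new color counts at the end of the phase satisfy $c_i^{\mathrm{new}} = n\cdot B_i/B \pm \BigO{\sqrt{n\log^3 n}}$ \whp.

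The core amplification identity that emerges is
\[
\frac{c_1^{\mathrm{new}}}{c_2^{\mathrm{new}}} \;=\; \frac{B_1}{B_2} \;=\; \Bigl(\frac{c_1}{c_2}\Bigr)^{\!2}\cdot\frac{n-c_1}{n-c_2}\;\pm\;\LittleO{1}.
\]
Using $c_1(n-c_1) \geq c_2(n-c_2)$ (which follows from $c_1 \geq c_2$ and $c_1+c_2 \leq n$), this ratio is always at least $c_1/c_2$, and a direct computation shows that for $k \geq 3$ the bias $c_1-c_2$ itself grows by a factor of at least $(2k-3)/(k-1) \geq 3/2$ per phase as long as $c_1$ is bounded by a small constant times $n/k$. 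Hence after $\BigO{\log(c_1/(c_1-c_2))}$ phases we leave this balanced regime and enter an endgame regime in which $c_1$ is a constant fraction of $n$; there each minority color $c_i$ with $i \geq 2$ satisfies $c_i^{\mathrm{new}} = \BigO{c_i^{2}/(n-c_1)}$, so $\log c_i$ is essentially halved per phase, and after $\BigO{\log\log n}$ additional phases the sum $\sum_{i \geq 2} c_i$ drops below $n/\log^{\BigO{1}} n$, at which point a single additional two-choices step deterministically drives every node to \A\ \whp.

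Multiplying the total number of phases by the number of time steps per phase yields the claimed bound. The main obstacle is maintaining concentration simultaneously across all phases: the initial bias $\sqrt{n\log^3 n}$ (larger than the two-choices threshold of \autoref{thm:main-result} by a factor of $\log n$) is exactly what is needed so that, after union-bounding over the $\BigO{\log n}$ phases and the $\BigO{\log n}$ gossip steps per phase, the signal $B_1-B_2$ still dominates the $\BigO{\sqrt{n\log n}}$ additive errors accumulated at each step. A secondary hurdle is the concentration of the gossip process itself: since the $B_i^{(t)}$'s evolve multiplicatively, a naive step-by-step Chernoff argument compounds the variance, so one needs either a martingale inequality or to exploit the negative association of the per-step binomial counts to control deviations over the $\BigO{\log k+\log\log n}$ gossip iterations.
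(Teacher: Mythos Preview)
Your decomposition---a two-choices step producing per-color bit counts $B_i$, followed by ratio-preserving pull gossip that saturates in $\BigO{\log k+\log\log n}$ steps, giving per-phase amplification of the bias in a balanced regime and ratio-squaring in an endgame---is exactly the paper's route (Lemmas~\ref{lem:number-of-bits}--\ref{lem:relative-colors} and the two-case split in the proof of \autoref{thm:memory}). One concrete slip: in \autoref{alg:memory} a node sets its bit whenever the two sampled colors agree, irrespective of whether its own color changes, so $\Expected{B_i}=c_i^2/n$ (cf.\ \autoref{lem:bits-concentrated}), not $(n-c_i)(c_i/n)^2$. With your formula the spurious factor $(n-c_1)/(n-c_2)<1$ enters the ratio identity, and for $k=2$ your claimed growth factor $(2k-3)/(k-1)$ collapses to $1$, so the balanced-regime argument stalls there; with the correct expectation the post-phase ratio is simply $(c_1/c_2)^2$ up to lower-order terms and the gap essentially doubles, which is what the paper establishes via the inequality $a'-b'\ge(a-b)\bigl(1-\varepsilon+b/a\bigr)$.
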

This can be further improved to
$\BigO{\left(\log(c_1/\left(c_1-c_2\right))+\log\log{n}\right)\cdot\log{k}}$ if
we change the algorithm slightly as described in \autoref{sect:memory}. Coming
from a different angle, essentially the same result was obtained independently
by Berenbrink et al.\ \cite{BFGK16} (see their first protocol) as an
intermediate step toward their main result, as well as by Ghaffari and Parter
\cite{GP16}. To obtain our main result, we will generalize this approach to the
asynchronous communication model.

Note that in the classical \TC protocol each node is implicitly assumed to have
local memory of a certain size, which is used, e.g., to store its current
opinion. The main difference between the classical model and the memory model
is that in the memory model each node also transmits an additional bit along
with its opinion when contacted by a neighbor. Also, the nodes need additional
\emph{local} memory to count the number of rounds. The protocol of
\autoref{thm:memory} ensures that the dominant color $\A$ wins within a small
(at most $\BigO{\log^2{n}}$) number of rounds, even if the bias is only
$\BigO{\sqrt{n\log^3 n }}$. The thorough analysis of this synchronous algorithm
is the basis for understanding and analyzing the corresponding asynchronous
protocol.

\paragraph{Our Main Contribution.}
Our main contribution is an adaptation of the algorithm \onebit to the
asynchronous setting. The main question is whether the same (or similar)
results as in the synchronous case can also be obtained in the asynchronous
setting. As discussed below in more detail, a straight-forward observation is
that in the sequential asynchronous model many nodes may remain
\emph{unselected} for up to $\BigO{\log n}$ time, which implies that no
algorithm can converge in $\LittleO{\log n}$ time. Thus, our aim is to
construct a protocol that solves plurality consensus in $\BigO{\log n}$ time.
We show that if the difference between the numbers of the largest two opinions
is at least $\BigOmega{c_2}$, where $c_2$ is the size of the second largest
opinion, and $k = n^{\BigO{1/\log\log{n}}}$, then our algorithm solves
plurality consensus and achieves the best possible run time of $\BigO{\log n}$,
provided a node is allowed to communicate with at most constantly many other
nodes in a step.

The key to the rapidity of \onebit is that we pair a phase in which all nodes
execute the \TC process with a phase in which successful opinions are
propagated quickly -- much like in broadcasting. For this to work it is crucial
to separate the two phases. While this is trivial in the synchronous setting,
it is impossible in the asynchronous setting. The number of activations of
different nodes can easily differ by $\BigTheta{\log n}$, rendering any attempt
of full synchronization futile if one aims for a run time of $\BigO{\log n}$.
Thus, we restrict ourselves to the concept of \emph{ weak synchronicity } as
follows. At any time we only require that a $(1-\LittleO{1})$-fraction of nodes
are \emph{almost} synchronous. To cope with the influence of the remaining
nodes, we rely on a toolkit of gadgets, which we believe are interesting in
their own right. The obtained weak synchronicity allows us to reuse the
high-level structure of the proof and the analysis of \onebit. Our result is
formally stated in the following theorem.

\def\theoremasync{
Consider the asynchronous model. Let $G=K_{n}$ be the complete graph with $n$
nodes. Let $k= \BigO{\exp\left(\log{n}/\log\log{n}\right)}$ be the number of
opinions. Let $\teps{bias} > 0$ be a constant. Assume $c_1 \geq
\left(1+\teps{bias}\right)\cdot c_i$ for all $i \geq 2$, then the asynchronous
plurality consensus process defined in \autoref{sect:asynchronous-analysis} on
$G$ converges within time $\BigTheta{\log{n}}$ to the majority opinion \A,
\whp.
}
\begin{theorem} \label{thm:async}
\theoremasync
\end{theorem}

\subsection{Related Work}
This overview concentrates on results concerned with Pull Voting, Plurality
Consensus, and Population protocols.

\paragraph{Protocols Based on Pull Voting.}

One major line of research on plurality consensus has its roots in gossiping
and rumor spreading. Communication in these models is often restricted to pull
requests, where nodes can query other nodes' opinions and use a simple rule to
update their own opinion.

In the remainder of this paper we refer to the opinion with initially largest
(second-largest, etc.) support as the largest (second-largest, etc.) opinion.
This does not in any way refer to a possible numerical value that may be
associated with an opinion. One straightforward variant is the so-called
\emph{pull voting} running in discrete rounds, during which each player
contacts a node chosen uniformly at random from among its neighbors and adopts
the opinion of that neighbor. The two papers by Hassin and Peleg \cite{HP01}
and Nakata et al.\ \cite{NIY99} have considered the discrete time two-opinion
voter model on connected graphs. In these papers, each node is initially
assigned one of two possible opinions. Their main result is that the
probability for one opinion to win is proportional to the sum of the degrees of
all vertices supporting that opinion. It has furthermore been shown by Hassin
and Peleg \cite{HP01} that the expected time for the two-opinion voting process
to converge on general graphs can only be bounded by $\BigO{n^3\log{n}}$.
Tighter bounds for general graphs were obtained by \cite{CEOR13,BGKM16,KMS16}.

The expected convergence time for pull voting is at least $\BigOmega{n}$ on
many graphs, such as regular expanders and complete graphs. Taking into account
that solutions to many other fundamental problems in distributed computing,
such as information dissemination \cite{KSSV00} or aggregate computation
\cite{KDG03}, are known to run much more efficiently, Cooper et al.\ noted that
there is room for improvement. To address this issue, Cooper et al.\
\cite{CER14} introduced the \TC voting process. In this modified process, one
is given a graph $G=(V, E)$ where each node has one of two possible opinions.
The process runs in discrete rounds during which, unlike in the classical pull
voting, every node is allowed to contact two neighbors chosen uniformly at
random. If both neighbors have the same opinion, then this opinion is adopted,
otherwise the calling vertex retains its current opinion in this round.

They show that in random $d$-regular graphs, with high probability all nodes
agree after $\BigO{\log n}$ steps on the initially most frequent opinion,
provided that $c_1-c_2=K\cdot(n\sqrt{ 1/d + d/n})$ for $K$ large enough, where
$c_1$ and $c_2$ denote the support of the initially most frequent and
second-most frequent colors. For an arbitrary $d$-regular graph $G$, they need
$c_1-c_2=K\cdot\lambda_2\cdot n$. In the more recent work by Cooper et al.\
\cite{CERRS15}, the results from \cite{CER14} have been extended to general
expander graphs, cutting out the restrictions on the node degrees but
nevertheless proving that the convergence time for the voting procedure remains
in $\BigO{\log{n}}$. Recently, the authors of \cite{CRRS16} showed the
following bound on the consensus time in regular expanders. If the initial bias
between the largest and second-largest opinion is at least $c_1 - c_2 \geq Cn
\max\{\sqrt{\log n/c_1}, \lambda_2 \}$, where $\lambda$ is the absolute second
eigenvalue of the matrix $P = Adj(G)/d$ and $C$ is a suitable constant, then
the largest opinion wins in $\BigO{(n \log n)/c_1}$ steps, with high
probability.

One extension is five-sample voting in $d$-regular graphs with $d \geq 5$,
where in each round at least five distinct neighbors are consulted. Abdullah
and Draief showed an $\BigO{\log_d\log_d{n}}$ bound \cite{AD15}, which is tight
for a wider class of voting protocols. A more general analysis of multi-sample
voting has been conducted by Cruise and Ganesh \cite{CG14} on the complete
graph.

\paragraph{Protocols for Plurality Consensus.}
Becchetti et al.\ \cite{BCNPST14} consider a similar update rule on the clique
for $k$ opinions. Here, each node pulls the opinion of three random neighbors
and adopts the majority opinion among those three (breaking ties uniformly at
random). They need $\BigO{\log k}$ memory bits and prove a tight run time of
$\BigTheta{k\cdot\log n}$ for this protocol, given a sufficiently large bias
$c_1-c_2$. Moreover, they show that if the bias is only of order $\sqrt{kn}$,
then with constant probability the difference $c_1-c_2$ decreases. As we show
in this paper, the \TC process behaves differently since the difference
required by the two choices process is only $\BigOmega{\sqrt{n \log n}}$. The
reason for this phenomenon is that the variance of the number of nodes
switching per round differs greatly between these two processes. In the regime
where all opinions are roughly of the same size, the probability of switching
in the \TC process is $\LittleO{1}$, whereas it is $1-\LittleO{1}$ in the
$3$-majority process. More details can be found in \autoref{sec:3majority}.

In another recent paper, Becchetti et al.\ \cite{BCNPS15} build upon the idea
of the 3-state population protocol by Angluin et al.\ \cite{AAE08}. Using a
slightly different time and communication model, they generalize the protocol
to $k$ opinions. In their model, nodes act in parallel and in each round pull
the opinion of a random neighbor. If it holds for the largest color that
$c_1\geq(1+\varepsilon)\cdot c_2$ for a constant $\varepsilon>0$, the number of
colors is bounded by $k=\BigO{(n/\log n)^{\ifrac{1}{3}}}$, and assuming the
availability of $\log k+\BigO{1}$ bits of memory, their protocol agrees with
high probability on the plurality opinion in time
$\BigO{\operatorname{md}(\mathbf{c})\cdot\log n}$ in the clique. Here,
$\operatorname{md}(\mathbf{c})$ is the so-called \emph{monochromatic distance}
that depends on the initial opinion distribution $\mathbf{c}$. In contrast to
all the results above for $k>2$ opinions, we only require a bias of size
$\BigO{\sqrt{n\log n}}$.

Also interested in balancing the requirement for additional memory with
convergence time, in \cite{BFGK16} the authors propose two plurality consensus
protocols. Both assume a complete graph and realize communication via the
random phone call model. The first protocol is very simple and, \whp, achieves
plurality consensus within $\BigO{\log(k)\cdot\log\log_\gamma n + \log\log n}$
rounds using $\BigTheta{\log\log k}$ bits of additional memory. The second,
more sophisticated protocol achieves plurality consensus within
$\BigO{\log(n)\cdot\log\log_\gamma n}$ rounds using only $4$ overhead bits. In
both cases, $k$ denotes the number of colors, and $\gamma$ denotes the initial
relative plurality gap, the ratio between the plurality opinion and the
second-largest opinion. They require an initial absolute gap of
$\LittleOmega{\sqrt{n}\log^2 n}$. At the heart of their protocols lies the use
of the \emph{undecided state}, originally introduced by Angluin et al.\
\cite{AAE08}. A very recent result by Ghaffari and Parter \cite{GP16}
introduces a protocol for plurality consensus with time and memory bounds
similar to our bounds for \autoref{alg:memory}. They employ a similar basic
idea of consolidation and bit-propagation rounds, which they refer to as
selection and recovery. While aspects of \cite{GP16} and the first protocol in
\cite{BFGK16} are similar to our own protocol (in terms of expectation but not
distribution), they were all developed independently and initially approached
the problem with different specific objectives.

Another interesting model allows for adversarial corruption of opinions. Doerr
et al.\ \cite{DGMSS11} investigate the so-called $3$-median rule which allows
an adversary to arbitrarily change the opinion of $F=\sqrt{n}$ arbitrary
nodes. The required time to reach near-consensus is $\BigO{\log k \log\log n +
\log n}$, where $k$ is the size of the set of opinions. Their algorithm assumes
a total ordering on the opinions and requires nodes to be able to perform basic
algebraic operations. In a recent paper, Becchetti et al.\ \cite{BCNPT15}
overcome these assumptions and show that the $3$-majority rule is stable
against an $F=\LittleO{\sqrt{n}}$ dynamic-adversary. It is worth noting that
both \cite{BCNPT15, DGMSS11} are only interested in consensus and not
necessarily plurality, which would mean that the initially dominant color wins
\whp if the initial bias is large enough.

\paragraph{Population Protocols}
The second major line of work on majority voting considers \emph{population
protocols}, in which the nodes usually act asynchronously. In its basic
variant, nodes are modeled as finite state machines with a small state space.
Communication partners are chosen either adversarially or randomly,
see~\cite{AAER07, AR07} for a more detailed description. Angluin et al.\
\cite{AAE08} propose a 3-state (that is, constant memory) population protocol
for majority voting with $k=2$ in the clique to model the mixing behavior of
molecules. We refer to their communication model as the \emph{sequential
model}. In each time step, an edge is chosen uniformly at random, such that
only one pair of nodes communicates. They show that consensus is reached after
$\BigO{n \log n}$ time steps where the largest opinion has an initial size of
at least $n/2+\omega{\sqrt n\log n}$. To allow for an easier comparison with
the synchronous model, we will normalize the run time of all sequential
algorithms and continuous processes throughout this paper by dividing their run
time by $n$ \cite{AGV15}. To make this explicit, we sometimes refer to this as
\emph{parallel time}. This is a typical measure for population protocols and
based on the intuition that, in expectation, each node communicates with one
neighbor within $n$ time steps. In a recent paper, Alistarh et al.\
\cite{AGV15} gave a sophisticated sequential protocol for $k=2$ in the clique.
It solves exact majority and has, \whp, parallel run time $\BigO{ \log^2 n /
\left(s\cdot\left(c_1-c_2\right)\right)+\log^2 n\cdot\log s}$, where $s$ is the
number of states with $s$ asymptotically in $[\log n\cdot\log\log n, n]$.

\section{Plurality Consensus with Two Choices} \label{sect:main-result}

\begin{algorithm}[t]
\SetAlgoVlined
\SetKwFor{AtNode}{at each node}{do in parallel}{end}
\SetKw{KwLet}{let}

\SetKwProg{Algorithm}{Algorithm}{}{end}
\SetKwFunction{TwoChoices}{two-choices}
\SetKwFunction{Color}{color}
\SetKwFunction{Bit}{bit}

\SetKw{KwLet}{let}
\Algorithm{\TwoChoices{$G = (V, E)$, $ \Color : V \rightarrow C$}}{
\For{round $t = 1$ \KwTo $|C|\cdot\log{|V|}$}{
\AtNode{$v$}{
\KwLet $u_1, u_2 \in N(v) $ uniformly at random\;
\If{\Color{$u_1$} $=$ \Color{$u_2$}}{
    \Color{$v$} $ \gets $ \Color{$u_1$}\;
}
}
}
}
\caption{Distributed Voting Protocol with Two Choices}
\label{alg:two-choices}
\end{algorithm}

In \autoref{sec:TCupper} we show the upper bound \autoref{thm:main-result} on
the \TC process. We show that if the initial bias is $\BigOmega{n \log n}$,
then the initially most dominant color wins \whp in $\BigO{ k\cdot \log n}$
rounds.

In \autoref{sect:lowerbound} we show two lower bounds: We show that if the
initial bias is of order $\BigO{\sqrt{n}}$, then with constant probability a
color different than \A will win (\autoref{thm:lowerbound-bias}). Furthermore,
we show that there are configurations from which we require $\BigOmega{k+\log
n}$ rounds until any opinion wins (\autoref{thm:lowerbound-runtime}).

\subsection{Upper bound}\label{sec:TCupper}  

In this section we show our first  theorem stated in \autoref{thm:main-result}.
The algorithm discussed in this section is formally defined in \autoref{alg:two-choices}.
The structure of the proofs is as follows. 
We show using Chernoff bounds that the number of nodes which change their opinion to $\A$ is 
larger than the number of nodes which switch to $\B$. Given that the initial bias is large enough, the relative difference between $\A$ and $\B$ increases rapidly in every round \whp, and using a union bound yields the theorem.
The difficult part lies in bounding the number of switches to $\A$ and to $\B$. Indeed, just
applying a Chernoff bound to every single color appears to lead to much weaker results. Instead, we carefully aggregate colors when considering the nodes switching to $\A$ or $\B$.
Intuitively, the difficulty lies in the sheer number of initial opinions we allow. In contrast to what is permitted in most previous work, their total mass may significantly exceed the initial mass of \A.

Let
$f_{ij}$ denote the random variable denoting the \emph{flow} from color
$\mathcal{C}_i$ to color $\mathcal{C}_j$, that is, $f_{ij}$ at a given time
step $t$ represents the number of nodes which had color $\mathcal{C}_i$ at the
previous time step $t-1$ and switched to color $\mathcal{C}_j$ at time $t$. 
We
will use $c_1', \dots, c_k'$ to denote the number of nodes of corresponding
colors after the switching has been performed before the adversary changes $F$ arbitrary nodes.

For simplicity of notation, we will assume that in the following the dominating
color \C1 is denoted as \A with $a = c_1$. Furthermore, we will use \B to denote
the second largest color \C{2} of size $b = c_2$. Also, we
will use $f_{\A\B}$ and $f_{\B\A}$ to denote $f_{1,2}$ and $f_{2,1}$, respectively.

Observe that in the complete graph the number $f_{ij}$ of nodes switching from
\C{i} to \C{j} has a binomial distribution with parameters $f_{ij} \sim
B(c_i,\,\ifrac{c_j^2}{n^2})$. Clearly, the expectation and variance of
$f_{ij}$ are
\begin{align*}
\Expected{f_{ij}} = \frac{c_i\cdot c_j^2}{n^2} && \text{and} &&
\Variance{f_{ij}} = \frac{c_i\cdot c_j^2\left(n-c_j\right)\left(n+c_j\right)}{n^4} \enspace .
\end{align*}

Observe that if $a \geq \left(\ifrac{1}{2} + \eps{1}\right)n$ for some constant
$\eps{1} > 0$, the process converges within $\BigO{\log{n}}$ steps \whp.
This follows from \cite{CER14} since in
the case of $a \geq \left(\ifrac{1}{2} + \eps{1}\right)n$ the process is
stochastically dominated by the two color voting process. For the sake of readability we
assume in the following that $a \leq \ifrac{n}{2}$. Furthermore, observe that
$a>\ifrac{n}{k}$, since \A is the largest of $k$ color classes. We start with the
following definitions.

Let $S \subseteq C$ be a set of colors. We will use the random variable
$f_{iS}$ to denote the sum of all flows from color \C{i} to any color in $S$
and $f_{Si}$ to denote the sum of all flows from any color in $S$ to \C{i}. We
have in expectation
\begin{align*}
\Expected{f_{Si}} &= \sum_{\C{j} \in S}\frac{c_j \cdot c_i^2}{n^2}  & \text{and} && \Expected{f_{iS}} &= \sum_{\C{j} \in S}\frac{c_i \cdot c_j^2}{n^2} \enspace .
\end{align*}
Let \C{i} be a color and \notC{i} be the set of all other colors, defined as
$\notC{i} = C \setminus \C{i}$. We observe that after one round the new number
of nodes supporting \C{i} is a random variable
\begin{equation*}
c_i' = c_i + \sum_{j \neq i} f_{ji} - \sum_{j \neq i} f_{ij} = c_i + f_{\notC{i}i} - f_{i\notC{i}} \enspace.
\end{equation*}

Since all nodes perform their choices independently, the first sum $f_{\notC{i}i}$ has a
binomial distribution with parameters $f_{\notC{i}i} \sim
B(n-c_i,\,\ifrac{c_i^2}{n^2})$. Furthermore, every node of color \C{i} changes
its color away from \C{i} to any other opinion with probability
$p^{\text{away}}_i = \sum_{j\neq i}\ifrac{c_j^2}{n^2}$. Therefore, the second
sum $f_{i\notC{i}}$ also has a binomial distribution with parameters
$f_{i\notC{i}} \sim B(c_i,\,p^{\text{away}}_i)$. That is, we have in
expectation
\begin{equation} \label{eq:expected-ci}
\Expected{c_i'} =c_i + \frac{\left(n-c_i\right)c_i^2}{n^2} - \frac{c_i}{n^2}\sum_{j\neq i}c_j^2 \enspace .
\end{equation}
Note that these expected values are monotone w.r.t.\ the current size. This is
described more formally in the following observation.
\begin{observation} \label{obs:monotone}
Let $\C{r}$ and $\C{s}$ be two colors. It holds that if $c_r \leq c_s $ then $ \Expected{c_r'} \leq \Expected{c_s'}$.
\end{observation}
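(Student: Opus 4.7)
The plan is to rewrite the expression in equation~\eqref{eq:expected-ci} so that $\Expected{c_i'}$ appears as a function of $c_i$ alone, with all other dependencies gathered into a single term that does not distinguish between colors. Splitting the inner sum via $\sum_{j\neq i}c_j^2=\sum_j c_j^2-c_i^2$ and simplifying, one obtains
\begin{equation*}
\Expected{c_i'} \;=\; c_i\left(1+\frac{c_i}{n}-\frac{1}{n^2}\sum_{j=1}^{k}c_j^2\right).
\end{equation*}
Crucially, the quantity $S := \tfrac{1}{n^2}\sum_j c_j^2$ depends only on the full color configuration at time $t$, not on which color $i$ we are considering.

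Next, I would introduce the auxiliary function $g(x) := x(1-S)+x^2/n$ on $x\in[0,n]$, so that $\Expected{c_r'}=g(c_r)$ and $\Expected{c_s'}=g(c_s)$. Since $\sum_j c_j=n$ and every $c_j\ge 0$, we get $\sum_j c_j^2 \le (\sum_j c_j)^2 = n^2$, hence $S\le 1$. Therefore $g'(x)=(1-S)+2x/n \ge 0$ for all $x\ge 0$, so $g$ is non-decreasing on the relevant range. Applying this to $c_r\le c_s$ yields $\Expected{c_r'}=g(c_r)\le g(c_s)=\Expected{c_s'}$, which is exactly the claim.

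There is essentially no obstacle here: the whole observation boils down to noticing that the dependence of $\Expected{c_i'}$ on the rest of the configuration factors out into the single quantity $S$, after which monotonicity is just a one-variable calculus check together with the trivial bound $S\le 1$. The only point worth stating carefully in the write-up is that $S$ is identical in the two expressions being compared, so the inequality indeed reduces to comparing $g$ at two arguments.
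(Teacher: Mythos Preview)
Your proposal is correct and follows essentially the same approach as the paper: both rewrite $\Expected{c_i'}$ as $c_i\bigl(1+c_i/n-S\bigr)$ with $S=\tfrac{1}{n^2}\sum_j c_j^2$ independent of $i$, and then conclude monotonicity in $c_i$. The only difference is cosmetic: the paper asserts the inequality $c_r(1+c_r/n-S)\le c_s(1+c_s/n-S)$ directly, whereas you make the underlying reason explicit by checking $S\le 1$ and computing $g'(x)\ge 0$.
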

\begin{proof}
We first rewrite \eqref{eq:expected-ci} as
\begin{equation*}
\Expected{c_i'} =c_i + \frac{c_i^2}{n} - \frac{c_i}{n^2}\sum_{\C{j}}c_j^2 = c_i \left(1 + \frac{c_i}{n} - \sum_{\C{j}}\frac{c_j^2}{n^2} \right)\enspace .
\end{equation*}
Using this representation of $\Expected{c_i'}$ gives us
\begin{equation*}
\Expected{c_r'} =  c_r \left(1 + \frac{c_r}{n} - \sum_{\C{j}}\frac{c_j^2}{n^2} \right) \stackrel{(c_r \leq c_s)}{\leq} c_s \left(1 + \frac{c_s}{n} - \sum_{\C{j}}\frac{c_j^2}{n^2}\right) = \Expected{c_s'} \enspace . \qedhere
\end{equation*}
\end{proof}

For the following lemma, recall that $\A = \C1$ denotes the dominant color of
size $a=c_1$ and $\B = \C2$ denotes the second largest color of size $b = c_2$.

\begin{lemma} \label{lem:distance-increases}
Let \A be the dominating color and \B be the second largest color. Assume that
$a - b > z \cdot \sqrt{n\log{n}}$. There exists a constant $z$ such that
$a'-b'>\left(a-b\right)\left(1+\ifrac{a}{4n}\right)$ \whp.
\end{lemma}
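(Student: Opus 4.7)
The plan is to show that in expectation $a' - b' \geq (a-b)(1 + a/(2n))$ and then apply concentration to retain at least half of this multiplicative gain, giving the weaker factor $(1 + a/(4n))$ \whp.

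For the expectation, I would use the representation $\Expected{c_i'} = c_i(1 + c_i/n - S/n^2)$ derived in the proof of \autoref{obs:monotone}, where $S := \sum_j c_j^2$. Subtraction yields
\[
\Expected{a' - b'} = (a-b)\left[1 + \frac{a+b}{n} - \frac{S}{n^2}\right].
\]
Since $c_j \leq b$ for every $j \geq 2$, we have $S \leq a^2 + b(n-a)$, and substituting simplifies $(a+b)/n - S/n^2 \geq (a/n)(1 - (a-b)/n)$. The standing assumption $a \leq n/2$ implies $(a-b)/n \leq 1/2$, so the last factor is at least $1/2$, giving $\Expected{a'-b'} \geq (a-b)(1 + a/(2n))$.

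For concentration, I would write $a' - b' = \sum_{v \in V} Y_v$ with $Y_v := [c'(v) = 1] - [c'(v) = 2] \in \{-1,0,1\}$; these are independent since each node samples independently. For a node whose current color is outside $\{1, 2\}$, $\Variance{Y_v} \leq \Expected{Y_v^2} \leq (a^2 + b^2)/n^2$. For a node of color $1$, the probability $p$ of changing color is $p = (S-a^2)/n^2 \leq a/n$, and since $\Expected{Y_v} = 1 - p - b^2/n^2$ lies close to one, a direct computation shows $\Variance{Y_v} = p(1-p) + \BigO{b^2/n^2} = \BigO{a/n}$; the analogous bound holds for nodes of color $2$. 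Summing over all nodes yields $\Variance{a' - b'} = \BigO{a^2/n}$. Applying Bernstein's inequality (using $|Y_v| \leq 1$) then gives
\[
\Probability{\Expected{a'-b'} - (a' - b') \geq (a-b) \cdot \tfrac{a}{4n}} \leq \exp\left(-\BigOmega{(a-b)^2/n}\right) = n^{-\BigOmega{1}},
\]
provided $z$ is a sufficiently large constant. Combining this with the expectation bound yields $a' - b' \geq (a-b)(1 + a/(4n))$ \whp.

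The main obstacle I anticipate is the variance bound for nodes currently of color $1$ or $2$: the naive estimate $\Variance{Y_v} \leq 1$ gives only $\Variance{a'-b'} \leq n$, which is insufficient (the required slack $(a-b)\cdot a/(4n)$ would need to dominate $\sqrt{n \log n}$, and this fails once $a \ll n$). The key is to exploit that such nodes retain their current color with probability very close to one, so each contributes only $\BigO{a/n}$, rather than $\BigO{1}$, to the total variance.
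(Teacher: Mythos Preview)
Your expectation computation is essentially the paper's: both arrive at $\Expected{a'-b'} = (a-b)\bigl(1 + (a+b)/n - S/n^2\bigr)$ with $S=\sum_j c_j^2$, bound $S\le a^2 + b(n-a)$ (the paper uses the slightly weaker $a^2+nb$), and use $a\le n/2$ to get the factor $1+a/(2n)$.

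The concentration step is where you diverge. The paper does not treat $a'-b'$ as a single sum; instead it decomposes it into the four binomial flows $f_{\overline\A\A}$, $f_{\A\overline\A}$, $f_{\overline\B\B}$, $f_{\B\overline\B}$, applies a multiplicative Chernoff bound to each with explicit $\delta_i$'s, and collects the four deviations into an error term $E\le 8a\sqrt{n\log n}/n$, which is absorbed once $z\ge 32$. Your route---writing $a'-b'=\sum_v Y_v$ and applying Bernstein with the variance bound $\Variance{a'-b'}=\BigO{a^2/n}$---is a genuinely different and somewhat cleaner packaging: one concentration inequality instead of four, and no need to name the individual flows. Your variance bound is correct (for $v\in\A$, the switch-away probability $p=(S-a^2)/n^2\le a(n-a)/n^2\le a/n$, so $\Variance{Y_v}=p(1-p)+\BigO{b^2/n^2}=\BigO{a/n}$; similarly for $v\in\B$; the $n-a-b$ remaining nodes contribute $\le 2a^2/n^2$ each), and Bernstein then gives failure probability $\exp\bigl(-\BigOmega{(a-b)^2/n}\bigr)$, which is $n^{-\BigOmega{1}}$ under the hypothesis. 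The paper's flow decomposition has the advantage of staying entirely within standard Chernoff for binomials and producing an explicit constant; your approach is more compact but requires the variance observation you flagged as the main obstacle, which you resolved correctly.
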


In the following proof we utilize certain methods which have also been used in
\cite{CER14} for the two-opinion plurality consensus process with two choices
in more general graphs.

\begin{proof}
First we observe that
{
\allowdisplaybreaks
\begin{align*}
\Expected{a'-b'} &= a+\Expected{f_{\overline{\A}\A}} - \Expected{f_{\A\overline{\A}}} - b - \Expected{f_{\overline{\B}\B}} + \Expected{f_{\B\overline{\B}}}\\
&= a + \left(n-a\right)\cdot\frac{a^2}{n^2} - \frac{a}{n^2}\sum_{\C{i} \neq \A} c_i^2 -b - \left(n-b\right)\cdot\frac{b^2}{n^2} + \frac{b}{n^2}\sum_{\C{i} \neq \B} c_i^2 \\
&= a-b + \frac{1}{n^2}\left(a^2n-a^3-b^2n+b^3-a\sum_{\C{i} \neq \A} c_i^2 + b\sum_{\C{i} \neq \B} c_i^2 \right)\\
&= a-b + \frac{1}{n^2}\left(n\left(a^2-b^2\right)-a\left(a^2+\sum_{\C{i} \neq \A}c_i^2\right)+b\left(b^2+\sum_{\C{i} \neq \B}c_i^2\right)\right) \\
&= a-b + \frac{1}{n}\left(a^2-b^2\right) - \frac{1}{n^2}\left(a\sum_{\C{i}}c_i^2 - b\sum_{\C{i}}c_i^2\right) \\
&= a-b + \frac{\left(a-b\right)\left(a+b\right)}{n} - \frac{1}{n^2} \sum_{\C{i}}c_i^2 \left(a -  b\right) \\
&= \left(a-b\right)\cdot\left(1 + \frac{\left(a+b\right)}{n} - \frac{1}{n^2}\sum_{\C{i}}c_i^2 \right) \enspace .\\
\intertext{We now use that \A and \B are the largest and second largest colors, respectively, to bound the sum $\sum_{\C{i}}c_i^2$ as follows.
\begin{equation*}
\sum_{\Ci}c_i^2 = a^2 + \sum_{\Ci\neq\A}c_i^2 \leq a^2 + \sum_{\Ci\neq\A}c_i\cdot b = a^2 + \left(n-a\right)\cdot b \leq a^2 + n \cdot b
\end{equation*}
Therefore, we obtain}
\Expected{a'-b'} & \geq \left(a-b\right)\left(1 + \frac{ \left(a+b\right) }{n} - \frac{a^2+n\cdot b}{n^2} \right) \\
&\geq \left(a-b\right)\left(1+\frac{a}{n}\cdot\left(1-\frac{a}{n}\right)\right) \\
\intertext{and since $a \leq \ifrac{n}{2}$ we finally get}
\Expected{a'-b'} & \geq \left(a-b\right)\left(1+\frac{a}{2n}\right) \enspace .
\end{align*}
}

\noindent We now apply Chernoff bounds to $a'-b'$. Let $\delta_1$, $\delta_2$,
$\delta_3$, $\delta_4$ be defined as
\begin{align*}
\delta_1 &= \frac{2\sqrt{n\log{n}}}{a} \enspace , &
\delta_2 &= \frac{2n\sqrt{\log{n}}}{\sqrt{a\sum_{\C{i}\neq\A}c_i^2}} \enspace , &
\delta_3 &= \frac{2\sqrt{n\log{n}}}{b} \enspace , &
\delta_4 &= \frac{2n\sqrt{\log{n}}}{\sqrt{b\sum_{\C{i}\neq\B}c_i^2}}
\intertext{for the corresponding random variables $f_{\overline{\A}\A}$, $f_{\A\overline{\A}}$, $f_{\overline{\B}\B}$, $f_{\B\overline{\B}}$ with expected values $\mu_1, \mu_2, \mu_3, \mu_4$ given by}
\mu_1 &= \left(n-a\right)\frac{a^2}{n^2} \enspace , &
\mu_2 &= \frac{a}{n^2}\sum_{\C{i} \neq \A}c_i^2 \enspace , &
\mu_3 &= \left(n-b\right)\frac{b^2}{n^2} \enspace , &
\mu_4 &= \frac{b}{n^2}\sum_{\C{i} \neq \B}c_i^2 \enspace .
\end{align*}
Since $a \leq \ifrac{n}{2}$ we know for the second largest color \B that $b
\geq \ifrac{n}{2k}$. Together with $a\geq \ifrac{n}{k} \geq n^{1-\varepsilon}$ we get $0 < \delta_i < 1$ and
$\delta_i^2\cdot\mu_i=\BigOmega{\log{n}}$ for $i = 1,2,3,4$. We now apply
Chernoff bounds to $a'-b'$ and obtain \whp
\begin{equation*}
a'-b' \geq \left(a-b\right)\cdot\left(1+\frac{a}{2n}\right) - E
\end{equation*}
where the error term $E$ is bounded as follows.
\begin{align*}
E & = \delta_1\cdot\mu_1 + \delta_2\cdot\mu_2 +\delta_3\cdot\mu_3 +\delta_4\cdot\mu_4 \\
& = \frac{2\sqrt{n\log{n}}}{n^2}\left(an-a^2 + \sqrt{an\sum_{\C{i} \neq \A}c_i^2} +bn-b^2 +\sqrt{bn\sum_{\C{i} \neq \B}c_i^2} \right)\\
& \leq \frac{2\sqrt{n\log{n}}}{n^2}\left(\sqrt{n\sum_{\C{i}}c_i^2}\left(\sqrt{a} + \sqrt{b}\right) +an +bn \right)\\
& \leq \frac{2\sqrt{n\log{n}}}{n^2}\left(2an +an +bn \right)\\
& \leq \frac{8 a\sqrt{n\log{n}}}{n} \enspace ,
\end{align*}
where we used that $\sum_{\C{i}}c_i^2\leq \sum_{\C{i}}a\cdot c_i\leq an$.
From the conditions in the statement of the lemma we know that $\left(a-b\right) \geq z \cdot
\sqrt{n\log{n}}$ for some constant $z$. If we assume that $z$ is large enough,
e.g., $z \geq 32$, then we get \whp
\begin{equation*}
a'-b' \geq \left(a-b\right)\cdot\left(1+\frac{a}{4n}\right) \enspace . \qedhere
\end{equation*}
\end{proof}

While \autoref{lem:distance-increases} shows that in the absence of an adversary the difference
between colors \A and \B does indeed increase in every round \whp, it does not cover
the remaining colors \C{j} for $j \geq 3$, nor does it address the presence of an adversary. To show that the smaller colors
\C{j} do also not interfere with \A and thus the minimum of the difference between \A
and any \C{j} increases, we use the following coupling. 

At any time step $t$, there exists a bijective function which maps any instance
of the two-choices protocol at time $t$ to another instance of the same
protocol such that the outcome $c'$ of the first instance is at most the
outcome $b'$ of the mapped instance.

\begin{lemma} \label{lem:coupling}
Let \A be the dominating color of size $a$ and let \B be the second largest
color of size $b$. Let $\C{} \neq \A,\B$ be one of the remaining colors of size
$c$.
Furthermore, let $\pi : V \rightarrow V$ be a bijection and let $P$
be the original process. We can couple a process  $P' = P(\pi)$ to the original
process $P$ such that $\P{c'}\leq \PP{b'}$, where $\P{c'}$ is the random variable
$c'$ in the original process and $\PP{b'}$ is the random variable $b'$ in
the coupled process.
\end{lemma}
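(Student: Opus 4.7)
The plan is to define $\pi$ and $P'$ by renaming vertices: take the $c$ vertices of color $\C{}$ in $P$ together with a carefully chosen subset of $b-c$ vertices of color $\B$ in $P$, and have $\pi$ map them bijectively onto the $b$ vertices of color $\B$ of $P$. In $P'=P(\pi)$ the initial coloring is obtained by pulling back via $\pi$, so the vertices that had color $\C{}$ in $P$ (together with the $b-c$ extra $\B$-vertices) are precisely the vertices that carry color $\B$ in $P'$. The coupling is then the obvious one: every vertex $v$ samples the same pair of neighbors $u_1(v),u_2(v)$ in both processes.

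Under this joint construction, I would argue path-by-path. For each vertex $v$ and each draw of $(u_1(v),u_2(v))$, check what $v$ does in $P$ versus in $P'$. A vertex becomes $\C{}$ in $P$ at time $t+1$ precisely when its two sampled neighbors have color $\C{}$ in $P$; by the definition of $\pi$, those same two neighbors have color $\B$ in $P'$, so $v$ becomes $\B$ in $P'$ as well. Conversely, a vertex that exits $\B$ in $P'$ does so only when both its sampled neighbors agree on some non-$\B$ color in $P'$, and by the bijective correspondence one checks that the associated vertex in $P$ is either already outside $\C{}$ or also leaves $\C{}$. The extra $b-c$ vertices of $\B$ in $P'$ (the ones that have no $\C{}$-counterpart in $P$) can only contribute positively to $b'$. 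Putting all cases together gives $\P{c'~}\le \PP{b'~}$ pathwise under the coupling, which in particular yields the stochastic dominance. A consistency check is that the marginal distribution of $P'$ is exactly the voting process started from a configuration with $a,b,c_3,\dots$ replaced by the (size-preserving) permuted version, so $P'$ is indeed an instance of the same two-choices protocol.

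The main obstacle is the bookkeeping in the case analysis: one must verify that the coupling correctly matches outflows as well as inflows. Inflows into $\C{}$ in $P$ translate cleanly into inflows into $\B$ in $P'$ because the two-neighbor pattern is monotone in the indicator of the target color, and by construction the target color is supported on a larger set in $P'$. Outflows are more delicate, because a vertex may leave $\C{}$ in $P$ but, in $P'$, its initial color under $\pi$ might be different, so the very notion of ``leaving'' differs. The key point that makes the argument go through is that in the complete graph the dynamics depend only on the multiset of colors seen at the two sampled positions, together with $\pi$'s preservation of the color-class structure away from $\{\B,\C{}\}$, so that each outflow event from $\B$ in $P'$ can be injectively matched to an outflow event from $\C{}$ in $P$. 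Once this matching is spelled out, the inequality $\P{c'~}\le \PP{b'~}$ follows, giving the stochastic comparison that, combined with \autoref{obs:monotone} and \autoref{lem:distance-increases}, will let us control every color $\C{}\ne \A,\B$ uniformly.
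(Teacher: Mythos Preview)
Your approach is correct and in fact cleaner than the paper's, but it is genuinely different in how the coupling is realized. You permute the \emph{coloring} and keep the random samples identical in $P$ and $P'$: with $\pi$ swapping $\C_t$ with a $c$-element subset $R\subseteq\B_t$ (identity elsewhere), the $\B$-class in $P'$ is $\C_t\cup(\B_t\setminus R)$, and one checks vertex by vertex that $\{v:\text{$v$ has color }\C\text{ after }P\}\subseteq\{v:\text{$v$ has color }\B\text{ after }P'\}$, giving the pathwise inequality. The paper instead keeps the coloring fixed and permutes the \emph{sampling}: when $u$ draws $v$ in $P$ it draws $\pi(v)$ in $P'$. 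Because the paper compares $b'^{(P')}$ for the \emph{original} set $\B_t$ (not the relabeled one), it needs the extra swap $\C_t^*\leftrightarrow\B_t^*$ and the bookkeeping with the sets $X,\hat X,X^*,Y,\hat Y,Y^*$ to control what happens on $\B_t^*$. Your route avoids the auxiliary set $\C_t^*$ entirely and reduces to a one-line set inclusion; the paper's route keeps the color classes fixed across the coupled processes, which makes the comparison of $\A$ uniform but at the cost of more case analysis.

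Two small points to tighten in your writeup. First, the sentence ``each outflow event from $\B$ in $P'$ can be injectively matched to an outflow event from $\C$ in $P$'' is not literally true: a vertex $v$ in the extra $b-c$ block $T$ may leave $\B$ in $P'$ without any corresponding $\C$-outflow in $P$. Your earlier (correct) formulation --- ``the associated vertex in $P$ is either already outside $\C$ or also leaves $\C$'' --- is what you actually need; drop the injective-matching phrasing. Second, rather than comparing inflows and outflows separately (they are not individually ordered), state and prove the containment $\{v:\ v\in\C\text{ after }P\}\subseteq\{v:\ v\in\B\text{ after }P'\}$ directly. The two cases are: (i) $v\in\C_t$ stayed in $\C$ in $P$, hence did not sample two agreeing non-$\C$ neighbours, and since any two agreeing non-$\B$ neighbours in $P'$ are either both in $R$ (hence both $\B$ in $P$) or both carry the same non-$\{\B,\C\}$ label in $P$, $v$ also stays $\B$ in $P'$; (ii) $v\notin\C_t$ joined $\C$ in $P$, so both its samples lie in $\C_t\subseteq\B_{P'}$, hence $v$ ends in $\B$ in $P'$ regardless of its starting color there. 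That is the whole proof.
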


\begin{proof}
{
\def\C{\ensuremath{\mathcal{C}}\xspace}
Let $t$ be an arbitrary but fixed round. In the following, we use the notation
that $\B_t$ and $\C_t$ are sets containing all vertices of colors \B and \C{},
respectively, in round $t$. As before, we have color sizes $b = |\B_t|$ and $c
= |\C_t|$. The proof proceeds by a simple coupling argument. We start by
defining $\hat \B_t, \B_t^*, \C_t^* \subseteq V$ as follows. Let $\hat\B_t$ be
an arbitrary subset of $\B_t$ such that $|\hat\B_t|=|\C_t|$. Let furthermore
$\B_t^*$ be defined as $\B_t^* = \B_t \setminus \hat \B_t$, and finally let
$\C_t^*$ be an arbitrary subset of $V\setminus \left(\B_t \cup \C_t\right)$
such that $|\C_t^*|=|\B_t^*|$.

Additionally, we construct the bijective function $\pi : V \rightarrow V$ as
follows. Let $\hat\pi$ be an arbitrary bijection between $\C_t$ and $\hat\B_t$.
Let furthermore $\pi^*$ be an arbitrary bijection between $\C_t^*$ and
$\B_t^*$. We now define $\pi$ as

\begin{equation}
\pi(v) = \begin{cases}
\hat\pi(v) & \text{ if }v \in \C \enspace , \\
\hat\pi^{-1}(v) & \text{ if }v \in \hat\B \enspace , \\[1ex]
\pi^*(v) & \text{ if }v \in \C^* \enspace ,\\ 
{\pi^*}^{-1}(v) & \text{ if }v \in \B^* \enspace ,\\[1ex]
v & \text{ if }v \in V \setminus (\B_t \cup \C_t \cup \C_t^*)\enspace .
\end{cases}
\label{eq:bijective-pi}
\end{equation}
A graphical representation of $\pi$ can be seen in \autoref{fig:bijective-pi}.

\begin{figure}[h]
\centering
\includegraphics{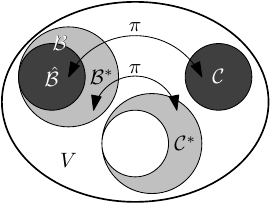}
\caption[schematic representation of the coupling]{schematic representation of the bijective function $\pi$
defined in \eqref{eq:bijective-pi}}
\label{fig:bijective-pi}
\end{figure}

It can easily be observed that $\pi$ indeed forms a bijection on $V$.  We now
use $\pi$ to couple a process $P' = P(\pi)$ to the original process $P$, to
show that $\PP{b'} \geq \P{c'}$, where the notation $\P{b'}$ means the variable
$b'$ in the original process $P$ and $\PP{c'}$ means the variable in the
coupled process $P'$. Let $u \in V$ be an arbitrary but fixed node. The
coupling is now constructed such that whenever $u$ samples a node $v \in V$ in
the original process $P$, then $u$ samples $\pi(v)$ in the coupled process
$P'$. 

Let $X$ be the set of nodes changing their opinion to \C from any other
color in $P$, that is,
\begin{equation*}
X = \left\{v \in V: v \notin \C_t \wedge v \in \C_{t+1}\right\} \enspace .
\end{equation*}
Clearly, $X$ consists of two disjoint subsets $X = \hat{X} \cup X^*$, defined as
\begin{align*}
\hat{X} &= \left\{v \in V: v \notin \left(\C_t \cup \C^*_t\right) \wedge v \in \C_{t+1}\right\} \\
\lefttag{and}
X^* &= \left\{v \in V: v \in \C^*_t \wedge v \in \C_{t+1}\right\} \enspace .
\end{align*}
The set $\hat{X}$ consists of all nodes changing their opinion to \C from
any other color except $\C^*$. The set $X^*$ contains the
remaining nodes in $\C^*$ changing their opinion to \C. Analogously to $X$, let $Y$ be the set of nodes changing their opinion from
\C to any other color in $P$, that is,
\begin{equation*}
Y = \left\{v \in V: v \in \C_t \wedge v \notin \C_{t+1}\right\} \enspace .
\end{equation*}
Again, we have $Y = \hat{Y} \cup Y^*$ which are defined as
\begin{align*}
\hat{Y} &= \left\{v \in V: v \in \C_t \wedge v \notin \left(\C_{t+1}\cup\C^*_{t+1}\right)\right\} \\
\lefttag{and}
Y^* &= \left\{v \in V: v \in \C_t \wedge v \in \C^*_{t+1}\right\} \enspace .
\end{align*}

\begin{table}
\smaller
\noindent\begin{tabularx}{\textwidth}{lXX}
Set       & Process $P$                                                     & Process $P'$ \\\hline
$X$       & nodes changing their color to $\C$                          & nodes which now belong to $\hat\B$ \\
$\hat{X}$ & nodes changing their color to $\C$ except nodes from $\C^*$ & nodes changing their color to \B \\
$X^*$     & nodes from $\C^*$ changing their color to \C                & nodes changing their color to \B \\
$Y$       & nodes changing their color from $\C$                   & nodes which no longer belong to $\hat\B$ \\
$\hat{Y}$ & nodes changing their color from $\C$ but not to $\C^*$ & nodes changing their color  from $\hat{\B}$ but not to $\B^*$ \\
$Y^*$     & nodes changing their color from $\C$ to $\C^*$              & nodes changing from $\hat{\B}$ to $\B^*$ \\
\end{tabularx}
\caption[corresponding sets in the coupling]{corresponding sets between processes $P$ and $P'$}
\label{tab:correspondences}
\end{table}

We now analyze the behavior of these sets in the coupled process $P'$. The coupling ensures the correspondences described in \autoref{tab:correspondences}.
We therefore have in $P$
\begin{align}
\P{c'~} &= \P{c} + |X| - |Y| \enspace . \label{eq:coupling-P} \\
\intertext{In $P'$, we first observe that $|\B| = |\hat{\B}| + |\B^*|$ and therefore}
\PP{b'~} &\geq \PP{b} + |\hat{X}| - |\hat{Y}| - \left(|\B^*| - |X^*|\right) \label{eq:coupling-PP-bound} \\
&\geq |\hat\B| + |\B^*| + |\hat{X}| - |\hat{Y}| - |\B^*| + |X^*| \notag\\
&= |\hat\B| + |X| - |\hat{Y}| \notag\\
&\geq |\hat\B| + |X| - |Y| \notag\\
&= \P{c} + |X| - |Y| \label{eq:coupling-PP}
\end{align}
where the expression $|\B^*|-|X^*|$ in \eqref{eq:coupling-PP-bound} is an upper bound on the
number of nodes in $\B^*$ changing their color away from $\B$ to any other
color except $\hat{\B}$. Combining equations \eqref{eq:coupling-P} and
\eqref{eq:coupling-PP} gives us
\begin{equation*}
\P{c'} \leq \PP{b'}
\end{equation*}
which concludes the proof.
}
\end{proof}

We now use \autoref{lem:distance-increases} and \autoref{lem:coupling} to prove
\autoref{thm:main-result}.

\begin{proof}
Let $\A = \C{1}$ be the dominant color and $\B = \C{2}$ the second largest
color. Assume $a - b \geq z \cdot \sqrt{n\log{n}}$ for a sufficiently large
constant $z$. From \autoref{lem:distance-increases} we know that $a' - b' \geq
\left(a-b\right)\cdot\left(1+\ifrac{a}{4n}\right)$ \whp. 
Since \B is the second largest color, we obtain from \autoref{lem:coupling} for any remaining color \C{j} with $j \geq 3$ that \whp
$ a' - c_j' \geq a' - b' \geq
\left(a-b\right)\cdot\left(1+\ifrac{a}{4n}\right)$. Note that it may very well
happen, especially if all colors have the same size except for \A, that another
color \C{j} \emph{overtakes} \B. However, the resulting distance between
\A and this new second largest color \C{j} will be larger than
$\left(a-b\right)\cdot\left(1+\ifrac{a}{4n}\right)$ \whp.
Let $a''$ and $b''$ denote the sizes of the colors after the round, that is, after the adversary changed the opinion of up to $F$ arbitrary nodes.
We have $a''-b'' \geq a'-b' - 2F \geq \left(a-b\right)\cdot\left(1+\ifrac{a}{4n} - \ifrac{2F}{a-b}\right)\geq \left(a-b\right)\cdot\left(1+\ifrac{a}{8n}\right)$, since $F=\ifrac{a(a-b)}{8n}$.

Taking the union bound over all colors, we conclude that the distance between the first color \A and every other color grows in every round by a factor of at least
$\left(1+\ifrac{a}{4n}\right)$ \whp. Therefore, after
$\tau= \ifrac{4n}{a}$ rounds, the relative distance between \A and \B doubles \whp.
Hence, the required time for \A to reach a size of at least
$(\ifrac{1}{2}+\eps{1})\cdot n$ for a constant $\eps{1} > 0$ is
bounded by $\BigO{\ifrac{n}{a}\cdot\log{n}}$.
This bias is large enough that we assume in the following  that all nodes which are not of color $\A$ are of color $\B$.
In absence of an adversary, we can see that after additional $\BigO{\log{n}}$
rounds every node has the same color \A, \whp; see \cite{CER14}.
 In each individual round, the growth
described in \autoref{lem:distance-increases} takes place \whp. A union bound
over all \BigO{\ifrac{n}{a} \cdot \log{n}} rounds yields that the protocol indeed converges to \A within \BigO{\ifrac{n}{a} \cdot \log{n}} rounds
\whp.
The same analysis of \cite{CER14} can be used even in the presence of an adversary. However, in this case
we can only reach \emph{almost validity} according to \autoref{def:stability}, since the adversary is allowed to change $F=\LittleO{n}$ nodes per round.

Finally, we argue that the two-choices process trivially fulfills the property \emph{almost agreement} according to \autoref{def:stability}.
Starting from an arbitrary initial distribution of colors, there is in every round a positive (albeit super-exponentially small in $n$) probability that all nodes adopt the same color.
\end{proof}

%!TEX root=paper.tex
\subsection{Lower Bounds}\label{sect:lowerbound}

In the previous section, we showed that the plurality consensus process \whp
converges to \A if the initial imbalance $a - b$ is not too small. Precisely,
\autoref{thm:main-result} states that if $a-b \geq z \cdot \sqrt{n\log{n}}$ for
some constant $z$, \A wins \whp. Conversely, in the following section we
examine a lower bound on the initial bias. We will show, as stated in
\autoref{thm:lowerbound-bias}, that for an initial bias $a-b \leq z \cdot
\sqrt{n}$  for some constant $z$ we have a constant probability that \B
\emph{overtakes} \A in the first round, that is, $\Probability{a' < b'} =
\BigOmega{1}$.

Our proof of \autoref{thm:lowerbound-bias} is based on the normal approximation
of the binomial distribution. In this context, we adapt Theorem 2 and equation
(6.7) from \cite{Fel68} as stated in the following theorem.

\begin{theorem}[DeMoivre-Laplace limit theorem \cite{Fel68}] \label{thm:deMoivre}
Let $X$ be a random variable with binomial distribution $X \sim B(N,p)$. It
holds for any $x > 0$ with $x = \LittleO{N^{\ifrac{1}{6}}}$ that
\begin{equation*}
\Probability{ X \geq \Expected{X} + x\cdot\sqrt{\Variance{X}}}= \frac{1}{\sqrt{2\pi}\cdot x}\cdot \Exp{-{x^2}{2}} \pm \LittleO{1} \enspace .
\end{equation*}
\end{theorem}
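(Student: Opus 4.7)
My plan is to derive the statement directly from the classical local central limit theorem for the binomial distribution together with Mills' tail approximation for the standard normal, since the cited result in Feller is essentially obtained by combining these two ingredients.

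First, I would establish a local version: for $k$ lying in a window around $\mu := Np$, the point mass $\Probability{X=k}$ is well approximated by the corresponding normal density. Writing $\sigma^2 := Np(1-p)$ and $y := (k-\mu)/\sigma$, I would apply Stirling's formula $\log m! = m\log m - m + \tfrac{1}{2}\log(2\pi m) + O(1/m)$ to the three factorials in $\binom{N}{k}p^k(1-p)^{N-k}$, expanding $k\log(k/\mu)$ and $(N-k)\log((N-k)/(N-\mu))$ as Taylor series in $y/\sigma$. After cancellation of the first-order terms, the quadratic terms combine to $-y^2/2$, and the cubic remainder has magnitude $O(y^3/\sigma) = O(y^3/\sqrt{N})$. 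This yields
\begin{equation*}
\Probability{X=k} \;=\; \frac{1}{\sigma\sqrt{2\pi}}\,e^{-y^2/2}\bigl(1 + O\bigl(y^3/\sqrt{N}\bigr) + O(1/\sigma)\bigr)
\end{equation*}
uniformly in the range $|y| = o(N^{1/6})$, where the cubic remainder is $o(1)$.

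Next, I would turn this into a tail estimate by summing over $k \geq \mu + x\sigma$. Approximating the sum by the integral $\int_{x}^{\infty}\frac{1}{\sqrt{2\pi}}e^{-t^2/2}\,dt$ (using that $\sigma\to\infty$ so the spacing $1/\sigma$ between consecutive $y$-values is negligible), I get $\Probability{X \geq \mu + x\sigma} = (1-\Phi(x))\bigl(1 + o(1)\bigr)$, where the relative error is controlled by the cubic term evaluated at the dominant point $t \approx x$, which is $O(x^3/\sqrt{N}) = o(1)$ by the hypothesis $x = o(N^{1/6})$. Finally, I invoke Mills' ratio
\begin{equation*}
1 - \Phi(x) \;=\; \frac{1}{\sqrt{2\pi}\,x}\,e^{-x^2/2}\cdot\bigl(1 - x^{-2} + O(x^{-4})\bigr),
\end{equation*}
and since the leading factor $\frac{1}{\sqrt{2\pi}\,x}e^{-x^2/2}$ multiplied by $x^{-2} = o(1)$ is absorbed into the additive $\pm o(1)$ error in the statement, the claimed identity follows.

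The main technical obstacle is controlling the error uniformly in $k$ throughout the summation range: the Stirling expansion is tightest near the mode and degrades for large $|y|$, so I would split the tail into a ``bulk'' part $\mu + x\sigma \leq k \leq \mu + C\sigma\sqrt{\log N}$, where the local CLT gives a clean multiplicative approximation, and a ``far tail'' $k > \mu + C\sigma\sqrt{\log N}$, which I bound crudely by a Chernoff estimate and show to be $o(1/(x\sqrt{2\pi})\cdot e^{-x^2/2})$ for large enough $C$. The hypothesis $x = o(N^{1/6})$ is exactly what makes the bulk cubic error $O(x^3/\sqrt{N})$ vanish, so I would not need a sharper Edgeworth correction; standard Feller-style bookkeeping suffices.
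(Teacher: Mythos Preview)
The paper does not prove this theorem at all: it is simply quoted from Feller's textbook (Theorem~2 and equation~(6.7) in \cite{feller68}) and used as a black box in the proof of \autoref{thm:lowerbound-bias}. Your sketch is essentially Feller's own derivation---Stirling on the binomial point mass to get the local limit theorem with cubic error $O(y^3/\sigma)$, summation to an integral, then Mills' ratio for the Gaussian tail---and it is correct; the hypothesis $x=o(N^{1/6})$ is precisely what kills the $O(x^3/\sqrt{N})$ remainder, as you note. So you have supplied a proof where the paper only supplies a citation, and your approach matches the cited source.
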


We now use \autoref{thm:deMoivre} and prove \autoref{thm:lowerbound-bias} which states that
there exists an initial color assignment for which
$a=b+z'\cdot\sqrt{n}$ but color \B wins with constant probability  even in absence of an adversary. 
\begin{theorem}[Lower Bound on the Initial Bias] \label{thm:lowerbound-bias}
For any $k\leq \sqrt{n}$ and constant $z'$ there exists an initial assignment of colors to nodes for which
$a=b+z'\cdot\sqrt{n}$ but $\Probability{a' < b'} = \BigOmega{1}$ even in absence of an adversary.
\end{theorem}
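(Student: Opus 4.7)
My plan is to exhibit an explicit initial configuration and apply \autoref{thm:deMoivre} to show that after one round the quantity $a'-b'$ is approximately Gaussian with mean $\BigTheta{z'\sqrt{n}}$ and standard deviation $\BigTheta{\sqrt{n}}$, so it becomes negative with constant probability.

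\textbf{Construction.} Place exactly one node in each of the colors $\C{3},\dots,\C{k}$, and split the remaining $n-k+2$ nodes between \A and \B so that $a-b=z'\sqrt{n}$. Since $k\le\sqrt{n}$, both $a$ and $b$ equal $\ifrac{n}{2}+\BigO{\sqrt{n}}$, so $a/n$ and $b/n$ tend to $\ifrac{1}{2}$. Decompose
\[
a'-b' \;=\; (a-b) + 2\bigl(f_{\B\A}-f_{\A\B}\bigr) + R,
\]
where $R$ collects the net flows involving the singleton colors. A direct moment computation (at most $\sqrt{n}$ singletons, each contributing a Bernoulli summand with success probability $\le\max(a^2,b^2)/n^2$) gives $\Expected{R}=\BigO{1}$ and $\Variance{R}=\BigO{\sqrt{n}}$, so $|R|=\LittleO{\sqrt{n}}$ with probability $1-\LittleO{1}$ by Chebyshev. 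Hence it suffices to prove $\Probability{2(f_{\B\A}-f_{\A\B})\le -(a-b)-\LittleO{\sqrt{n}}} = \BigOmega{1}$.

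\textbf{Main step.} The variables $f_{\B\A}\sim\Bin(b,a^2/n^2)$ and $f_{\A\B}\sim\Bin(a,b^2/n^2)$ are independent since the \A-nodes and \B-nodes draw their two samples independently. Their means satisfy $\Expected{f_{\B\A}}-\Expected{f_{\A\B}}=ab(a-b)/n^2=\BigTheta{z'\sqrt{n}}$, while both variances are $\BigTheta{n}$ because $a^2/n^2,b^2/n^2\to\ifrac{1}{4}$. Thus $f_{\B\A}-f_{\A\B}$ has mean $\BigTheta{z'\sqrt{n}}$ and standard deviation $\sigma=\BigTheta{\sqrt{n}}$, so $a'<b'$ corresponds to a deviation of only a constant number of standard deviations below the mean. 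I would apply \autoref{thm:deMoivre} twice, independently: once to $f_{\A\B}$ giving $\Probability{f_{\A\B}\ge\Expected{f_{\A\B}}+x\sqrt{\Variance{f_{\A\B}}}}=\BigOmega{1}$, and once to the complementary variable $b-f_{\B\A}\sim\Bin(b,1-a^2/n^2)$, which yields the lower tail $\Probability{f_{\B\A}\le\Expected{f_{\B\A}}-x\sqrt{\Variance{f_{\B\A}}}}=\BigOmega{1}$, for a constant $x=x(z')$ to be fixed. By independence their joint probability is also $\BigOmega{1}$, and on that event
\[
f_{\B\A}-f_{\A\B} \;\le\; \Expected{f_{\B\A}}-\Expected{f_{\A\B}}-x\bigl(\sqrt{\Variance{f_{\B\A}}}+\sqrt{\Variance{f_{\A\B}}}\bigr),
\]
which, for $x$ a sufficiently large constant depending only on $z'$, is below $-(a-b)/2$, so $a'<b'$ as required.

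\textbf{Main obstacle.} The delicate point is matching constants: we must pick $x=\BigO{1}$ large enough in terms of $z'$ so that the deterministic bias $a-b$ plus the negligible correction $R$ are overcome by the guaranteed deviation $x\sigma$. Since $\sigma/\sqrt{n}$ tends to a fixed positive constant and $(a-b)/\sqrt{n}=z'$ is a fixed constant, this matching is straightforward. One secondary technical check is that \autoref{thm:deMoivre}'s Gaussian tail estimate $\ifrac{1}{\sqrt{2\pi}x}\cdot e^{-x^2/2}\pm\LittleO{1}$ remains $\BigOmega{1}$ for fixed $x=\BigO{1}$; this is immediate because the leading term is a positive constant.
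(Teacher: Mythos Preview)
Your proposal is correct and follows essentially the same approach as the paper: the identical singleton construction, the same isolation of $f_{\A\B}$ and $f_{\B\A}$ as the dominant flows, and the DeMoivre--Laplace normal approximation to produce a $\Theta(\sqrt{n})$ deviation with constant probability. The only cosmetic differences are that the paper uses a single application of \autoref{thm:deMoivre} (to $f_{\A\B}$) together with a Chernoff upper bound on $f_{\B\overline\B}\ge f_{\B\A}$, and handles the singleton contribution by the crude deterministic bound $k-2\le\sqrt{n}$ rather than Chebyshev; neither change affects the structure of the argument.
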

\begin{proof}
{
\def\fab{\ensuremath{f_{\A\B}}\xspace}
\def\fba{\ensuremath{f_{\B\A}}\xspace}
\def\fbb{\ensuremath{f_{\B\overline\B}}\xspace}
\def\a{\left(n' + z\cdot\sqrt{n}\right)}
\def\b{\left(n' - z\cdot\sqrt{n}\right)}
Let $z=z'/2$ and $n'=\frac{n-k+2}{2}$.
Assume that we have the following initial color distribution among the nodes.
\begin{equation*}
(c_1, c_2, c_3, \dots, c_k) = \left( \floor{n'} + \floor{z\cdot\sqrt{n}} , \ceil{n'} - \floor{z\cdot\sqrt{n}} , 1, \dots, 1\right).
\end{equation*}
Clearly, $\sum_{\C{j}}c_j = n$. In the following we will omit the floor and
ceiling functions for the sake of readability reasons. First, we start by
giving an upper bound on the number of nodes which change their color away from
\B.
Now recall that \fbb follows a binomial distribution 
$\fbb \sim B(b,\sum_{C{j} \neq B}\ifrac{c_j^2}{n^2})$ with expected value
\begin{align*}
\Expected{\fbb} &= b\cdot\frac{a^2+k-2}{n^2} \\
&= \b\cdot\frac{\a^2 + k-2}{n^2}\\
&\leq \frac{\a^3 + k-2}{n^2}\\
&\leq \frac{n}{8} + 4z\sqrt{n}%
 \enspace .
\end{align*}
Applying Chernoff bounds to \fbb gives us
\begin{equation}
\Probability{\fbb \geq \left(1 + \sqrt{3/\Expected{\fbb}}\right)\cdot\Expected{\fbb}} \leq \ifrac{1}{e} \enspace . \label{eq:lowerbound-bias-bb}
\end{equation}
That is, with constant probability at least $1 - \ifrac{1}{e}$ we have
\begin{align*}
\fbb &\leq \left(1+\sqrt{3/\Expected{\fbb}}\right)\cdot\Expected{\fbb} \\
& \leq \frac{n}{8} + 4z\sqrt{n}+ \sqrt{3\cdot\Expected{\fbb}} \\
& \leq \frac{n}{8} + \left(4z+1\right)\cdot\sqrt{n} \enspace .
\end{align*}

Secondly, we give the following lower bound on the number of nodes which change
their color from \A to \B. Again, the random variable \fab denoting the flow from \A
to \B has a binomial distribution $\fab \sim B\left(a, \ifrac{b^2}{n^2}\right)$
with expected value
\begin{align*}
\Expected{\fab} &= \a\cdot\frac{\b^2}{n^2} \\
& \geq \frac{\b^3}{n^2} \\
& \geq \frac{(n/2-(z+1/2)\sqrt{n})^3}{n^2} \\
& \geq \frac{n}{8} - 4z\sqrt{n}%
\intertext{and variance}
\Variance{\fab} &= \Expected{\fab}\cdot\left(1-\frac{\b^2}{n^2}\right) \\
& \geq \frac{n}{9}\cdot \frac{1}{2} = \frac{n}{18} \enspace .
\intertext{We now apply \autoref{thm:deMoivre} to $\fab$. Let $x=\frac{\sqrt{18}}{2}(18z+4)$. We derive
\begin{equation*}
\Probability{\fab \geq \Expected{\fab} + x\cdot\sqrt{\Variance{\fab}}} = \frac{1}{\sqrt{2\pi}\cdot x} \Exp{-{x^2}/{2}} \pm \LittleO{1} = \BigOmega{1} \enspace .
\end{equation*}
That is, we have with constant probability}
\fab &\geq \Expected{\fab} + x\cdot\sqrt{\Variance{\fab}} 
\geq  \frac{n}{8} - 4z\sqrt{n} + x\cdot\sqrt{\frac{n}{18}} 
\numberthis \label{eq:lowerbound-bias-ab} \enspace .\\
\end{align*}

Finally, assume that in the worst case every node of colors $\C3, \dots, \C{k}$
changes to \A but not a single node changes away from \A to these colors \C3 to
\C{k}. Observe that \fbb is an upper bound on \fba. Therefore, 
\begin{align*}
a'-b' &\leq \left( a + (k-2) + \fba -\fba \right) -\left( b+ \fab  - \fbb  \right) \\
&\leq a-b + (k-2) + 2\fbb-2\fab \\
&\leq 2z\cdot\sqrt{n} + (k-2) + 2\fbb-2\fab \\
&\leq (2z+1)\cdot\sqrt{n}+ 2\fbb-2\fab \enspace .
\intertext{We plug in \eqref{eq:lowerbound-bias-bb} and \eqref{eq:lowerbound-bias-ab} to
bound the random variables \fab and \fbb and obtain with constant probability}
a'-b' &\leq (2z+1)\cdot\sqrt{n} + 2\left(  \frac{n}{8} + (4z+1)\sqrt{n} \right) - 2\left( \frac{n}{8} - 4z\sqrt{n} + x\cdot\sqrt{\frac{n}{18}} \right)\\
&= (2z +1 + 8z + 2 + 8z - 2x/\sqrt{18})\cdot\sqrt{n}  \\
&= (18z +3 - 2x/\sqrt{18})\cdot\sqrt{n}
\end{align*}
which gives us $a'-b'< 0$ for $x = \frac{\sqrt{18}}{2}(18z+4)$.
Therefore, we have $\Probability{a' < b'} = \BigOmega{1}$ and thus we conclude that color \B wins with constant probability. \qedhere
}
\end{proof}

\begin{theorem}[Lower Bound on the Run Time] \label{thm:lowerbound-runtime}
Assume the initial bias is exactly $z\sqrt{n\log n}$ for some constant $z$. The number
of rounds required for the plurality consensus process defined in
\autoref{alg:two-choices} to converge is at least
$\BigOmega{\ifrac{n}{a} + \log n}$ with constant probability, even in absence of an adversary.
\end{theorem}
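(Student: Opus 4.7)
\begin{idea}
The plan is to establish the two summands $\BigOmega{n/a}$ and $\BigOmega{\log n}$ by separate arguments and take the maximum.

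First, for the $\BigOmega{n/a}$ lower bound, I would track the size of $\A$ directly. Observe that $a_{t+1} - a_t \leq f_{\overline{\A}\A} \sim \Bin(n-a_t, (a_t/n)^2)$, whose expectation is at most $a_t^2/n$. Since $a_t \geq n/k \geq n^{1-\epsilon}$, this expectation is polynomial in $n$, so a standard Chernoff bound yields $a_{t+1} - a_t \leq C\cdot a_t^2/n$ with probability $1-n^{-\Omega(1)}$ for some constant $C$. While $a_t \leq 2 a_0$ this gives a per-round increment of at most $4 C a_0^2/n$, so a straightforward induction shows $a_t < 2 a_0 \leq n$ for every $t \leq n/(8 C a_0)$. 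Hence the process cannot converge in fewer than $\BigOmega{n/a_0}$ rounds, and a union bound over the $\poly(n)$ rounds considered is harmless because the relevant expectations remain polynomial.

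Second, for the $\BigOmega{\log n}$ lower bound, I would re-use the identity for $\Expected{a'-b'}$ derived inside the proof of \autoref{lem:distance-increases}, namely $\Expected{a'-b'} = (a-b)\bigl(1 + (a+b)/n - \frac{1}{n^2}\sum_{\Ci}c_i^2\bigr) \leq 2(a-b)$. Combining this with Chernoff concentration on each of the four binomial flows $f_{\overline{\A}\A}, f_{\A\overline{\A}}, f_{\overline{\B}\B}, f_{\B\overline{\B}}$ (each deviating from its mean by $\BigO{\sqrt{n\log n}}$ \whp) yields $a'-b' \leq 2(a-b) + C'\sqrt{n\log n}$ in every round. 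Working with the potential $\Phi_t = (a_t-b_t) + C'\sqrt{n\log n}$ then gives $\Phi_{t+1} \leq 2\Phi_t$, so $\Phi_t = \BigO{2^t \sqrt{n\log n}}$. Since convergence forces $a-b = n$, this requires $2^t = \BigOmega{\sqrt{n/\log n}}$, i.e.\ $t = \BigOmega{\log n}$.

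The only mildly delicate step is ensuring that the per-round Chernoff events survive a union bound over the relevant $\poly(n)$ rounds in each part; this causes no real trouble because the underlying binomial expectations stay polynomial in $n$ throughout (thanks to $a \geq n^{1-\epsilon}$), so every such event holds with probability $1 - n^{-\Omega(1)}$. Combining the two parts then yields $T \geq \BigOmega{n/a + \log n}$ with the claimed probability.
\end{idea}
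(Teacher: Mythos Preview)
Your proposal is correct. For the $\BigOmega{n/a}$ part, your argument coincides with the paper's: both upper-bound the one-step gain by $f_{\overline{\A}\A}$, apply Chernoff (using that $\Expected{f_{\overline{\A}\A}}\geq a^2/(2n)$ is polynomial in $n$ since $a\geq n^{1-\epsilon}$), and run an induction showing $a(t)$ stays below $2a(0)$ for the first $\BigOmega{n/a(0)}$ rounds. The paper writes the induction multiplicatively as $a(t)\leq a(0)(1+3a(0)/n)^t$, but this is the same computation.

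For the $\BigOmega{\log n}$ part the two arguments genuinely differ. The paper does not argue at all: it simply cites the two-color lower bound from~\cite{CER14}. Your route --- bounding $\Expected{a'-b'}\leq 2(a-b)$ via the identity in the proof of \autoref{lem:distance-increases}, absorbing the four Chernoff errors into an additive $C'\sqrt{n\log n}$, and running the potential $\Phi_t=(a_t-b_t)+C'\sqrt{n\log n}$ --- is a clean self-contained alternative that avoids the external reference. One small point worth making explicit: both of your parts tacitly assume that convergence means $\A$ reaches size $n$ (so that $a_T<2a_0$, respectively $a_T-b_T=n$, witnesses non-convergence). This is justified because \autoref{thm:main-result} guarantees $\A$ wins \whp, and the theorem only claims the lower bound with constant probability; alternatively, for Part~1 the identical increment bound applies to every color since each starts with size at most $a_0$.
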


\begin{proof}
Let $a(t)$ denote the size of color \A in round $t$. Assume \A is the largest
color of initial size $a(0) = {n}/{k}+z\cdot\sqrt{n\log{n}}$. Furthermore,
assume that $k\geq 3\cdot z$.  We show by induction on the rounds that
$a(t) \leq a(0)\cdot\left(1+ 3\cdot{a(0)}/{n}\right)^t$ for
$1 \leq t \leq {n}/{(10\cdot a(0))}$ with probability $1-t/n$.
First we note that 
\begin{align*} \label{eq:coarse-bound}
a(t) &\leq a(0) \cdot \left(1+ 3\cdot\frac{a(0)}{n}\right)^t\\
&\leq a(0)\cdot \left(1+ 3\cdot\frac{a(0)}{n}\right)^{n/(10 \cdot a(0))}\\
&\leq a(0)\cdot \Exp{1/2} \\
&\leq 2\cdot a(0) \numberthis
\intertext{and}
a(t)&\geq a(0)\enspace . \label{eq:coarse-bound-2}\numberthis
\end{align*}
We now prove the induction claim. The base case holds trivially.
Consider step ${t+1}$.
By induction hypothesis we have with probability at least $1-{t}/{n}$ that 
$ a(t) \leq a(0) \cdot\left(1+ 3\cdot{a(0)}/{n}\right)^t$.
Note that we have \whp \begin{align*}
a(t+1) &\leq a(t) + f_{\overline\A\A} \\
&\leq a(t) + \left(1+\frac{\sqrt{3\log{n}}}{\sqrt{\Expected{f_{\overline\A\A}}}}\right)\cdot\Expected{f_{\overline\A\A}} \enspace ,
\intertext{where the latter inequality follows by Chernoff bounds. Using \eqref{eq:coarse-bound} and \eqref{eq:coarse-bound-2}, we derive}
a(t+1) & \leq a(t) + \left(1+\frac{\sqrt{3\log n}}{\sqrt{ {a(t)^2}/{(2\cdot n)} }}\right)\frac{a(t)^2}{n} \\
& \leq a(t) + \left(1+\frac{\sqrt{3\log n}}{\sqrt{{a(0)^2}/{(2\cdot n)} }}\right)\frac{a(t)^2}{n} \\
& \leq a(t) + \frac{3}{2} \cdot \frac{a(t)^2}{n}  \\
& = a(t) \cdot \left(1 + \frac{3}{2}\cdot \frac{a(t)}{n}\right) \\
& \leq a(t) \cdot \left(1 + \frac{3\cdot a(0)}{n}\right) \enspace .
\intertext{From the induction hypothesis we therefore obtain}
a(t+1)& \leq a(0)\cdot\left(1+ \frac{3\cdot a(0)}{n}\right)^t\cdot \left(1 + \frac{3\cdot a(0)}{n}\right) \\
& = a(0)\cdot\left(1+ \frac{3\cdot a(0)}{n}\right)^{t+1} \enspace .
\end{align*}
Using a union bound to account for all errors, we derive that with probability at
least $1-{(t+1)}/{n}$ we have 
$a(t+1) \leq  a(0)\cdot\left(1+ {3\cdot a(0)}/{n}\right)^{t+1}$,
which completes the proof of the induction and proves the lower bound of $\BigOmega{{n}/{a}}$.

In the remainder we establish the bound $\BigOmega{\log n}$.
Assume only two colors \A and \B, where \A is the largest
color of initial size $a(0) = {n}/{2}+\sqrt{n}\log{n}$.  We show by induction on the rounds that
$a(t) \leq a(0) + 6^t\sqrt{n}\log{n}$ for
$1 \leq t \leq {\log n}/{20}$ with probability $1-{2t}/{n}$.
First we note that 
\begin{align*}
a(t) &\leq a(0)+ 6^t\sqrt{n}\log{n} \leq {n}/{2}+n^{5/6}<n
\intertext{and }
a(t)&\geq a(0)\enspace .
\end{align*}
We now prove the induction claim. The base case holds trivially.
Consider step ${t+1}$.
By induction hypothesis we have with probability at least $1-{2t}/{n}$ that 
$ a(t) \leq a(0) + 6^t\sqrt{n}\log{n}$.
We have, using $a=a(t)$ and $\beta = 6^t\sqrt{n}\log{n}$,
\begin{align*}
	n^2\cdot \Expected{f_{\overline\A\A}-f_{\A\overline\A}}&=(n-1)a^2-a\cdot (n-a)^2=(n-a)a(2a-n)\\
	&\leq{n}/{2}\cdot a \cdot 2\beta =n\cdot \beta (n+\beta)=n^2\cdot \beta + n \cdot \beta^2 \enspace .
\end{align*}
Similar to before, we obtain by Chernoff bounds that \whp
\begin{align*}
a(t+1)-a(t)&=	f_{\overline\A\A}-f_{\A\overline\A} \\
& \leq \left(1+\frac{\sqrt{3\log{n}}}{\sqrt{\Expected{f_{\overline\A\A}}}}\right)\Expected{f_{\overline\A\A}} 
- \left(1-\frac{\sqrt{3\log{n}}}{\sqrt{\Expected{f_{\A\overline\A}}}}\right)\Expected{f_{\A\overline\A}}\\
&\leq  \Expected{f_{\overline\A\A}-f_{\A\overline\A}}+   2\sqrt{3\log{n}}\cdot \sqrt{\Expected{f_{\overline\A\A}}} \\
&\leq \beta +  \beta^2/n +   2\sqrt{3\log{n}}\cdot \sqrt{n} \leq 3\beta \enspace .
\intertext{From the induction hypothesis we therefore obtain}
a(t+1)& \leq a(0) +6^t\sqrt{n}\log{n} + 3\beta  \\ &
\leq   a(0) +6^{t+1}\sqrt{n}\log{n}  \enspace ,
\end{align*}
which completes the induction and yields the lower bound of $\BigOmega{\log n}$.
\end{proof}

\section{Comparison with the 3-Majority Process }\label{sec:3majority}

In this section we elaborate on the difference between the two-choices process
and the $3$-majority rule \cite{BCNPST14}, where in the latter each node pulls
the opinion of three random neighbors and adopts the majority opinion among
those three, breaking ties uniformly at random. As mentioned before, the
$3$-majority process of \cite{BCNPST14} uses $\BigO{\log k}$ memory bits and
the authors prove a tight run time of $\BigTheta{k\cdot\log n}$ for this
protocol, given a sufficiently high bias $c_1-c_2$. Moreover, they show that if
the bias is only of order $\sqrt{kn}$, then with constant probability the
difference $c_1-c_2$ decreases. This is fundamentally different to the
two-choices process, where we only require a bias of $\BigOmega{\sqrt{n
\log{n}}}$.

The reasons are the following. First, the variance in the $3$-majority process
can be orders of magnitude larger and second, the expected increase in the
difference between the largest and second largest color in the $3$-majority
process is only of order of the variance. As for the variance, consider an
initial setting where all colors are of sublinear size and $\A$ and $\B$ are
larger than all other colors, such that 
\begin{align*}
\LittleO{n} = a &= b + c\sqrt{n\log n} > c_j + c\sqrt{n\log n}
\intertext{ and }  c_j&=(n-b-a)/(k-2)
\end{align*}
for all $2\leq j\leq k$ with $k=n^\varepsilon$ for constants $\varepsilon$ and
$c$. Observe that the expected numbers of color switches differ significantly. In the
two-choices process it is very unlikely for a node to pick
the same color twice and the probability of switching is $\LittleO{1}$. In contrast to this,
the probability of switching in the $3$-majority process is $1-\LittleO{1}$.

More illustratively, consider the number of switches to color $\B$. By Lemma 2.1
of \cite{BCNPST14}, the probability that a node switches to color $\B$ in the $3$-majority
process is $p \in [b/n,2b/n]$ and the variance becomes $n\cdot
p\cdot (1-p) \geq b/2$. However, in the two-choices process, the probability of
switching to $\B$ is $q= b^2/n^2$ and the variance is thus at most $n\cdot
q\cdot (1-q) \leq n\cdot q = b^2/n$, which is considerably smaller than $b/2$.
This high variance paired with the small expected increase in the difference
between $\A$ and $\B$ easily becomes fatal. Again, by Lemma 2.1 of
\cite{BCNPST14}, one can verify that $\Expected{a'-b'}\leq
a-b+{(a^2-b^2)}/{n}$. Now we have $\Probability{a'\leq
\Expected{a'}}=\BigOmega{1}$ and, using the large variance, we obtain from
\autoref{thm:deMoivre} that
\begin{align*}
\Probability{b' \geq b + (a^2-b^2)/n \left| a'\leq \Expected{a'}\right.} &\geq \Probability{b' \geq b + (a^2-b^2)/n}\\
&\geq \Probability{b' \geq \Expected{b'} + \Variance{b'}} =\BigOmega{1} \enspace .
\end{align*}
Thus the distance between \A and \B \emph{decreases} with constant probability, that
is, $\Probability{ a'-b' < a-b }=\BigOmega{1}$. In comparison to this, we have
seen in \autoref{sect:main-result} that in the given setting the
distance between \A and \B in the two-choices process increases \whp.

\section{Analysis of the Synchronous Algorithm: One Extra Bit} \label{sect:memory}

\newcommand\cycle{phase\xspace}
In this section we investigate the \onebit protocol which combines the
guarantees of the two-choices process to reach plurality consensus with the
speed of broadcasting.  The protocol consists of $\BigTheta{\log(n/a) +
\log\log{n}}$ phases which in turn consist of two sub-phases, one round of the
\TC process and multiple rounds of  the so-called \BP sub-phase. In the latter
\BP sub-phase, each node that changed its opinion during the preceding
two-choice round broadcasts its new opinion.

More precisely, we consider the modified model where each node is allowed to
store and transmit one additional bit. This bit is set to \ttrue if and only if
a node changed its opinion in the \TC sub-phase. In the \BP sub-phase, each
node $u$ samples nodes randomly until  a node  $v$ with a bit set to \ttrue is
found. Then $u$ adopts $v$'s opinion and sets its own bit to \ttrue, which
means that subsequently any node sampling $u$ will set their bit directly.

The first sub-phase ensures that in a round $t$ the number of nodes holding
opinion \A and having their bit set to \ttrue is concentrated around
$a_{t-1}^2/n$. After the \BP sub-phase, all nodes will have their bit set, and
the distribution and the size of \A's support is concentrated around
$a^2/x(1)$, where $x(1)$ is the total number of bits set after the \TC
sub-phase. Moreover, we show that no other color grows faster. In fact, we show
that the distance between $\A$  and any opinion $\C{j}\neq A$ increases
quadratically, that is, $a'/c_j' \geq (1-\LittleO{1})\cdot a^2/c_j^2$. Due to
the  quadratic growth in the distance between $\A$ and every other opinion, the
number of phases required is only of order $\BigTheta{\log(n/a) +
\log\log{n}}$. The process runs in multiple \cycle{s} of length
$\BigTheta{\log{k} + \log\log{n}}$ each, therefore we assume that every node is
aware of (upper bounds on) $n$ and $k$. The process is formally defined in
\autoref{alg:memory}.

\providecommand\cycle{phase\xspace}
\begin{algorithm}[t]
\SetAlgoVlined
\SetKwFor{AtNode}{at each node}{do in parallel}{end}
\SetKw{KwLet}{let}

\SetKwProg{Algorithm}{Algorithm}{}{end}
\SetKwFunction{Memory}{memory}
\SetKwFunction{Color}{color}
\SetKwFunction{Bit}{bit}

\SetKw{KwLet}{let}
\Algorithm{\Memory{$G = (V, E)$, $ \Color : V \rightarrow C$, $\Bit : V \rightarrow \{\ttrue, \tfalse\}$}}{
\For{\cycle $s = 1$ \KwTo $\ell\log({U}) + \log\log{n}$}{
\AtNode(\tcc*[f]{  two-choices [Round $1$]    }){$v$}{
\KwLet $u_1, u_2 \in N(v) $ uniformly at random\;
\eIf{\Color{$u_1$} $=$ \Color{$u_2$}}{
	\Color{$v$} $ \gets $ \Color{$u_1$}\;
	\Bit{$v$} $ \gets \ttrue$ \;
}{
	\Bit{$v$} $ \gets \tfalse$ \;
}
}
\For(\tcc*[f]{     bit-propagation     }){    round $t = 2$ \KwTo $2\log{|C|} + 2\log\log{n}$     }{
\AtNode(\tcc*[f]{       [Rounds $2$ to $2\log{|C|} + 2\log\log{n}$]     }){$v$}{
\KwLet $u \in N(v)$ uniformly at random\;
\If{\Bit{$u$}}{
	\Color{$v$} $ \gets $ \Color{$u$}\;
	\Bit{$v$} $ \gets \ttrue$ \;
}
}
}
}
}
\caption{Distributed Voting Protocol with One Bit of Memory. The variable $\ell$ is a large constant and $U$ is an upper bound on
$c_1/(c_1-c_2)$.
Since the
process runs in multiple \cycle{s} of length $\BigTheta{\log{k} + \log\log{n}}$ each, we
assume that every node has knowledge of $\ell\cdot U$, $n$ and $k$. }\label{alg:memory}
\end{algorithm}

If we assume that each node has knowledge of $\ifrac{n}{a}$, the run time can
be further reduced to $\BigO{\left(\log(c_1/\left(c_1-c_2\right))+\log\log{n}\right)\cdot\left(\log{(\ifrac{n}{a})} + \log\log{n}\right)}$, given
$\ifrac{n}{a}$ is smaller than $k^{\LittleO{1}}$.
We start our analysis with \autoref{lem:number-of-bits} where we derive a lower bound on the
number of bits set during the two-choices round. We will then use the results by Karp et al.\ \cite{KSSV00} to
argue that after the bit-propagation rounds the number of bits set is $n$ \whp, that is, 
the total number of bits set grows
until eventually every node has its bit set. 
Finally, we will prove in \autoref{lem:relative-colors} that the
relative number of bits set for \emph{large} colors remains close to the
initial (relative) value during the bit-propagation rounds. Together with the
growth of the total number of set bits, this leads to a growth of the imbalance
towards \A by at least a constant factor during each \cycle. %For simplicity, we will

We will use $x^{(i)}(t)$ to denote the random variable for the total number of nodes which have their
bit set in a round $t$ of phase $i$. When it is clear from the context, we simply use the notation $x(t)$.   Accordingly, $x^{(i)}(1)$ is the number of bits set
after the two-choices round of phase $i$. Additionally, we will use $x_{j}^{(i)}(t)$ to denote the
number of nodes of color \C{j} which have their bit set in a round $t$. Similarly as before, we simply write
$x_{j}(t)$ when the phase is clear from the context.  
Furthermore, when analyzing the growth in $x^{(i)}(t)$ and $x_{j}^{(i)}(t)$ with respect to $x^{(i)}(t-1)$ and $x_{j}^{(i)}(t-1)$, we will assume
that $x^{(i)}(t-1)$ and $x_{j}^{(i)}(t-1)$ are fixed.

\subsection{The Key Lemmas}
We start by showing that the initial number of
bits is well-concentrated around the expectation after the \TC sub-phase.
\begin{proposition} \label{lem:bits-concentrated}
For any color \C{j} with $c_j = \BigOmega{\sqrt{n\log{n}}}$ the number of nodes
of color \C{j} which have their bit set after the two-choices round is
concentrated around the expected value, that is,
\[x_j(1)=
\frac{c_j^2}{n}\left(1\pm\BigO{\ifrac{\sqrt{n\log{n}}}{  c_j  }}\right)
 \] \whp.
If $c_j = \BigO{\sqrt{n}\cdot\log^2{n}}$, then $x_j(1) = \BigO{\log^4{n}}$ \whp.
\end{proposition}

The following proposition bounds the growth of each opinion after one phase, that is after the \TC and \BP sub-phase.  

\begin{proposition} \label{lem:relative-colors}
Let $a'$ and $c_j'$ be the number of nodes of colors $\A = \C{1}$ and \C{j},
respectively, after the bit-propagation round. Let 
$T = 2(\log{k}+\log\log{n})$. Given $x(1)$ and assuming it is concentrated
around the expected value, we have after $T$ bit-propagation rounds \whp
\begin{align*}
a' &\geq \frac{a^2}{x(1)} \cdot \left(1 - \BigO{\frac{T\cdot\sqrt{n\log{n}}}{a}} \right) & \text{and}&&
c_j'  \leq \frac{c_j^2}{x(1)} \cdot \left(1 + \BigO{\frac{T\cdot\sqrt{n\log{n}}}{c_j}} \right) + k^3\cdot\log^7{n}\enspace .
\end{align*}
\end{proposition}

\subsection{Analysis}

In the following lemmas we analyze an arbitrary but fixed phase.

\begin{lemma} \label{lem:number-of-bits}
After the two-choices round, at least $\ifrac{n}{k}\cdot\left(1-\LittleO{1}\right)$ bits are set \whp.
\end{lemma}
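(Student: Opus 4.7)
The plan is to compute the probability that a single node's bit gets set to \ttrue in the two-choices round, show that this probability is at least $1/k$ regardless of the current color distribution, and then apply Chernoff bounds, using that the indicators for distinct nodes are independent since every node draws its two samples independently.

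First I would observe that for an arbitrary node $v$, its bit is set in the two-choices round exactly when the two uniformly and independently sampled neighbors $u_1, u_2$ have the same color. Since each color $\mathcal{C}_j$ has $c_j$ members, the probability of this event equals
\begin{equation*}
p = \sum_{j=1}^{k} \frac{c_j^2}{n^2}.
\end{equation*}
The key inequality is Cauchy--Schwarz (or the power-mean inequality): since $\sum_j c_j = n$, we have $\sum_j c_j^2 \geq n^2/k$, which gives $p \geq 1/k$. (Alternatively, one can use $c_1 \geq n/k$ to get $p \geq c_1^2/n^2 \geq 1/k^2$, but the Cauchy--Schwarz bound is stronger and is what we need.) Consequently, the expected number of set bits after the two-choices round satisfies $\Expected{x(1)} \geq n/k$.

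Next, since the two samples drawn by distinct nodes are mutually independent, the indicators $\mathbf{1}[v\text{'s bit is set}]$ for $v\in V$ are independent Bernoulli random variables. Therefore $x(1)$ is a sum of $n$ independent indicators with mean $\mu = np \geq n/k$. Because $k = \BigO{n^\varepsilon}$ for a sufficiently small constant $\varepsilon>0$, we have $\mu = \BigOmega{n^{1-\varepsilon}}$, which is polynomial in $n$. A standard multiplicative Chernoff bound then yields
\begin{equation*}
\Probability{x(1) < \tfrac{1}{2}\mu} \leq e^{-\mu/8} = e^{-\BigOmega{n^{1-\varepsilon}}},
\end{equation*}
so $x(1) \geq \mu/2 = \BigOmega{n/k}$ whp, completing the proof.

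The argument is short and the only mild obstacle is choosing the right lower bound on $p$: a naive $p\geq c_1^2/n^2 \geq 1/k^2$ would only give $\BigOmega{n/k^2}$ bits, which is insufficient; the Cauchy--Schwarz step is what yields the correct $\BigOmega{n/k}$ bound and makes the Chernoff concentration go through for all $k = \BigO{n^\varepsilon}$.
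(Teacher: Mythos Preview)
Your proposal is correct and follows essentially the same approach as the paper: compute the per-node probability $p=\sum_j c_j^2/n^2$, lower-bound it by $1/k$ (the paper phrases this as ``minimized when all colors have equal size $n/k$'', which is exactly your Cauchy--Schwarz step), note that $x(1)$ is a sum of independent Bernoulli indicators with mean at least $n/k$, and finish with a Chernoff bound. The only cosmetic difference is that the paper picks $\delta=2\sqrt{k\log n/n}$ to obtain failure probability $n^{-2}$, whereas you take $\delta=1/2$ and get an even smaller (sub-polynomial) failure probability; both satisfy the paper's notion of ``with high probability''.
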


\begin{proof}
The probability for one node to open connections to two nodes of the same color
is $p_\text{two-choices} = \sum_{\C{j}}\frac{c_j^2}{n^2}$. This probability is
minimized if all colors are of the same size $\ifrac{n}{k}$ and therefore
$p_{\text{min}} = \frac{1}{n^2}\cdot\sum_{\C{j}}\frac{n^2}{k^2} = \frac{1}{k}$.
Since all nodes open connections independently, the random variable for the
number of bits set after the two-choices round, $x(1)$, has a binomial
distribution with expected value at least $\Expected{x(1)} \geq \ifrac{n}{k}$.
Applying Chernoff bounds to $x(1)$ gives us
\begin{equation*}
\Probability{x(1) \leq \left(1-2\sqrt{\frac{k\log{n}}{n}}\right)\frac{n}{k}} \leq \Exp{-\frac{4kn\log{n}}{2kn}} = n^{-2} \enspace . \qedhere
\end{equation*}
\end{proof}

From the lemma above we obtain that we have at least
$x(1) = \ifrac{n}{k}\cdot\left(1-\LittleO{1}\right) = \BigOmega{\ifrac{n}{k}}$ bits
set after the first round \whp.

We are ready to prove \autoref{lem:bits-concentrated}, which states that the number of
bits is well-concentrated around the expectation for colors which are
\emph{large enough}.
\begin{proof}[Proof of \autoref{lem:bits-concentrated}]
Let \C{j} be an arbitrary but fixed color with $c_j>3\sqrt{n\log{n}}$. The
number of nodes of color \C{j} which have their bit set after the two-choices
round has a binomial distribution $x_j(1)\sim B(n,~\ifrac{c_j^2}{n^2})$ with
expected value $\Expected{x_j(1)}=\ifrac{c_j^2}{n} > 9\log{n}$. We apply
Chernoff bounds to $x_j(1)$ and obtain
\begin{equation*}
\Probability{|x_j(1) - \Expected{x_j(1)}| > 3\sqrt{\frac{\log{n}}{\Expected{x_j(1)}}} \cdot \Expected{x_j(1)}} \leq n^{-2} \enspace .
\end{equation*}
That is, we have
$|x_j(1)-\Expected{x_j(1)}|\leq3\sqrt{\log{n}\cdot\Expected{x_j(1)}}$ \whp.
Hence, \[x_j(1)=
\Expected{x_j(1)}\left(1\pm\BigO{\ifrac{\sqrt{\log{n}}}{\sqrt{\Expected{x_j(1)}}}}\right)= 
\frac{c_j^2}{n}\left(1\pm\BigO{\ifrac{\sqrt{n\log{n}}}{  c_j  }}\right)
 \] \whp.

The second statement can be shown in an analogous way.
\end{proof}

We now investigate the growth of $x(t)$ in the rounds following the \TC round.
\begin{lemma} \label{lem:KSSV00}
Assume that $x(1) \geq n/k\cdot(1-\LittleO{1})$.
After at most $T = 2(\log{k}+\log\log{n})$ bit propagation rounds, we
have $x(T) = n$, \whp.  Furthermore, \whp it holds that $1 \leq x(t+1)/x(t) \leq
2+\LittleO{1}$.
\end{lemma}
The proof follows from the results on rumor spreading in \cite{KSSV00}. While
in \cite{KSSV00} the authors analyze a combination of push and pull, an
elaborate analysis was conducted in \cite{Sch02} which considers the pull
operation separately. This latter analysis from \cite{Sch02} can be directly
applied to show our lemma.

We proceed by establishing bounds on the growth of the bits set of color \C{j}
after $t+1$ rounds, that is of  $x_j(t+1)$, for given $x_j(t)$.
\begin{lemma} \label{lem:colors-concentrated}
Let \C{j} be a color with at least $x_j(t) = \BigOmega{\log{n}}$ bits set in a
round $t$. Assume $x(t)$ and $x_j(t)$ are given and they are concentrated
around their mean.
Then we have \begin{equation*}
\Expected{x_j(t+1)| x(t) , x_j(t)} = x_j(t) + \frac{n-x(t)}{n}\cdot x_j(t) \enspace .
\end{equation*}
Furthermore, the number of nodes of color \C{j} which have their bit set in round $t+1$ is \whp concentrated
around the expected value such that
\begin{equation*}
x_j(t+1)=\Expected{x_j(t+1) | x_j(t) , x(t) }\cdot\left(1\pm\BigO{\frac{\sqrt{\log{n}}}{\sqrt{\Expected{x_j(t+1)| x(t) , x_j(t)}}}}\right) \enspace .
\end{equation*}
\end{lemma}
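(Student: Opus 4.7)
The plan is to first establish the claimed formula for the conditional expectation, and then apply Chernoff bounds using the key observation that $x_j(t+1)$ is a sum of mutually independent indicator variables. Fix a node $v$ and consider its single sample $u\in N(v)$ in the bit-propagation round $t+1$. Since the algorithm only updates $v$ based on its own random neighbour, the indicator $\mathbf{1}[v\in x_j(t+1)]$ is determined entirely by $v$'s own sample and is therefore independent across $v$.

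To compute $\Expected{x_j(t+1)\mid x(t),x_j(t)}$, I would split the sum $\sum_v \Probability{v\in x_j(t+1)}$ into two cases. For a node $v\in x_j(t)$ (already color $\Cj$ with bit set), $v$ remains in $x_j(t+1)$ iff either it samples a bit-set neighbour of color $\Cj$, or it samples a neighbour whose bit is not set (in which case $v$ keeps its colour and its bit stays \ttrue), giving probability $\tfrac{x_j(t)}{n}+\tfrac{n-x(t)}{n}$ (up to negligible self-loop adjustments). For a node $v\notin x_j(t)$, $v$ enters $x_j(t+1)$ iff it samples a bit-set neighbour of color $\Cj$, with probability $\tfrac{x_j(t)}{n}$. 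Summing,
\begin{equation*}
\Expected{x_j(t+1)\mid x(t),x_j(t)} = x_j(t)\!\left(\tfrac{x_j(t)}{n}+\tfrac{n-x(t)}{n}\right) + (n-x_j(t))\tfrac{x_j(t)}{n} = x_j(t) + \tfrac{n-x(t)}{n}\,x_j(t),
\end{equation*}
as claimed.

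For concentration, since $x_j(t+1)$ is a sum of $n$ independent $\{0,1\}$-valued random variables, standard multiplicative Chernoff bounds apply. Setting $\mu := \Expected{x_j(t+1)\mid x(t),x_j(t)}\geq x_j(t) = \BigOmega{\log n}$ and $\delta := c\sqrt{\log n/\mu}$ for a sufficiently large constant $c$, one obtains $\Probability{|x_j(t+1)-\mu|>\delta\mu}\leq 2e^{-\delta^2\mu/3}\leq n^{-2}$, which gives exactly the claimed deviation $\mu(1\pm\BigO{\sqrt{\log n}/\sqrt{\mu}})$ with high probability.

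The only real subtlety is making sure that the indicator variables really are mutually independent; this rests on the fact that the update rule in Algorithm~\ref{alg:memory} depends solely on the single neighbour $v$ samples, so the information needed by different nodes is drawn independently. The expected-value calculation is the main bookkeeping step — the trick is not to double-count the nodes in $x_j(t)$ that sample within $x_j(t)$ itself — while the tail bound is an immediate application of Chernoff given the hypothesis $x_j(t)=\BigOmega{\log n}$.
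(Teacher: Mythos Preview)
Your proposal is correct and follows essentially the same strategy as the paper: compute the conditional expectation of $x_j(t+1)$ and then apply multiplicative Chernoff bounds, using that each node's update depends only on its own independent sample. The only organisational difference is in the bookkeeping for the expectation: the paper first computes the conditional probability $\Probability{v\in\Cj(t+1)\mid \operatorname{bit}_v(t+1)=\ttrue}=x_j(t)/x(t)$ and then multiplies by (a concentrated value of) $x(t+1)$, whereas you sum the indicator probabilities directly by splitting on whether $v\in x_j(t)$. Your route is slightly more direct since it avoids the extra conditioning on $x(t+1)$, but both arguments yield the same expectation and the same Chernoff tail.
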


\begin{proof}
In the following, we will use $\operatorname{bit}_v(t)$ to denote the value of the bit of a node $v$ in a round $t$, where the value can be either \ttrue or \tfalse.
We consider the probability that $v$ has color \C{j} in round
$t+1$, given that $v$ has its bit set in round $t+1$. We have
\begin{align*}
\MoveEqLeft \Probability{v \in \C{j}(t+1) | \operatorname{bit}_v(t+1) = \ttrue , x_j(t) , x(t)} = {x_j(t)}/{x(t)} \enspace ,
\intertext{since}
\MoveEqLeft \Probability{v \in \C{j}(t+1) | \operatorname{bit}_v(t+1) = \ttrue,x_j(t), x(t)} \\
&= \frac{
\Probability{v \in \C{j}(t+1) \wedge \operatorname{bit}_v(t+1) = \ttrue| x_j(t), x(t)} }{\Probability{ \operatorname{bit}_v(t+1) = \ttrue | x_j(t), x(t) }} \\[0.5cm]
&= \frac{ \overbrace{\frac{x_j(t)}{n}\left(\frac{n-x(t)}{n}\right)}^{(i)} + \overbrace{\frac{x_j(t)}{n}}^{(ii)} }{ \underbrace{\frac{x(t)}{n}}_{(iii)} + \underbrace{{\left(1-\frac{x(t)}{n}\right)\frac{x(t)}{n} }}_{(iv)} } 
= \frac{ x_j(t) }{x(t)} \cdot \frac{1-\frac{x(t)}{n} + 1}{1 + 1 - \frac{x(t)}{n}} 
\enspace .
\end{align*}
In above equation, the probability for a node to have color \C{j}
and the bit set in round $t+1$ is computed as follows.
\begin{enumerate}[label={(\roman*)}]
\item is the probability that a node has color \C{j} and the bit set at time $t$ and selects a node without a bit set
\item is the probability that a node chooses another node which has color \C{j} and the bit set
\item is the probability for choosing a node with a set bit
\item is the probability for choosing a node without the bit set which selects another node with the bit set
\end{enumerate}

Consequently, the number of nodes which have color \C{j} in the next round has
expected value
$\mu = \Expected{x_j(t+1)|x(t+1) , x_j(t) , x(t)} = {x_j(t)\cdot x(t+1)}/{x(t)}$. 
We apply Chernoff bounds to $x_j(t+1)$ and obtain
\begin{equation*}
\Probability{\left.|x_j(t+1) - \mu| > 3\sqrt{\frac{\log{n}}{\mu}} \cdot \mu \right| x_j(t) , x(t) , x(t+1)} \leq n^{-2} \enspace .
\end{equation*}
Assuming $x(t)$ fulfills \autoref{lem:number-of-bits}, we have \cite{KSSV00} \begin{equation*}
x(t+1) = \Expected{x(t+1)|x(t)}\cdot\left(1\pm\BigO{{\sqrt{k\log{n}}}/{\sqrt{n}}}\right) \enspace , \end{equation*} and therefore we obtain the lemma.
\end{proof}

We are ready to prove \autoref{lem:relative-colors}.

\begin{proof}[Proof of \autoref{lem:relative-colors}]
Let $a_i = x_1(i)$ be a sequence of random variables for the number of nodes of
color \A which have their bit set in round $i$.
In the following proof, whenever we condition on $a_j$ or $x(j)$ for any $j$, we assume that they
are concentrated around their mean according to \autoref{lem:KSSV00}, \autoref{lem:bits-concentrated}, and \autoref{lem:colors-concentrated}. 

According to \autoref{lem:colors-concentrated} we know that
\begin{equation*}
\Expected{a_{i+1}|a_{i} , x(i+1) , x(i)} = \frac{x(i+1)}{x(i)} \cdot a_{i} 
\enspace .
\end{equation*}
Note that $\Expected{a_{i+1}|a_i} \geq a_i$.
Therefore we have
\begin{equation*}
\Probability{\left. a_{i+1} < \frac{x(i+1)}{x(i)} \cdot a_{i} \cdot \left( 1 - \frac{3\sqrt{\log{n}}}{\sqrt{a_{i}}} \right) \right| a_{i} , x(i-1) , x(i)} \leq n^{-2} \enspace .
\end{equation*}

The total number of bits set in the round $i+1$, given the total number of bits
in round $i$, is independent of the color distribution among these nodes in
round $i$, that is, for any $\beta \leq \gamma$ it holds for any $\alpha$ that
\begin{equation*}
\Probability{x(i+1) = \alpha | x_j(i) = \beta , x(i) = \gamma } = \Probability{x(i+1) = \alpha | x(i)= \gamma } \enspace .
\end{equation*}
We therefore have for any $\tau > i$
\begin{equation*}
\Probability{\left. a_{i+1} < \frac{x(i+1)}{x(i)} \cdot a_{i} \cdot \left( 1 - \frac{3\sqrt{\log{n}}}{\sqrt{a_{i}}} \right) \right| a_{i} , x(1) , \dots , x(\tau)} \leq n^{-2} \enspace .
\end{equation*}

\noindent The equation above means that the distribution of the colors among the nodes with the bit set
at time $i+1$, given $x(1) \dots x(i+1)$, is independent of the number of nodes with the bit set
at times $i+2, \dots, \tau$.

Recall that, given $a_1$, $a_{i} = \BigOmega{a_1}$ \whp and therefore
we have for given $a_1$, $a_{i}$, $x(i-1)$, $x(i)$, and a constant $\zeta$ \whp
\begin{equation} \label{eq:ind-1}
a_{i+1} \geq \frac{x(i+1)}{x(i)} \cdot a_{i} \cdot \left( 1 - \zeta\cdot\frac{\sqrt{\log{n}}}{\sqrt{a_1}} \right) \enspace .
\end{equation}
Define $T = \BigO{\log{(\ifrac{n}{a})} + \log\log{n}}$ such that $x(T) = n$ \whp
according to \cite{KSSV00}. We now show by induction that, given
$a_1$, $x(1),\dots,x(T)$, and a constant $\zeta$,
\begin{equation} \label{eq:zeta}
a_T \geq \frac{x(T)}{x(1)} \cdot a_1 \cdot \left(1 - \zeta\cdot\frac{\sqrt{\log{n}}}{\sqrt{a_1}} \right)^T
\end{equation}
\whp. The base case for round $t=1$ obviously holds. For the step from $t$ to
$t+1$ we use \eqref{eq:ind-1} as follows.
\begin{align*}
a_{t+1} &\stackrel{\eqref{eq:ind-1}}{\geq} \frac{x(t+1)}{x(t)} \cdot a_t \cdot \left(1 - \zeta \cdot 
\frac{\sqrt{\log{n}}}{\sqrt{a_1}}\right) \\ &
\stackrel{\text{IH}}{\geq} \frac{x(t+1)}{x(t)} \cdot \frac{x(t)}{x(1)} \cdot a_1 \cdot \left(1 - \zeta\cdot\frac{\sqrt{\log{n}}}{\sqrt{a_1}} \right)^t \cdot \left(1 - \zeta\cdot\frac{\sqrt{\log{n}}}{\sqrt{a_1}} \right) \\
&\geq \frac{x(t+1)}{x(1)} \cdot a_1 \cdot \left(1 - \zeta\cdot\frac{\sqrt{\log{n}}}{\sqrt{a_1}} \right)^{t+1}
\end{align*}
This concludes the induction. We apply the Bernoulli inequality to \eqref{eq:zeta} and obtain
\begin{equation} \label{eq:zeta2-xxx}
a_T \geq \frac{x(T)}{x(1)} \cdot a_1 \cdot \left(1 - \zeta\cdot\frac{T\cdot\sqrt{\log{n}}}{\sqrt{a_1}} \right) \enspace .
\end{equation}
We use the result from \autoref{lem:bits-concentrated} for $a_1$ in
\eqref{eq:zeta2-xxx} and obtain
\begin{align*}
a' &\geq \frac{n}{x(1)} \cdot a_1 \cdot \left(1 - \zeta\cdot\frac{T\cdot\sqrt{\log{n}}}{\sqrt{a_1}} \right) \\
&\geq \frac{n}{x(1)} \cdot \frac{a^2}{n}
	\cdot \underbrace{\left(1 - \zeta\cdot\frac{T\cdot\sqrt{\log{n}}}{\sqrt{a_1}} \right)}_{\text{\textsc{(i)}}}
	\cdot \underbrace{\left(1 - \frac{3\sqrt{\log{n}}\cdot\sqrt{n}}{a}\right)}_{\text{\textsc{(ii)}}} \enspace ,
\intertext{where the second expression in parentheses, \textsc{(ii)}, is
asymptotically dominated by the first one, \textsc{(i)}. Therefore, there is a $\zeta'$ such that}
a' &\geq \frac{a^2}{x(1)} \cdot \left(1 - \zeta'\cdot\frac{T\cdot\sqrt{\log{n}}}{\sqrt{a_1}} \right) \enspace . \numberthis \label{eq:bound-a}
\end{align*}

A similar upper bound can be computed for any \emph{large} color.
Let \C{j} be an arbitrary but fixed color and assume that $c_j \geq \sqrt{n}\cdot\log^2{n}$.
We have, by \autoref{lem:bits-concentrated}, that \whp 
\[x_j(1) = \frac{c_j^2}{n}  \left(1 \pm \zeta\cdot\frac{\sqrt{n\log{n}}}{c_j} \right) \enspace . \] 
By using analogous arguments as for color \A we obtain \whp
\begin{equation*}
 c_j' \leq \frac{c_j^2}{x(1)} \cdot \left(1 + \zeta'\cdot\frac{T\cdot\sqrt{\log{n}}}{\sqrt{x_j(1)}} \right) \enspace .  
\end{equation*}
If otherwise $c_j < \sqrt{n}\cdot\log^2{n}$, we have by \autoref{lem:bits-concentrated} that \whp
\[x_j(1) = \BigO{\log^4 n} \enspace .  \] 
We have \whp
\[x_j(t+1) \leq (1+\LittleO{1})\frac{x(t+1)}{x(t)} x_j(t) + \log^2 n \enspace .\]
By \autoref{lem:KSSV00} we have that ${x(t+1)}/{x(t)} \leq 2(1+\LittleO{1}) $ \whp.
Thus, since there are $T = 2(\log{k}+\log\log{n})$ many rounds, taking union bound, 
we have  \whp
\[
c_j' =\BigO{k^{2\log(1+\LittleO{1})}\cdot\log^{4+2\log(1+\LittleO{1})}{n}} \enspace .
\]
 Thus for any $\C{j}$ we have \whp
 \begin{equation}\label{eq:bound-b}
 c_j' \leq \frac{c_j^2}{x(1)} \cdot \left(1 + \zeta'\cdot\frac{T\cdot\sqrt{\log{n}}}{\sqrt{x_j(1)}} \right) +\BigO{k^{3}\cdot\log^{7}{n}}  \enspace .  
\end{equation}
 
Taking all contributions into consideration, we
observe that there always exists a constant $\zeta'$ such that \eqref{eq:bound-a} and
\eqref{eq:bound-b} are satisfied.
\end{proof}

We are now ready to put all pieces together and prove our main theorem, \autoref{thm:memory},
which is restated as follows.

\begin{restate}[thm:memory]
\theoremmemory
\end{restate}

\begin{proof}
Assume $x(1)$ is given and concentrated around its expected value.
Recall that in the statement of \autoref{thm:memory} we assume $a-b \geq z\cdot\sqrt{n \log^3{n}}$. 

For the following calculations, we assume that $b \geq \sqrt{n}\log^2 n$. 
Let $\delta > 0$ be a constant. We distinguish the following two cases.
\paragraph{Case 1: $a < \left(1+\delta\right) b$.}
Let
$T = 2(\log{k}+\log\log{n})$.
From the bounds on $a'$ and $b'$ from \autoref{lem:relative-colors} we obtain the following inequality, which holds \whp.
\begin{align*}
a' - b' &\geq \frac{a^2-b^2}{x(1)}- \frac{\zeta \cdot T\cdot\sqrt{\log{n}}}{x(1)}\cdot \left(\frac{a^2}{\sqrt{a_1}} + \frac{b^2}{\sqrt{b_1}}\right)\\
&\geq \frac{a^2-b^2}{x(1)} - \frac{2 \cdot \zeta \cdot T\cdot\sqrt{\log{n}}}{x(1)}\cdot \frac{a^2}{\sqrt{a_1}} \\
&\geq \frac{a-b}{x(1)} \cdot \left( \left(a+b\right) - \frac{2 \cdot \zeta \cdot T\cdot\sqrt{\log{n}}}{a-b}\cdot \frac{a^2}{\sqrt{a_1}}\right) \\
\intertext{(using $a_1 = \ifrac{a^2}{n}\cdot\left(1\pm \LittleO{1}\right)$ \whp according to \autoref{lem:bits-concentrated})}
&\geq \frac{a-b}{x(1)} \cdot\left(\left(a+b\right) - \frac{2\cdot\zeta \cdot T\cdot\sqrt{\log{n}}\cdot a^2\cdot\sqrt{n}}{\left(a-b\right)\cdot a\cdot\left(1-\LittleO{1}\right)} \right)
\intertext{(using $a-b \geq z\cdot\sqrt{n\log^3{n}}$)}
&\geq \frac{a-b}{x(1)} \cdot\left(\left(a+b\right) - \frac{2\cdot\zeta \cdot a}{z\cdot \left(1-\LittleO{1}\right)} \right)
\intertext{Now if $z$ is large enough, we obtain for a small positive constant $\varepsilon = \varepsilon(z)$ that}
a'-b' & \geq \left(a-b\right)\cdot\left(\frac{a\cdot \left(1 - \varepsilon\right) + b}{x(1)}\right) \enspace . \numberthis \label{eq:memory-result}
\end{align*}

We combine  the bound on $b'$ of \autoref{lem:relative-colors} with \eqref{eq:memory-result} and 
obtain \whp
\begin{align*}
\frac{a'-b'}{b'} & \geq (a-b)\cdot\left(\frac{a\cdot\left(1 - \varepsilon\right) + b}{x(1)}\right) \cdot \frac{x(1)}{b^2\cdot\left(1 + \LittleO{1}\right)}\\
&= \frac{a-b}{b}\cdot \left( \frac{a\cdot\left(1-\varepsilon\right)+b}{b\cdot\left(1 + \LittleO{1}\right)}\right)\\
&\geq \frac{a-b}{b}\cdot \left( \frac{b\cdot\left(2-\varepsilon\right)}{b\cdot\left(1 + \LittleO{1}\right)}\right)\\
&= \frac{a-b}{b}\cdot \left( 1+\varepsilon'\right) 
\intertext{where $\varepsilon' > 0$ is a positive constant. Let $a^{(i)}$ and $b^{(i)}$ denote the number of nodes of color \A and \B, respectively, after $i$ \cycle{s}. After $i = \log_{1+\varepsilon'}\left(a/(c_1-c_2)\right)$ \cycle{s} we have \whp}
\frac{a^{(i)}-b^{(i)}}{b^{(i)}} &\geq \frac{a-b}{b}\cdot \left( 1+\varepsilon'\right)^{\log_{1+\varepsilon'}\frac{a}{a-b}}\\
&= \frac{a-b}{b}\cdot\frac{a}{a-b}
\geq 1 \enspace .
\end{align*}
We therefore get after $i$ \cycle{s} that $a^{(i)}-b^{(i)} \geq b^{(i)}$ and thus ${a^{(i)}}/{b^{(i)}} \geq 2$.
\paragraph{Case 2: $a \geq \left(1+\delta\right) b$.}
We consider 
the ratio between $a'$ and $b'$ and show  a quadratic growth w.r.t. $a^2/b^2$. 
We apply \autoref{lem:bits-concentrated} and \autoref{lem:relative-colors} to derive

\begin{equation*}
\frac{a'}{b'} \geq \frac{\frac{a^2}{x(1)} \cdot \left(1 - \zeta\cdot\frac{\log^{\frac{3}{2}}{n}}{\sqrt{a_1}} \right) }{\frac{b^2}{x(1)} \cdot \left(1 + \zeta\cdot\frac{\log^{\frac{3}{2}}{n}}{\sqrt{b_1}} \right)}
= \frac{a^2}{b^2}\cdot \frac{1-\LittleO{1}}{1+\LittleO{1}}
\geq \frac{a^2}{b^2} \cdot \left(1 - \LittleO{1}\right),
\end{equation*}
where $\zeta$ is a suitable constant.

\paragraph{Putting everything together.}
Note that if $a < \left(1+\delta\right)b$ then after $i =
\log_{1+\epsilon'}(a/(a-b))$ \cycle{s} we have ${a^{(i)}}/{b^{(i)}} \geq 2$.
From here on the second case applies as long as  $b \geq \sqrt{n }\log^2 n$.
Observe that after $\BigO{\log\log{n}}$ \cycle{s}, every color except for \A
drops below $\sqrt{n}\cdot\log^2 n$.

Let \C{j} be an arbitrary but fixed color of size $c_j$. If $c_j$ is smaller
than $\sqrt{n}\cdot\log^2{n}$, we have by \autoref{lem:relative-colors} at the
end of the \BP sub-phase at most $\BigO{k^{3}\cdot\log^7{n}}$ nodes of color
\C{j}, \whp. Since we have $k \leq n^\epsilon$, in the next two-choices phase
this color will disappear with probability $1-1/n^{\BigOmega{1}}$. If \C{j}
does not disappear, the same argument applies, since $c_j\leq
\sqrt{n}\cdot\log^2 n$. Therefore, after a constant number of phases $\C{j}$
disappears \whp.

Thus, once \A is the only color having a support of more than $\sqrt{n}\cdot\log^2 n$,
all other colors will vanish \whp after additional $\BigO{T}$ rounds and \A
will be the only remaining color. This concludes the proof.
\end{proof}

\paragraph{Room for Improvement.}

The bound on the  plurality consensus time can be further improved to  $\BigO{\left(\log(c_1/\left(c_1-c_2\right))+\log\log{n}\right)\cdot\log{k}}$, which is of interest for cases where $k=\LittleO{\log n}$.
This can be achieved by having shorter \BP sub-phases in which not all nodes but a large fraction of nodes set their bit.  

\section{Analysis of the Asynchronous Algorithm}\label{sect:asynchronous-analysis}

\newcommand\var[1]{\ensuremath{\text{\normalfont\texttt{#1}}}\xspace}

\newcommand{\tp}[2]{\ensuremath{\tau_{\text{\textsc{\MakeLowercase{#1}}}#2}}\xspace}

\def\tSG#1{\tp{SG}{#1}}
\def\tBP#1{\tp{BP}{#1}}
\def\tTC{\tp{TC}{}}
\def\tMETA#1{\tp{End}{#1}}

\def\tFIN{\tp{T}{}}
\def\tJUMP{\tp{jump}{}}
\def\tSET{\tp{}{set}}
\def\tM#1{\tp{M}{#1}}
\def\bulk{\ensuremath{\mathcal{S}}\xspace}

We now introduce our asynchronous protocol to solve plurality consensus. In the
sequential asynchronous model we assume that a sequence of discrete time steps
is given, where at each time step one node is chosen uniformly at random to
perform its tick. Recall that the key to the speed of the synchronous algorithm
(\onebit) is the combination of the two-choice process with an information
dissemination process. However, this interweaving of these processes requires
that the nodes execute the sub-phases simultaneously. While this is trivially
the case in the synchronous setting, it is extremely unlikely in the
asynchronous setting, since the numbers of ticks of different nodes may differ
by up to $\BigO{\log n}$. Therefore, any attempt to reach full synchronization
is futile if one aims for a run time of $\BigO{\log n}$.

To overcome this restriction, we adopt the following weaker notion of
synchronicity. At any time we only require a $(1-\LittleO{1})$ fraction of the
nodes to be \emph{almost synchronous}. This relaxes full synchronicity in three
ways: First, nodes are only almost synchronous, meaning that for any two nodes
their working times may differ by up to $\Delta = \BigTheta{\log n/\log \log
n}$. Secondly, we allow $\LittleO{n}$ nodes to be poorly synchronized. Finally,
we require this to hold only \whp.

The above notion does not require the nodes to synchronize actively per se,
since their number of ticks is to some extent concentrated even without active
synchronization. However, it turns out that without synchronizing perpetually,
the number of poorly synchronized nodes in each phase will become larger than
the initial bias towards the plurality opinion $c_1-c_2$ and could therefore
influence the consensus significantly. We thus actively synchronize nodes at
the end of each phase to decrease the fraction of poorly synchronized nodes
such that their number is in $\LittleO{c_1 - c_2}$, resulting in a negligible
influence of those nodes.

Once several technical challenges are resolved, the resulting weak
synchronicity allows us to reuse the high-level structure of the synchronous
algorithm (\onebit). As in the synchronous case, the asynchronous protocol
consists of one \TC sub-phase and one \BP sub-phase, the latter of which
propagates the choices of the \TC phase to all nodes in the network. In
addition to these sub-phases we have a third sub-phase in which we synchronize
nodes.

After executing the first two sub-phases, the relative difference between $\A$
and any opinion $\C{j}\neq A$ increases quadratically and thus we only require
$\BigO{\log \log n}$ such phases. Each of the sub-phases has a length of
$\BigO{\log n / \log \log n}$, amounting to a total run-time of $\BigO{\log
n}$. While superficially the asynchronous version looks very similar to the
synchronous protocol (\onebit), the analysis differs greatly from the
synchronous case, in both approach and technical execution.

\subsection{The Asynchronous Protocol}

\begin{float}[t!]
\begin{algorithm}[H]
\def\toe{\ensuremath{\ifrac{T}{10}}}
\SetAlgoVlined
\SetKw{KwLet}{let}
\SetKwProg{Algorithm}{Algorithm}{ (Part 1)}{end}
\SetKwFunction{Asynchronous}{asynchronous}
\SetKwFunction{xSamples}{samples}
\SetKwFunction{xMedian}{median}
\SetKwFunction{xColor}{color}
\SetKwFunction{xTemp}{intermediate}
\SetKwFunction{xBit}{bit}
\SetKwFunction{xWorkingTime}{workingtime}
\SetKwFunction{xRealTime}{realtime}
\def\Samples#1{\xSamples{\ensuremath{#1}}}
\def\Median#1{\xMedian{\ensuremath{#1}}}
\def\Color#1{\xColor{\ensuremath{#1}}}
\def\Temp#1{\xTemp{\ensuremath{#1}}}
\def\Bit#1{\xBit{\ensuremath{#1}}}
\def\WorkingTime#1{\xWorkingTime{\ensuremath{#1}}}
\def\RealTime#1{\xRealTime{\ensuremath{#1}}}
\SetKw{KwLet}{let}
\SetKw{KwAnd}{and}
\SetKw{KwNext}{continue with}
\Algorithm{\Asynchronous{node $v$}}{
\KwLet $T = \kappa \cdot\log{n}/\log\log{n}$\;
\KwLet $t = \WorkingTime{v} \mod T$\;
\begin{minipage}[t]{0.4\textwidth}
\uIf{$t = \toe$}{
	\KwLet $u_1, u_2 \in N(v)$ uniformly at random\;
	\eIf{$\Color{u_1} = \Color{u_2}$}{
		$\Temp{v} \gets \Color{u_1}$\;
	}{
		$\Temp{v} \gets \tnull$\;
	}
}
\uElseIf{$t = 2\cdot\toe$}{
	\eIf{$\Temp{v} \neq \tnull$}{
		$\Color{v} \gets \Temp{v}$\;
		$\Bit{v} \gets \ttrue$\;
	}{
		$\Bit{v} \gets \tfalse$\;
	}
}
\uElseIf{$t \in \left[ 3\cdot\toe,~7\cdot\toe \right]$}{
	\If{$\Bit{v} = \tfalse$}{
		\KwLet $ u \in N(v)$ uniformly at random\;
		\If{$\Bit{u} = \ttrue$}{
			$\Bit{v} \gets \ttrue$ \smash{asdf}\;
			\parbox[t][0pt][t]{0pt}{ \vspace{-5pt} \hspace{-49pt}\colorbox{white}{\vdots}}
			$\Color{v} \gets \Color{u}$\;
		}
	}
}
\end{minipage}\hfill
\begin{minipage}[t]{0.5\textwidth}
\vdots
\uElseIf{$t \in \left[8\cdot\toe,~9.5 \cdot\toe \right]$}{
	increase all values in $\Samples{v}$ by $1$\;
	\If{$t \in \left[8\cdot\toe,~8\cdot\toe + \log^3\log{n}\right]$}{
		\KwLet $ u \in N(v)$ uniformly at random\;
		$ \Samples{v} \gets \Samples{v} \cup  \left\{\RealTime{u}\right\}$\;
	}
	\If{$t = 9.5 \cdot\toe$ \KwAnd $\Samples{v} \neq \emptyset$}{
		$\WorkingTime{v} \gets \Median{ \Samples{v} }$\;
		$\Samples{v} \gets \emptyset$\;
	}
}
\Else{
	do nothing\;
}

\bigskip

$\RealTime{v} \gets \RealTime{v} + 1$\;
$\WorkingTime{v} \gets \WorkingTime{v} + 1$\;
\If{$\WorkingTime{v} \geq \kappa \cdot \ell \cdot \log{n}$}{
	 \KwNext \autoref{alg:endgame}\;
}
\end{minipage}
}%algorithm
\caption{Part 1 of the asynchronous protocol to solve plurality consensus. Both
variables \var{realtime} and \var{workingtime} are initialized to $0$, and
\var{samples} is initially the empty set. The variables $\kappa$ and $\ell$
denote large constants. The goal of the algorithm is to increase the plurality
opinion \A such that $a \geq (1-\teps{Part1})\cdot n$ for a small constant
\teps{Part1}.}
\label{alg:asynchronous}
\end{algorithm}

\begin{figure}[H] \centering
\includegraphics{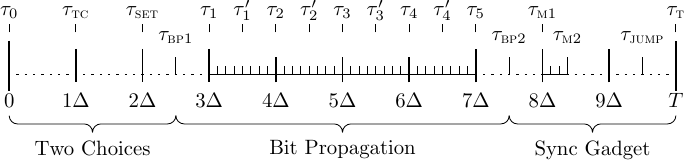} \vspace{-\baselineskip}
\caption{graphical representation of one phase of \autoref{alg:asynchronous}.
Each phase consists of $T = 10 \cdot \Delta$ ticks.}
\label{fig:asynchronous}
\end{figure}
\end{float}

Our asynchronous protocol consists of two parts, Part~1 defined in
\autoref{alg:asynchronous} later in this section and Part~2 defined in
\autoref{alg:endgame} in \autoref{sec:endgame}. In these formal definitions, we
specify the operations that each node performs when selected to tick. The goal
of the first part is to increase the number of nodes of color \A to at least $a
\geq \left(1 - \teps{Part1}\right)\cdot n$ for some small constant
\teps{Part1}. Once the execution of the first part has finished, the nodes
execute a simple two-choices algorithm in an asynchronous manner. We will show
that after the second part, \A wins \whp. Our main contribution is the analysis
of the first part. For the sake of completeness, we formally analyze the second
part in \autoref{sec:endgame}.

In contrast to the formal definitions, it is more convenient and instructive to
represent the algorithm executed by each node in a graphical way. This
graphical representation for a single phase of the first part is shown in
\autoref{fig:asynchronous}. In this graphical representation, the instructions
are drawn on a line from left to right, starting with the first instruction at
the left endpoint.

As in the synchronous case, the asynchronous algorithm operates in multiple
phases. Each of these phases is split into three sub-phases. Each sub-phase consists of multiple blocks of length $\Delta$ each.
During these sub-phases, according to \autoref{alg:asynchronous}, there are
multiple blocks of instructions where nodes for a long time literally \emph{do nothing}. These
do-nothing-blocks are used, in combination with the following result on synchronicity, to
ensure that a large fraction of nodes executes critical instructions at almost
the same time. That is, for a large fraction of nodes we will show that these
nodes execute instructions as if they were bulk synchronized, which they
clearly are not.

The first phase is the \TC sub-phase, which consists of two instructions, the
\TC step and the commit step. In the \TC step, every node samples two neighbors
uniformly at random. If and only if these neighbors' colors coincide, the node
sets an \var{intermediate} color to the neighbors' colors. In the commit step,
nodes change their color if they have their intermediate color set and then set
their bit accordingly. The second phase is the \BP sub-phase, which closely
resembles the synchronous counter part. Finally, in the third phase, all nodes
execute the so-called \emph{\SG}. In this gadget, nodes adjust their
\emph{working time} in order to synchronize. Our perpetual synchronization
mechanism is described after the following definitions.

For the analysis of the asynchronous algorithm we will use the following
notation and definitions.

\paragraph{Definitions.}
Let $\kappa$ and $\ell$ denote sufficiently large positive constants. We refer
to a series of $n$ consecutive time steps as a \emph{period}, and we combine
$T=\kappa\cdot\log{n}/\log\log{n}$ periods to a \emph{phase}. The first part of
the asynchronous protocol consists of $\ell\cdot\log\log{n}$ phases.
Intuitively, a period is the number of time steps during which each node ticks
in expectation once. We define a \emph{reference point} $\tau$ to be a time
step which marks the end of a period $\tau$. In particular, at reference point
$\tau$ there have been $\tau \cdot n$ time steps, and each node has ticked in
expectation $\tau$ times.

{ \em \begin{itemize}
\item
Let $T_v(t)$ denote the random variable for the \emph{real time}, the number
of ticks of node $v$ after the first $t \cdot n$ time steps. That is, $T_v(t)$
denotes the number of times $v$ was scheduled during the first $t\cdot n$
ticks.

\item
Let $T'_v(t)$ denote the random variable for the \emph{working time}, the
current instruction counter of node $v$ after the first $t \cdot n$ time
steps. Note that $T'_v(t)$ can differ from $T_v(t)$ since the working time
is adjusted with the goal of synchronization in \autoref{alg:asynchronous}.
\end{itemize}
}
At the beginning of the algorithm, both, the real time and the working time are
initialized to~$0$. Since at each time step one node is chosen to tick
independently and uniformly at random, $T_v(\tau)$ has a binomial distribution
$T_v(\tau) \sim B(\tau\cdot n,~\ifrac{1}{n})$ with expected value
$\Expected{T_v(\tau)} = \tau$.
It will prove convenient to regard a reference point as
the one instruction in the algorithm which would be executed in the
corresponding period if every node ticked exactly once in every period.

\paragraph{Weak Perpetual Synchronization.}

In the asynchronous algorithm, when a node is selected to tick, all operations
are performed based on the node's current working time. In contrast, the real
time of a node is used to always the total number of ticks performed so far by
this node. In the \SG, the working time $T'_v$ of a node $v$, denoted as
\var{workingtime} in \autoref{alg:asynchronous}, is adjusted as follows.

The \SG consists of a sampling sub-phase $[\tM1, \tM2]$ and a jump step \tJUMP.
The sampling sub-phase of the \SG consists of $\log^3 \log n$ ticks. During
these ticks, every node samples a neighbor uniformly at random and collects the
real time $T_u$ of the sampled neighbor $u$. Additionally, the node increments
all real times sampled so far by $1$ until the jump step is executed. At the
jump step, the node sets its working time to the median of the samples.

During the entire phase, according to \autoref{alg:asynchronous}, there are
multiple blocks of instructions where nodes literally \emph{do nothing}. These
blocks are used, in combination with the following result on synchronicity, to
ensure that a large fraction of nodes executes critical instructions at almost
the same time. That is, for a large fraction of nodes we will show that these
nodes execute instructions as if they were bulk synchronized, which they
clearly are not.

\subsection{The Key Lemmas}

The use of the \SG and the following definition of $\Delta$-closeness allow us
to show \autoref{prop:bulk} which forms the basis for our adaption of the
synchronous protocol to the asynchronous setting.

\begin{definition} \label{def:delta-close}
We say a node is $\Delta$-close to a reference point $\tau$ w.r.t.\ the real
time $T_v$ or the working time $T'_v$, if $\left|T_v(\tau) - \tau\right| \leq
\Delta$ or $\left|T'_v(\tau) - \tau\right| \leq \Delta$, respectively. If we
say a node is $\Delta$-close without specifying a reference point, we mean that
it is $\Delta$-close to the expected number of ticks.
\end{definition}

\begin{proposition} \label{prop:bulk}
Let \bulk be set of \emph{synchronized nodes} that are $(\Delta/2)$-close
w.r.t.\ the working time throughout the entire process. \Whp,
$|\bulk| \geq n \cdot \left( 1 - \Exp{-8\log{n}/\log\log{n}}\right)$.
\end{proposition}
The proof idea is as follows.
We first observe that roughly $ n \cdot \left( 1 - \Exp{-\log{n}/\log^2\log{n}}\right)$
nodes are $(\Delta/16)$-close throughout the execution of the algorithm. 
As argued before, the resulting  number of poorly synchronized nodes is too large and could tip the balance.
Furthermore, we show, by careful induction, that thanks to the
perpetual synchronization in each phase, a large fraction $f=\left( 1 - \Exp{-9\log{n}/\log\log{n}}\right)$ of the nodes which
were  $(\Delta/2)$-close throughout the first $i$ phases, will 
remain $(\Delta/2)$-close in phase $i+1$:
(i) a fraction  $f$ of these nodes will tick equally often in each interval in this phase, up to an error of $\Delta/16$, 
   and (ii) among these nodes again a fraction $f$ will adapt their working time by selecting the median of a sample of nodes. That median will be $(\Delta/16)$-close.
   Accounting for numerous other sources of error we obtain overall $(\Delta/2)$-closeness for a large fraction of nodes.

Equipped with \autoref{prop:bulk} we analyze the \TC and \BP sub-phases.
\autoref{prop:two-choices} and \autoref{prop:bit-propagation} form the
asynchronous counter parts of \autoref{lem:bits-concentrated} and
\autoref{lem:relative-colors}, subject to a subtle difference: Instead of
describing the distribution of colors after every \TC and \BP sub-phase, we
restrict ourselves to the distribution of colors among the well-synchronized
nodes in \bulk. In fact, throughout the analysis, we assume for all other nodes
in $(V\setminus \mathcal{S})$ the worst-case. However, based on the \SG and
\autoref{prop:bulk}, their number is small enough such to prevent them from
tipping the balance.

Our next key-lemma is \autoref{prop:two-choices} which establishes that the number of nodes which pick up a bit for
color \C{j} is \whp concentrated around the expectation. 

Analogously to the synchronous case, we consider in the following definitions
and propositions an arbitrary but fixed phase of \autoref{alg:asynchronous}.
Let $\hat c_{j}(\tau)$ denote the number of nodes belonging to \bulk having color
\C{j} at reference point $\tau$, that is, at time step $\tau \cdot n$. Let
furthermore $x_j(\tau)$ denote the set of nodes belonging to \bulk having color
\C{j} and having their bit set at reference point $\tau$ and let finally
$x(\tau)=\sum_j x_j(\tau)$.

\begin{proposition} \label{prop:two-choices}
Assume $|\bulk| \geq n \cdot \left( 1 - \Exp{-8\log{n}/\log\log{n}}\right)$.
Let \C{j} be an arbitrary but fixed color. \Whp, the number of nodes in \bulk
having a bit set for color $\C{j}$ after the \TC sub-phase at reference point
\tBP{1} is bounded as follows.
\begin{align*}
x_1(\tBP{1}) &\geq\frac{ \hat c_j(\tau_0)^2}{n}\left(1 - \LittleO{1}\right) 
& \text{and}&&
x_i(\tBP{1}) &\leq\frac{\hat c_j(\tau_0)^2}{n}\left(1 + \LittleO{1}\right) + \BigO{n^{1-{14}/{\log\log n}}} \enspace .
\end{align*}
\end{proposition}

Building on the concentration of bits given by \autoref{prop:two-choices} at
\tBP{1}, the following proposition  bounds the number of nodes of each color
after the \BP sub-phase at \tBP{2}. As before, we only characterize those nodes
which are part of \bulk.
\begin{proposition} \label{prop:bit-propagation}
Assume $|\bulk| \geq n \cdot \left( 1 - \Exp{-8\log{n}/\log\log{n}}\right)$.
Let \C{j} be an arbitrary but fixed color. \Whp, the number of nodes in \bulk
of color $\C{j}$ after the \BP sub-phase is bounded as follows.
\begin{align*}
\hat c_1(\tp{BP}{2}) &\geq \frac{\hat c_1(\tau_0)^2}{ x(\tp{BP}{1}) } \cdot \left(1 - \LittleO{1} \right) & \text{and}&&
\hat c_j(\tp{BP}{2}) &\leq \frac{ \hat c_j(\tau_0)^2 }{ x(\tp{BP}{1}) } \cdot \left(1 + \LittleO{1} \right) + \BigO{n^{1-{4}/{\log\log n}}} \enspace .
\end{align*}
\end{proposition}
In the proof we analyze the \BP by the means of the Pólya urn p rocess. In
particular, we show that the fraction of nodes supporting each color \C{j}
remains concentrated throughout the \BP sub-phase. The proofs can be found in
\autoref{sec:proofas3}, \autoref{sec:proofas1}, and \autoref{sec:proofas2},
respectively.

\subsection{Concentration of the Clocks: Proof of \autoref{prop:bulk}}\label{sec:proofas3}

In the following we show that throughout the entire process there do not exist
nodes which perform more than $\BigO{\log{n}}$ ticks, \whp. 

\begin{observation} \label{obs:max-number-of-ticks}
For any reference point $\tau$ we have that the working time
of any node is bounded by the minimum and maximum real times, that is,
for all $u\in V$ and $\tau \in \mathbb{N}$ we have 
\begin{equation} \label{eq:bound-on-working-time}
T'_v(\tau)\in \left[\min_{u\in V} T_u(\tau), \max_{u\in V} T_u(\tau)\right] \enspace .
\end{equation}
Let $\mathfrak{T}$ denote the total number of time steps until all nodes have
completed the execution of Part~1 of the asynchronous protocol defined in
\autoref{alg:asynchronous} w.r.t.\ their working time. \Whp, we have
\begin{equation} \label{eq:time-until-completion}
\mathfrak{T} \leq \ifrac{3}{2}\cdot \kappa \cdot \ell \cdot n\log{n} \enspace .
\end{equation}
Furthermore, we have \whp that
\begin{align} \label{eq:max-number-of-ticks}
\max_{v\in V}\left\{T_v( \mathfrak{T}  )\right\} < 2\cdot\kappa\cdot\ell\cdot \log{n} && \text{ and } && \max_{v\in V}\left\{T'_v( \mathfrak{T}  )\right\} < 2\cdot\kappa\cdot\ell\cdot \log{n} \enspace .
\end{align}
\end{observation}
\begin{proof}[Proof Sketch]
The proof idea is the following. \autoref{eq:bound-on-working-time} follows
from the fact that at every tick the working time and the real time are
simultaneously increased by one, and whenever the working time is set to the
median of the sampled real times, which are also incremented upon each tick,
the property also holds. For the proof of \eqref{eq:time-until-completion} and
\eqref{eq:max-number-of-ticks}, observe that according to
\autoref{alg:asynchronous} a node completes the execution of the algorithm when
$T'_v$ reaches $\kappa \cdot \ell \cdot \log{n}$. The proof of
\eqref{eq:time-until-completion} and \eqref{eq:max-number-of-ticks} follows,
for $\kappa \cdot \ell$ large enough, from an application of Chernoff bounds to
$T_v(\mathfrak{T})$ and union bound over all nodes, where we use
\eqref{eq:bound-on-working-time} to show the second part of
\eqref{eq:max-number-of-ticks}.
\end{proof}

We proceed to show that \emph{most} nodes are \emph{almost synchronous} at
carefully chosen reference points. Intuitively, a huge fraction of nodes has a
number of ticks that is concentrated around the expected value and therefore
most nodes will execute instructions which are \emph{close} together. We
formalize this concept in the following lemma which is based on
\autoref{def:delta-close}. The lemma establishes in its first part that $n
\cdot \left( 1 - \Exp{-\BigTheta{\log{n}/\log^2\log{n}}}\right)$ nodes will be
$(\Delta/6)$-close w.r.t.\ the real time over the course of the algorithm.

In the second statement we consider shorter intervals of the length of a phase and
claim that a much larger number of nodes, to be specific, 
$n \cdot \left( 1 - \Exp{-{9\log{n}/\log\log{n}}}\right)$ nodes, will be selected 
to tick for the same number of times up to an error of $\Delta/16$.

\begin{lemma}\label{lem:how-many-alive}
Let $\Delta \geq c_\Delta \log n/\log \log n$, for some large enough constant $c_\Delta$.
Let $\tau$ be a reference point
with $\tau \leq c\cdot\log{n}$, and let $Y(\tau)$ be the random variable for
the number of nodes which are $(\Delta/16)$-close to $\tau$ w.r.t.~$T_v$. We have
\begin{equation*}
Y(\tau) \geq n \cdot \left( 1 - \Exp{-\BigOmega{\log{n}/\log^2\log{n}}}\right) \enspace .
\end{equation*}
Furthermore, consider an arbitrary interval consisting of $t$ consecutive  ticks.
Fix a subset $Y\subseteq V$ and let $Y'\subset Y$ be the subset of nodes  which 
receive at least $t/n - \Delta/16$ ticks and 
at most $t/n + \Delta/16$ ticks out of the $t$ ticks.
We have
\begin{equation*}
|Y'| \geq |Y| \cdot \left( 1 - \Exp{-10\log{n}/\log\log{n}}\right) -\tilde O(\sqrt{n})\enspace .
\end{equation*}
\end{lemma}

\begin{proof}
Let $\mathcal{E}_v(\tau)$ be the event that a node $v$ is $(\Delta/16)$-close to
$\tau$, that is,
\begin{align*}
\mathcal{E}_v(\tau) & = \big[\tau -\Delta/16 \leq\ T_v(\tau)\ \leq\ \tau +\Delta/16 \big] \enspace .
\intertext{We apply Chernoff bounds to $T_v(t)$ and obtain}
\Probability{\mathcal{E}_v(\tau)} & \geq 1 - \Exp{-\BigOmega{\frac{\log{n}}{\log^2\log{n}}}} \enspace , \label{eq:probability-synchronicity} \numberthis
\end{align*}
Let in the following $Y_v(\tau)$ be an indicator random variable for
a node $v$ and a reference point $\tau$ defined as
\begin{equation*}
Y_v(\tau) = \begin{cases}1, & \text{ if } \mathcal{E}_v(\tau) \enspace , \\ 0, & \text{otherwise.} \end{cases}
\end{equation*}
Summing up over all nodes gives us $Y(\tau) = \sum_{v \in V} Y_v(\tau)$. By
linearity of expectation, we have
$\Expected{Y(\tau)} \geq n\cdot\left(1 - \Exp{-\BigTheta{\log{n}/(\log^2\log{n})}}\right)$.
Note that the random variables $T_v(\tau)$, and therefore also the random
variables $Y_v(\tau)$, are not independent. We thus consider the process of
uncovering $Y_{v}(\tau)$ one node after the other in order to obtain the Doob
martingale of $Y(\tau)$ as follows. We define the sequence $Z_j(\tau)$ as
$Z_j(\tau) = \Expected{Y(\tau) \left| T_j(\tau),\dots, T_1(\tau) \right.}$
with $Z_0(\tau)=\Expected{Y(\tau)}$. We have 
\begin{align*}
\Expected{Z_j(\tau)\left|T_{j-1}(\tau),\dots, T_1(\tau)\right.} &= \Expected{\Expected{Y(\tau) \left| T_j(\tau),\dots, T_1(\tau) \right.} \left|T_{j-1}(\tau),\dots, T_1(\tau)\right.} \\
\intertext{which, applying the tower property, gives us that}
\Expected{Z_j(\tau)\left|T_{j-1}(\tau),\dots, T_1(\tau)\right.} & = \Expected{Y(\tau)\left|T_{j-1}(\tau),\dots, T_1(\tau)\right.}  = Z_{j-1}(\tau) \enspace .
\end{align*}
Therefore $Z_j(\tau)$ is indeed the Doob martingale of $Y(\tau)$.

According to \autoref{obs:max-number-of-ticks} each node ticks at
most $2c \cdot \log{n}$ times, that is, $|T_{j+1}(\tau) -T_j(\tau) |\leq 2 c\cdot \log n$.
This holds \whp in the original process $P$ and with probability $1$ in the
coupled process $P'$. Since at most $2c\cdot\log{n}$ of the random variables
$Y_{j+1}(\tau), \dots, Y_{n}(\tau)$ differ, we have
\begin{align*}
|Z_{j+1}(\tau)- Z_j(\tau)| &= \big|\Expected{Y_n(\tau) + \dots + Y_1(\tau) \left| T_{j+1}(\tau),\dots, T_1(\tau) \right.}\\
& \phantom{{}={}}- \Expected{Y_n(\tau) + \dots + Y_1(\tau) \left| T_j(\tau),\dots, T_1(\tau) \right.}\big|
\leq 2 c \cdot \log{n} \enspace .
\end{align*}
Applying the Azuma-Hoeffding bound to $Y(\tau) = \sum_{v \in V} Y_v(\tau)$
gives us
\begin{equation*}
\Probability{\left|Y(\tau) - \Expected{Y(\tau)}\right| \geq \sqrt{c^3\cdot n\cdot\log^3{n}}} \leq \Exp{-\frac{c^3\cdot n\cdot\log^3{n}}{2\cdot\sum_{j=1}^{n}(2c\cdot\log{n})^2}} \enspace ,
\end{equation*}
which for sufficiently large $c$ yields
$\left|Y(\tau)-\Expected{Y(\tau)}\right| \leq \sqrt{c^3\cdot n\cdot\log^3{n}}$  \whp.
Observe that $\sqrt{c^3\cdot n\cdot\log^3{n}} \leq n\cdot \Exp{-\BigTheta{\log{n}/\log^2\log{n}}}$. 
We finally conclude that, \whp, at least
$n\cdot\left(1-\Exp{-\BigTheta{\log{n}/\log^2\log{n}}}\right)$ nodes are
synchronous up to a deviation of at most $\Delta =
\BigTheta{\log{n}/\log\log{n}}$ ticks from the expected number of ticks at the
given reference point $\tau$.

We now turn to the second part of the statement. Recall that  $\Delta=c_\Delta
\log n/\log \log n$ and $c_\Delta$ is a large enough constant. Observe that, by
definition of our algorithm, $T = 10 \Delta$. The proof of the second part
follows in a similar way as before. We define an analogous event
$\mathcal{E}'_v(\tau_1)$ for node $v$ to hold, then the number of ticks it
receives $t/n \pm \Delta/16$ out of $t$ ticks. We have
\[ \Probability{\mathcal{E}'_v(\tau)}  \geq 1 - \Exp{-{\frac{10\log{n}}{\log\log{n}}}}. \]
Observe that this is bound is much stronger than  \eqref{eq:probability-synchronicity}.
Similarly, as before, $\left|Y'-\Expected{Y'}\right| \leq \sqrt{c^3\cdot n\cdot\log^3{n}}$ \whp. 
Thus,
\[|Y'(\tau_1)| \geq |Y| \cdot \left( 1 - \Exp{-10\log{n}/\log\log{n}}\right) -
\sqrt{c^3\cdot n\cdot\log^3{n}}  \]  yielding the claim.
\end{proof}

\begin{theorem}[Equation 10 from \cite{HR90}] \label{thm:bounds-binomial-distribution}
Let $Y=\sum_{i=1}^m Y_i$ be the sum of $m$ i.i.d.~random variables with
$\Probability{Y_i=1}=p$ and $\Probability{Y_i=0}=1-p$. We have for any
$\alpha \in (0,1)$ that
\begin{equation*}
\Probability{Y \geq \alpha \cdot m } \leq \left(\left(\frac{p}{\alpha} \right)^\alpha \left(\frac{1-p}{1-\alpha} \right)^{1-\alpha}\right)^m \enspace .
\end{equation*}
\end{theorem}

\noindent In the following we show that the median taken will be concentrated around the expected real time.
\begin{lemma}\label{cor:nicemedian}
The median real-time of a uniform sample of $\Omega(\log^2 \log n)$ nodes is
$(\Delta/16)$-close \whp at any reference point $\tau \leq \kappa \cdot \ell \cdot \log n$.
\end{lemma}
\begin{proof}
In this proof we assume for simplicity that the $c'' \log^2 \log n$ sampled
nodes are taken in one single step.  First, we show that the median of the
sampled times is close to the average of all (real) times, \whp. The median
real-time of the sample is no $(\Delta/16)$-close if at least half of the
sample contained nodes which were not $(\Delta/16)$-close. By
\autoref{lem:how-many-alive}, we know that for some constant $c>0$ there are
\whp at most \[ L=n\Exp{-c({\log n/\log^2 \log n}})\] nodes $u$ which are not
$(\Delta/16)$-close w.r.t.~$T_u$ during any point of the  execution of the
algorithm.
  
Let $\mathcal{G}$ be the set of these \emph{bad} nodes. Let $Z$ denote the
number of samples drawn which are bad. Thus, by
\autoref{thm:bounds-binomial-distribution} with parameters $\alpha=1/2$ and
$p=L/n$, we derive 
  \begin{align*}
  	\Probability{Z \geq \alpha c'' \log^2\log n/2 } &\leq \left(\left( 2p \right)^{1/2} \left( 2(1-p) \right)^{1/2}\right)^{c'' \log^2\log n} \leq  2^{c'' \log^2\log n}\cdot (p^{1/2}(1-p))^{c'' \log^2\log n} \\
  	&\leq \sqrt{n}\cdot (L/n)^{c'' \log^2\log n/2} = \sqrt{n} \cdot n^{-\frac{c\cdot c'' \log^2\log n}{2 \log^2\log n }} \leq 1/n^2,
  \end{align*} 
  for large enough $c''$.
\end{proof}

\begin{proof}[Proof of \autoref{prop:bulk}]
\def\tpl{\tp{l}{}}
\def\tpr{\tp{r}{}}
For every phase $s=\BigO{\log \log n}$, let $J_s$ be the set of nodes which are
\begin{enumerate}
\item $(5\Delta/16)$-close w.r.t.\ the working time at any reference point $\tau = s\cdot T $ and
\item $(\Delta/2)$-close w.r.t.\ the working time at any reference point in $[(s-1)\cdot T, s\cdot T]$.
\end{enumerate}
In the following, we show by induction that \whp
\[ |J_s| \geq n \left( 1 - T^2\cdot s\cdot \Exp{-9\log{n}/\log\log{n}}\right) \enspace . \]

For $s=0$ this holds trivially since $|J_0|=n$. Suppose the claims holds for
phase $s$ and consider phase $s+1$. We seek to show that the claim holds in the
interval $[s\cdot T, (s+1)\cdot T]$. Let $\tpl, \tpr$ with $\tpl < \tpr$ be an
arbitrary pair of reference points with $\tpl \geq s\cdot T$ and $\tpr \leq
(s+1) \cdot T$. Let furthermore $J' \subset J_s$ denote the set of nodes which
are selected to tick $ \tpr-\tpl \pm \Delta/16 $ times in any interval
$[\tpl,\tpr]$. By Part 2 of \autoref{lem:how-many-alive}, we have
\begin{equation} \label{eq:size-of-J'}
|J'| \geq |J_s|\left( 1 -\Exp{-9\log{n}/\log\log{n}}\right) \enspace .
\end{equation} 

Let $J'_s$ be the set of nodes which are selected $\tpr-\tpl \pm \Delta/16$
times to tick in \emph{every} interval $[ \tpl, \tpr ]$. Since there are at
most $T^2$ such intervals, we get by \eqref{eq:size-of-J'} that \whp
\[ |J'_s| \geq |J_s|\left( 1 -T^2\cdot\Exp{-9\log{n}/\log\log{n}}\right) \enspace . \]

Let $v$ be an arbitrary but fixed node. Let $\vartheta_v$ be the exact time
step at which $v$ jumps and observe that $\vartheta_v$ is a random variable.
Let furthermore $\tau_v$ denote the first reference point after time step
$\vartheta_v$, that is, $\tau_v = \ceil{\vartheta_v/n}$. Consider the number of
times $v$ is selected to tick in the interval of time steps $[\vartheta_v,
\tau_v\cdot n]$. By a standard balls-into-bins argument \cite{RS98}, we can
argue that \whp
\begin{equation}
| T(\tau_v) - T(\vartheta_v /n)| \leq \Delta/16 \enspace . \label{eq:deviation-before-jump}
\end{equation}

Let $\tau'$ be any reference point in $[\tau_v, (s+1)\cdot T]$.
Since the working time increases afterwards whenever $v$ is selected to tick, we have
\begin{equation}\label{eq:deviation}
T'_v(\tau')-T'_v(\vartheta_v/n)=T_v(\tau')-T_v(\vartheta_v/n) \enspace .
\end{equation}

We now show that every node $v\in J_s'$ jumps exactly once. Recall that \tJUMP
is the instruction at which every node executes the jump step. That is, if any
nodes has a working time of $s\cdot T +\tJUMP$, then that node jumps We claim
that every node $v \in J_s'$ must have jumped prior to $(s+1)\cdot T$, that is,
we have $\tau_v \leq (s+1)\cdot T$. To see this, assume that $v$ did not jump.
By
\eqref{eq:deviation},
 \begin{align*}
T'_v((s+1)\cdot T)&=T_v((s+1)\cdot T)-T_v(s\cdot T)+T'_v(s\cdot T) \\
&\geq (s+1)\cdot T-s\cdot T-\Delta/16 +T'_v(s\cdot T)\\
& \geq (s+1)\cdot T-s\cdot T-\Delta/16 + s \cdot T - 5\Delta/16\\
&> (s+1)\cdot T-\Delta/2 \geq s\cdot T+ \tJUMP \enspace  ,	
 \end{align*}
where the first inequality follows from the definition of $J'_s$ and the second
inequality follows from the induction hypothesis. The the above inequality
implies that $v$ must have executed the jump instruction and thus must have
jumped.

Symmetrically, we claim that
every node $v \in J_s'$ will jump at most once per phase \whp. It suffices to
show that no node of $J_s'$ jumps before reference point $\tau' := \tM2 +
\Delta/2  $, since, informally speaking, at reference point $\tau'  $ all nodes
of $J_s'$ will  have a real time exceeding \tM2 (similarly as before, this can
be shown using the definition of $J_s'$ and the induction hypothesis). Thus, by
\autoref{cor:nicemedian} and the due to the immense size of $J_s'$, node $v$
will set its working time to the median of sampled real times which will be
larger than \tM2. Node $v$ will not execute the jump instruction again in this
phase. To show this claim we need to show that $T'_v(  \tau' ) <s\cdot T +
\tJUMP$, which is true since
\eqref{eq:deviation},
 \begin{align*}
T'_v(  \tau' )&=T_v(  \tau'  )
-T_v(   s\cdot T)+T'_v( s\cdot T) \\
&\leq \tM2 +\Delta/2 +\Delta/16 +T'_v(s\cdot T)\\
& \leq \tM2 +\Delta/2+\Delta/16 + s \cdot T - 5\Delta/16\\
&\leq  (s+1)\cdot T-\Delta/2 =  s\cdot T + \tJUMP \enspace  ,	
 \end{align*}
where the first inequality follows from the definition of $J'_s$ and the second
inequality follows from the induction hypothesis. Thus, $v$ jumped at most
once. We therefore conclude that every node $v \in J_s'$ jumps exactly once.

We will now argue the following. For every $v \in J_s'$ chooses \whp
\begin{equation}\label{eq:deviation-after-jump}
|T'_v(\vartheta_v/n) - \vartheta_v /n| \leq  2\Delta/16 + 1 \enspace .
\end{equation}
To see this, first observe that, by \autoref{cor:nicemedian}, the median taken
from $\log^3\log n$ samples of the real time is  $(\Delta/6)$-close. Second, we
need to account for the fact that median is not taken directly, but rather over
time. If all samples were taken directly before jumping, then the median would
indeed be $(\Delta/6)$-close. However, since $v\in J_s'$, it holds that the
value of any sample is $(\Delta/6)$-close w.r.t. the value it would have if it
were sampled directly before $v$ jumps. Accounting for all errors, using
triangle inequality and that  $\tau_u=\ceil{\vartheta_v/n}$,
\eqref{eq:deviation-after-jump} follows.

We proceed by showing that after $v \in J'_s$ jumps its working-time
well-concentrated, that is,
\begin{equation}\label{eq:working-time-after-jump}
|T'_v(\tau')-\tau'|     \leq 5\Delta/16 \enspace ,
\end{equation}
for any reference point $\tau'$   in $[\tau_v, (s+1)\cdot T]$.
We have 
\begin{align*}
T'_v(\tau')&\stackrel{\eqref{eq:deviation}}{=}T'_v(\vartheta_v/n) + T_v(\tau')-T_v(\vartheta_v/n)\\
&\stackrel{\eqref{eq:deviation-after-jump}}{\leq} \vartheta_v /n + 2\Delta/16 + 1 +     T_v(\tau')-T_v(\vartheta_v/n) \\
&\stackrel{\eqref{eq:deviation-before-jump}}{\leq}   \vartheta_v /n + 2\Delta/16 + 1 +     T_v(\tau')-T_v(\tau_v)+\Delta/16 \\
&\stackrel{\text{def.\ }J_s'}{\leq}   \vartheta_v /n + 2\Delta/16 + 1 +     (  (\tau'-\tau_v)+ \Delta/16 )+ \Delta/16\\
&\stackrel{\text{def.\ }\tau_v}{\leq}   \tau_v + 1 + 2\Delta/16 + 1 +     (  (\tau'-\tau_v)+ \Delta/16 )+ \Delta/16\\
&\leq \tau'+ 5\Delta/16,    	
\end{align*}
Symmetrically, we have
\begin{align*}
T'_v(\tau')&\stackrel{\eqref{eq:deviation}}{=}T'_v(\vartheta_v/n) + T_v(\tau')-T_v(\vartheta_v/n)\\
&\stackrel{\eqref{eq:deviation-after-jump}}{\geq} \vartheta_v /n - 2\Delta/16 - 1 +     T_v(\tau')-T_v(\vartheta_v/n) \\
&\stackrel{\eqref{eq:deviation-before-jump}}{\geq}   \vartheta_v /n - 2\Delta/16 - 1 +     T_v(\tau')-T_v(\tau_v)-\Delta/16 \\
&\stackrel{\text{def.\ }J_s'}{\geq}   \vartheta_v /n - 2\Delta/16 - 1 +     (  (\tau'-\tau_v)+ \Delta/16 )- \Delta/16\\
&\stackrel{\text{def.\ }\tau_v}{\geq}   \tau_v - 1 - 2\Delta/16 - 1 +     (  (\tau'-\tau_v)+ \Delta/16 )- \Delta/16\\
&\geq \tau'- 5\Delta/16 \enspace ,    	
\end{align*}
This shows \eqref{eq:working-time-after-jump}. 
Define $J_{s+1}=J'_s$.
This shows that $v\in J_{s+1}$ is $(5\Delta/16)$-close at $(s+1)\cdot T$.
Furthermore,  at reference point $s\cdot T$, $v$ was, by induction hypothesis, $(5\Delta/16)$-close and, since $J_{s+1}=J_s'$, at every reference point
$\tau$ before $u$ jumped we can derive $|T'_v(\tau)- \tau| \leq  5\Delta/16 + \Delta/16 \leq \Delta/2 $.
Furthermore, \eqref{eq:working-time-after-jump} implies that  $v$ was also $(\Delta/2)$-close after jumping and thus $v$ was $\Delta/2$ at each reference point in $[s\cdot T, (s+1)\cdot T]$.

We now show that $|J_{s+1}|$ is large enough. Using the induction hypothesis, we have
\begin{align*}
	|J_{s+1}| = |J'_s| &\geq |J_s| \left( 1 - T^2\cdot \Exp{-9\log{n}/\log\log{n}}\right) \\
	&\geq  n  \left( 1 - sT^2\cdot \Exp{-9\log{n}/\log\log{n}}\right) \left( 1 - T^2\cdot \Exp{-9\log{n}/\log\log{n}}\right)\\
	&\geq n \left( 1 - (s+1)T^2\cdot \Exp{-9\log{n}/\log\log{n}}\right) \enspace .
\end{align*}
This finishes the induction step. Finally, observe that for any  $s=\BigO{\log \log n}$ we have  
\[ n \cdot \left( 1 - s\cdot T^2\cdot \Exp{-9\log{n}/\log\log{n}}\right)\geq n \left( 1 - \Exp{-8\log{n}/\log\log{n}}\right) \enspace . \qedhere \]
\end{proof}

\subsection{Analysis of the  \TC sub-phase: Proof of \autoref{prop:two-choices}}\label{sec:proofas1}

\begin{proof}[Proof of \autoref{prop:two-choices}]
Recall that \bulk is the set of nodes
$v$ that are $(\Delta/2)$-close w.r.t.\ $T'(v)$ throughout the entire process. 
By \autoref{prop:bulk}, 
$|\bulk| \geq n - \mathcal{E}$, with $\mathcal{E} \leq n\cdot\Exp{-8\log{n}/\log\log{n}})=n^{1-{8}/{\log\log n}}.$
When a node of \bulk samples two nodes, then by definition the working time of all nodes of \bulk is larger than 
$\tau_0$ and smaller than \tSET. 
Let $u$ be a node of \bulk. Then, $u$ samples at two nodes (that is, when its working time is \tTC{}), then its probability of sampling two nodes of color \C{j} with probability at least 
$( \hat c_j(\tau_0)/n)^2$ and at most $((\hat c_j(\tau_0)+\mathcal{E})/n)^2$. 

By Chernoff bounds, \[ x_1(\tBP{1}) \geq  |\bulk| \cdot   ( \hat c_j(\tau_0)/n)^2 - \sqrt{n }\log n \geq \frac{ \hat c_j(\tau_0)^2}{n}\left(1 - \LittleO{1}\right) \enspace ,\]
where we used the fact that all nodes of \bulk must have executed the instruction at \tSET at reference point \tBP{1}.

We now distinguish between two cases. If $\hat c_j(\tau_0) \leq n^{1-{7}/{\log\log n}}$ we have, 
$\hat c_j(\tau_0) + \mathcal{E} = \BigO{ n^{1-{7}/{\log\log n}} }$.
Thus, by Chernoff bounds, \whp
 \[x_j(\tBP{1}) \leq n\cdot ((\hat c_j(\tau_0) + \mathcal{E})/n)^2 + \sqrt{n }\log n =  \BigO{n^{1-{14}/{\log\log n}}} \enspace .\]

Otherwise,  $\hat c_j(\tBP{1}) > n^{1-{7}/{\log\log n}}$  and we have $\hat c_j(\tau_0) + \mathcal{E}=\hat c_j(\tau_0)(1 + \LittleO{1})$.
Thus, by Chernoff bounds, we obtain \whp that \[ x_j(\tBP{1})  \leq  n  \cdot ((\hat c_j(\tau_0) + \mathcal{E})/n)^2 
 = \hat c_j(\tau_0)^2/n \cdot (1 + \LittleO{1}) \enspace . \]
This finishes the proof. 
\end{proof}

\subsection{Analysis of the \BP Sub-Phase: Proof of \autoref{prop:bit-propagation}}\label{sec:proofas2}

We now focus on the analysis of the \BP sub-phase. Similar to the analysis of
the synchronous case, we first analyze the number of bits which are set during
the \BP sub-phase without taking their color into consideration. The following
lemma is based on the observation that the \BP can be modeled by a simple
asynchronous randomized-gossip-based information dissemination process.

\begin{lemma}\label{lem:number-of-bits-after-BP}
Consider an arbitrary but fixed phase and let $x(\tau)$ be the number of nodes
in \bulk which have a bit set at reference point $\tau$ in that phase. Assume
that $|\bulk| \geq n \cdot \left( 1 - \Exp{-8\log{n}/\log\log{n}}\right)$ and
that $x(\tp{BP}{1}) \geq n/(2k)$. Then we have $x(\tp{BP}{2}) = |\bulk|$ \whp.
\end{lemma}

\begin{proof} 
We split the proof into three parts, in each of which we will rely on the fact
that at each reference point the nodes of \bulk are $(\Delta/2)$-close. We
argue that \whp (i) $x(\tau_2') \geq n/2$, (ii) $x(\tau_4) \geq
|S|\cdot\left(1-n^{-2/\log\log{n}}\right)$, and (iii) $x(\tBP2) = |\bulk|$.

\paragraph{Part (i).}
To show the first part, we first consider a sequence of $\Delta$ periods from
$\tau_1'$ to $\tau_2'$. Recall that each period consists of $n$ consecutive
time steps. We will show by induction over $i \in [\tau_1', \tau_2')$ that
\begin{equation*}
x(i) \geq \min\left\{\frac{n}{2},~\frac{n}{2k}\cdot\left(1+\frac{1}{5}\right)^i\right\} \enspace .
\end{equation*}

Let $i$ be an arbitrary but fixed period in $ [\tau_1', \tau_2')$ and assume
that $x(i-1) < n/2$. Note that by definition of \bulk at any reference point
$\tau \in [\tau_1',\tau_2']$ all nodes of \bulk are in $[\tau_1,\tau_3]$. Let
$H(i) \subseteq \bulk$ be the set of nodes in \bulk which did not have their
bit set after period $i-1$. By assumption, $|H(i)|\geq |\bulk| - n/2 = n/2
\cdot\left(1-\LittleO{1}\right)$. Let furthermore $A(i)$ be the set of
\emph{active} nodes which tick in period $i$ at least once. By a standard
balls-into-bins arguments \cite{RS98}, we have that $|A(i)|$ has size at least
$n/2$ \whp. Observe that each node is equally likely to tick, independently of whether the bit is set or not.
Therefore, $A(i)$ and $H(i)$ are independent, and any node in $H(i)$ ticks at least once with
probability at least $n/2$, independently. Hence, $|A(i)\cap H(i)| \geq n/4
\cdot\left(1-\LittleO{1}\right) $ \whp, where the concentration follows from
Chernoff bounds.

For a node $v \in A(i)\cap H(i)$ in period $i$, we define $X_v$ to be the
indicator random variable for the event that $v$ sets the bit. Note that all
$X_v$ are independent and $\Probability{X_v = 1} \geq x(i-1)/n$. Let $X=\sum
X_i$. By Chernoff bounds, $X \geq |A(i) \cap H(i)| \cdot x(i-1)/n \cdot
\left(1-\LittleO{1}\right) \geq x(i-1)/5$ \whp. We therefore get that \whp 
\begin{equation*}
x(i) \geq x(i-1)+ X \geq x(i-1) \left(1+\frac{1}{5} \right) \stackrel{\text{IH}}{\geq} \frac{ n}{2k} \cdot \left(1+\frac{1}{5} \right)^i \enspace ,
\end{equation*}
which completes the induction. We now obtain, using ${\tau_2'-\tau_1'}\geq
4\log k$, that
\begin{equation*}
x(\tau_2')\geq \frac{n}{2k}\left(1+\frac{1}{5}\right)^{\tau_2'-\tau_1'} \geq \frac{n}{2k} \cdot k = n/2 \enspace .
\end{equation*}
This completes the proof of Part (i).

\paragraph{Part (ii).}
Let $H(\tau_2') \subseteq \bulk$ be the set of nodes in \bulk which do not have
a bit set at reference point $\tau_2'$. We consider an arbitrary but fixed node
$v \in H(\tau_2')$ at reference point $\tau_4$. Since $v$ is in \bulk and thus
$(\Delta/2)$-close at both, $\tau_2'$ and $\tau_4$, we observe that it ticked
at least $ \tau_4-\tau_2'-2\cdot\Delta/2 = \Delta/2$ times between time steps
$\tau_2' \cdot n$ and $\tau_4 \cdot n$ corresponding to these reference points.
The probability that the node $v$ never sampled a node with the bit set is thus
at most $2^{-\Delta/2}$. Hence, by using independence and Chernoff bounds, the
number of nodes remaining in $H(\tau_4)$ is, for $\Delta$ large enough, at most
$|\bulk| \cdot n^{-2/\log \log n}$ \whp.

\paragraph{Part (iii).}
As before, let $H(\tau_4) \subseteq \bulk$ be the set of nodes in \bulk which
do not have a bit set at reference point $\tau_4$. We again consider an
arbitrary but fixed node $v \in H(\tau_4)$. Since $v$ is in \bulk and thus
$(\Delta/2)$-close at both, $\tau_4$ and $\tBP2$, we observe that it performed
at least $ \tau_5-\tau_4' = \Delta/2$ \BP ticks. The probability that $v$ samples
in one of these ticks a node in \bulk without the bit set or that $v$ samples a node not in \bulk is at most
$n^{-2/\log\log{n}} + n^{-8/\log\log{n}} \leq n^{-1/\log\log{n}}$. Therefore, the probability
that this node never obtains the bit is at most
$\left(n^{-1/\log\log n}\right)^{ \Delta/2 } \leq n^{-\LittleOmega{1}}$.
From union bound we derive that all nodes in \bulk therefore have the bit set
at reference point $\tBP2$.
\end{proof}

In the following we analyze the individual colors during the \BP sub-phase. Our
main observation is that the \BP process can be modeled by so-called Pólya urns
\cite{JK77}. In this model, we are given an urn containing marbles of two
colors, black and white. In every step, one marble is drawn uniformly at random
from the urn. Its color is observed, the marble is returned to the urn and one
more marble of the same color is added. For any color, the ratio of marbles
with that given color over the total number of marbles is a martingale. We will
use this urn process to model the \BP sub-phase, which then can be
analyzed by means of martingale techniques. Formally, the Pólya urn process is
defined as follows.

\newcommand{\polya}[2]{\ensuremath{\operatorname{P\acute{o}lya}\left(#1,#2\right)}\xspace}
\begin{definition}[Pólya Urn Process]
Let \polya{\alpha_1}{\alpha_2} with $\alpha_1, \alpha_2 \in \mathbb{Z}^+_0$ be the following urn process. At the
beginning there are $\alpha_1$ black marbles and $\alpha_2$ white marbles in the urn. The process runs in multiple steps where
$\alpha_1(i)$ and $\alpha_2(i)$ denote the number of black and white marbles in the urn,
respectively, for every time step $i$. In every time step $i$, a black
marble is added with probability  $\alpha_1(i)/(\alpha_1(i)+\alpha_2(i))$, and with remaining
probability $\alpha_2(i)/(\alpha_1(i)+\alpha_2(i))$ a white marble is added.
\end{definition}

We now use this urn model to show our main result for the \BP sub-phase,
\autoref{prop:bit-propagation}. We start by performing a worst-case analysis
for color \A in order to give a lower bound on the number of nodes of color \A
after the \BP sub-phase. Similarly, we will upper bound any \emph{large} color
\C{j}. Then we will show that after each phase the gap between \A and \C{j}
grows quadratically. We will use bounds resulting from
\autoref{prop:two-choices} for the numbers of nodes with bits and their color distribution among \bulk. For
the worst case analysis, we will assume that any node which is not in \bulk has
color \C{j} and its bit set. We now give the formal proof.

\begin{proof}[Proof of \autoref{prop:bit-propagation}]
We consider an arbitrary but fixed \BP sub-phase which we model by \polya{\alpha_1}{\alpha_2}
as follows. Initially, we place for each node in \bulk of color \A which has its bit set
at reference point $\tp{BP}{1}$ a black marble in the urn, that is, $\alpha_1 =
x_1(\tp{BP}{1})$. Additionally, we add for each node in \bulk which has its bit set for
any color $\C{j} \neq \A$ a white marble in the urn. Finally, in order to
perform a worst-case analysis, we add a white marble for any node which is not
in \bulk, that is, we add an additional number of $|V \setminus \bulk|$ white
marbles. We therefore have $\alpha_1 + \alpha_2 = x(\tp{BP}{1}) + |V \setminus \bulk|$. We
now consider only those time steps of the \BP sub-phase, where a node in \bulk
without bit samples another node with bit. We couple these very steps with the
Pólya urn process, where we assume that a marble is added based on the adopted
color in the \BP process, that is, if a node newly adopts a bit for
color \A, we add a black marble, and if otherwise a node adopts a bit for color
$\C{j} \neq \A$, we add a white marble. For the worst-case analysis we assume
in the \BP process that all nodes in $V \setminus \bulk$ have a bit set for a
color $\C{j} \neq \A$ throughout the entire process. This corresponds to the
additional $|V \setminus \bulk|$ white marbles initially added to the urn.

As before, we will use the notation that $x(\tau)$ denotes the number of nodes
in \bulk which have a bit set at reference point $\tau$ and $x_j(\tau)$ denotes
the number of nodes in \bulk of color \C{j} which have a bit set at reference
point $\tau$. Let $M$ be a lower bound on $x(\tp{BP}{1})$, the number of bits
set at the beginning of the \BP sub-phase, and recall that according to the
proof of \autoref{prop:two-choices} we have \whp
\begin{equation} M \geq n/(2k) \enspace . \label{eq:lowerbound-bits} \end{equation}

We now consider the Pólya urn process. Let $F(i)$ be the fraction of black marbles in step $i$ of the Pólya urn
process. As mentioned before, this fraction of black marbles in the Pólya urn process
is a martingale. Observe furthermore that $|F(i) - F(i-1)| \leq 1/M$
throughout the entire urn process. Let $\mathcal I$ be the last step of the Pólya urn
process and observe that $\mathcal I \leq n$. Applying Azuma's inequality to $F(i)$
for any $i \leq \mathcal I$ gives us
\begin{align*}
\Probability{|F(i) - F(1) | \geq \delta} &\leq 2\cdot \Exp{-\frac{\delta^2}{ 2\cdot\sum_{j=1}^{i}1/M^2}} \\
&\leq 2\cdot \Exp{-\frac{\delta^2\cdot M^2}{ 2\cdot i}} \enspace .
\intertext{We set $\delta = 4\cdot k\cdot\sqrt{\log{n}/n}$ and obtain using \eqref{eq:lowerbound-bits}}
\Probability{|F(i) - F(1)| \geq 4 \cdot k\cdot\sqrt{\log{n}/n}} &\leq 2\cdot \Exp{-\frac{ 2\cdot k^2 \cdot M^2 \cdot \log{n}}{ n\cdot i }} \\
&\leq 2\cdot \Exp{-2\cdot\log{n}} \enspace , \label{eq:fraction-deviation-bound} \numberthis
\end{align*}
where we used that $x(\tBP1) \geq n/(2k)$ \whp.

From the calculation above we see that \whp the fraction of black marbles in
the urn remains concentrated around the initial value. To derive a lower bound
on the absolute number of black marbles at the end of the process we first
bound $F(1)$. By \autoref{prop:bulk}, we have $|V\setminus\mathcal{S}| \leq
n^{1-8/\log\log n}$ and thus
\begin{align}
F(1) &\geq \frac{ x_1(\tp{BP}{1}) }{ x(\tp{BP}{1}) + |V \setminus \bulk| } 
\geq  \frac{ x_1(\tp{BP}{1}) }{ x(\tp{BP}{1}) + n^{1-8/\log\log n} } 
= \frac{ x_1(\tp{BP}{1}) }{ x(\tp{BP}{1}) } \cdot \left(1 - \LittleO{1}\right)
\intertext{Using \eqref{eq:fraction-deviation-bound}, we get for the end
of the \BP sub-phase that at reference point \tp{BP}{2} \whp}
F(\mathcal I) & \geq F(1) -  4 \cdot k\cdot\sqrt{\log{n}/n} = \frac{x_1(\tp{BP}{1})}{x(\tBP1)} \cdot (1 - \LittleO{1}) - 4\cdot n^{1/\log\log n} \sqrt{\log n/ n} \notag\\
&=\frac{x_1(\tp{BP}{1})}{x(\tBP1)} \cdot (1 - \LittleO{1}) 
  \enspace ,\notag
\end{align}
where we used that $x_1(\tBP1) \geq n/(2k^2) \geq n^{1-3/\log\log n}$ \whp and $x(\tBP) \leq n$.
Hence,
\begin{align}\label{eq:bounds-for-a}
	x_1(\tBP{2})  \geq x(\tBP{2}) \frac{x_1(\tp{BP}{1})}{x(\tBP1)} \cdot (1 - \LittleO{1}) \enspace .
\end{align}  
It remains to establish an upper bound on $x_j(\tBP2)$ for every other large
color $\C{j} \neq \A$. We will use a symmetric argument. Let $\C{j} \neq A$ be
an arbitrary but fixed color and let $F'(i)$ be the fraction of black marbles
in another Pólya urn process which we use to bound the size of color \C{j}. As
before, we use the black marbles to represent \C{j}, the color under
investigation, and the white marbles to represent all other colors $\C{i} \neq
\C{j}$. For the worst case analysis, we again assume that all nodes of $V
\setminus \bulk$ have their bit set for color \C{j}. We apply a similar
computation as before and observe, now for color \C{j}, that
\begin{align*}
F'(1) &\leq \frac{ x_j(\tp{BP}{1}) + |V \setminus \bulk| }{ x(\tp{BP}{1}) + |V \setminus \bulk| } \leq
\frac{ x_j(\tp{BP}{1}) + |V \setminus \bulk| }{ x(\tp{BP}{1})  }  \\
&\leq
\frac{ x_j(\tp{BP}{1}) + n^{1-8/\log\log n} }{ x(\tp{BP}{1})  }   \\
&\leq
\frac{ x_j(\tp{BP}{1})  }{ x(\tp{BP}{1})  }   +
\frac{  n^{1-8/\log\log n} }{ n^{1-3/\log\log n}  }\\
&\leq \frac{ x_j(\tp{BP}{1}) }{ x(\tp{BP}{1}) }  + n^{-5/\log\log n}\enspace .
\intertext{Again using \eqref{eq:fraction-deviation-bound}, we get \whp}
F'(\mathcal I) & \leq F'(1) + 4 \cdot k\cdot\sqrt{\log{n}/n} =
 \frac{x_1(\tp{BP}{1})}{x(\tBP1)}  + n^{-5/\log\log n}\enspace  + n^{-1/3}
\leq \frac{x_1(\tp{BP}{1})}{x(\tBP1)}  + 2n^{-5/\log\log n} \enspace .
\intertext{Thus, using that $x(\tBP2)/x(\tBP2)\leq 2k$ \whp we get}
x_j(\tp{BP}{2}) &\leq x(\tp{BP}{2}) \cdot \frac{x_j(\tp{BP}{1})}{x(\tBP1)}  + 2n^{-1/\log\log n} \cdot 2n^{-5/\log\log n}   = x(\tp{BP}{2}) \cdot \frac{x_j(\tp{BP}{1})}{x(\tBP1)}  + 4n^{-5/\log\log n}  \enspace .
\intertext{Furthermore, from the calculation above and \eqref{eq:bounds-for-a} we obtain for all \C{j} that \whp}
x_j(\tBP{2}) &= x_j(\tp{BP}{1}) \cdot \frac{ x(\tp{BP}{2}) }{ x(\tp{BP}{1}) } \cdot \left(1 \pm \LittleO{1}\right)  + \BigO{n^{-{5}/{\log\log n}}} \enspace . 
\intertext{By \autoref{prop:two-choices}, we have that \whp}
x_j(\tp{BP}{1}) &= \frac{\hat c_j(\tSET)^2}{n}\left(1 \pm \LittleO{1} \right) + \BigO{n^{1-{5}/{\log\log n}}} \enspace .
\intertext{Moreover, by \autoref{lem:number-of-bits-after-BP} and \autoref{def:delta-close}, we have}
x(\tBP2) &\in[ n\cdot(1-\LittleO{1}), n] \enspace .
\intertext{Putting everything together, we derive that \whp }
x_j(\tp{BP}{2}) &= \frac{\hat c_j(\tau_0)^2}{x(\tp{BP}{1})}\left(1\pm \LittleO{1}\right) + \BigO{n^{1-{4}/{\log\log n}}} \enspace . \qedhere
\end{align*}
\end{proof}

\subsection{The Endgame: Taking $a$ from $(1-\teps{Part1})\cdot n$ to $n$} \label{sec:endgame}
\begin{float}[t!]
\begin{minipage}[b]{\textwidth-194.19pt-2em}
\begin{algorithm}[H]
\SetAlgoVlined
\SetKw{KwLet}{let}
\SetKwProg{Algorithm}{Algorithm}{ (Part 2)}{end}
\SetKwFunction{Asynchronous}{asynchronous}
\SetKwFunction{xSamples}{samples}
\SetKwFunction{xMedian}{median}
\SetKwFunction{xColor}{color}
\SetKwFunction{xTemp}{intermediate}
\SetKwFunction{xBit}{bit}
\SetKwFunction{xWorkingTime}{workingtime}
\SetKwFunction{xRealTime}{realtime}
\def\Samples#1{\xSamples{\ensuremath{#1}}}
\def\Median#1{\xMedian{\ensuremath{#1}}}
\def\Color#1{\xColor{\ensuremath{#1}}}
\def\Temp#1{\xTemp{\ensuremath{#1}}}
\def\Bit#1{\xBit{\ensuremath{#1}}}
\def\WorkingTime#1{\xWorkingTime{\ensuremath{#1}}}
\def\RealTime#1{\xRealTime{\ensuremath{#1}}}
\SetKw{KwLet}{let}
\Algorithm{\Asynchronous{node $v$}}{
%\uIf{$\RealTime{v} < \tMETA1$}{
%do nothing\;
%}
%\ElseIf{$\tMETA1 \leq \RealTime{v} \leq \tMETA4$}{
\If{$\tMETA1 \leq \WorkingTime{v} \leq \tMETA4$}{
\KwLet $u_1, u_2 \in N(v)$ uniformly at random\;
\If{$\Color{u_1} = \Color{u_2}$}{
$\Color{v} \gets \Color{u_1}$\;
}

}

$\WorkingTime{v} \gets \WorkingTime{v} + 1$\;
}%algorithm
\caption{Part 2 of the asynchronous protocol to solve plurality consensus. At
ticks in $[\tMETA0, \tMETA1]$, the nodes do not perform any action.}
\label{alg:endgame}
\end{algorithm}
\end{minipage}\hfill
\begin{minipage}{194.19pt}
\begin{figure}[H]
\includegraphics{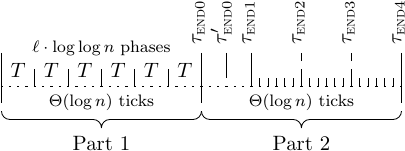}\medskip\stepcounter{figure}

\footnotesize\textbf{Figure 3:} graphical representation of the asynchronous protocol, showing Part 1 (\autoref{alg:asynchronous}) and Part 2 (\autoref{alg:endgame})
\end{figure}
\end{minipage}
\end{float}

In this section we analyze Part 2 of the asynchronous algorithm defined in
\autoref{alg:endgame}. As we will argue in the proof of \autoref{thm:async}, we
assume at for Part 2 that at $\tMETA1$ we have \whp $a = c_1 \geq
(1-\teps{Part1})\cdot n$, where \teps{Part1} is a small constant. Observe that
Part 2 is executed after Part 1 defined in \autoref{alg:asynchronous}.
Therefore, $\tMETA0 = \kappa\cdot\ell\cdot\log{n}$. We define the following
reference points for Part 2.
\begin{align*}
\tMETA0'&= \ifrac{3}{2} \cdot \tMETA0 &
\tMETA1&= 2 \cdot \tMETA0 &
\tMETA2&= 3 \cdot \tMETA0 &
\tMETA3&= 4 \cdot \tMETA0 &
\tMETA4&= 5 \cdot \tMETA0
\end{align*}

Observe that according to the definition of Part 2 given in
\autoref{alg:endgame} we only consider the working time (and not the real time).
As \autoref{obs:max-number-of-ticks} Part 1 suggests, the working times of the nodes 
are sandwiched by the real time of the nodes and thus if we bound the real times
of nodes, we get bounds on the working times as well.

 From
\autoref{obs:max-number-of-ticks} we obtain that all nodes have finished Part 1
at time step $\mathfrak{T}$ after at most $\mathfrak{T} \leq
\ifrac{3}{2}\cdot\kappa\cdot\ell\cdot\log{n} = \tMETA0'$ ticks w.r.t.\ the working time.
Furthermore, also due to \autoref{obs:max-number-of-ticks}, we have that no
node has yet reached \tMETA1 w.r.t.\ the working time at time step $\mathfrak{T}$. Therefore, we conclude
that all nodes have completed Part 1 before any node starts the two choices
process of Part 2 at reference point $\tMETA1$. More precisely, all nodes are \whp in $[\tMETA0, \tMETA1]$
before the first node passes \tMETA1. 

Since the real times are sandwiched, we get from Chernoff bounds that when the first node reaches \tMETA2, all nodes are \whp in
$[\tMETA1, \tMETA2]$ w.r.t.\ the real time. 
We assume that nodes which are in [\tMETA0, \tMETA4] respond, when
queried, with the color they last set, possibly in Part 1 of the algorithm.

The remainder of this section is structured as follows. In
\autoref{lem:neverback} we give a lower bound on the size of $\A$ throughout
the execution of \autoref{alg:two-choices}. This lower bound on $\A$ allows us
to show that the number of nodes having any other color $\C{j} \neq \A$
decreases quickly in expectation. This expected drop lets us apply a standard
drift theorem, \autoref{thm:drift-theorem}, to obtain a bound on the required
time until $\A$ prevails and all other colors vanish. Finally, this will allow
us to show that \whp all nodes have set their color to \A by the end of the the
asynchronous algorithm at \tMETA4. 

For the next two lemmas, we will use the following notation. Consider an
arbitrary but fixed time step $t$. Let $a_t$ and $b_t$ be the number of nodes
of color \A and \B at time step $t$, respectively.

\begin{lemma}\label{lem:neverback}
Assume that all nodes have a working time in $[\tMETA0,\tMETA4]$ during the
time steps in $[n\cdot\tMETA0', n\cdot\tMETA3]$. Assume furthermore
that at time step $t = n\cdot\tMETA0'$ we have $a_t \geq 19n/20$.
Then for any later tick $t'$ in $[n\cdot \tMETA0', n\cdot\tMETA4]$
we have $a_{t'} \geq 4n/5$, \whp.
\end{lemma}
\begin{proof}
To show the claim, we split Part 2 of the asynchronous algorithm
into phases of $n/100$ consecutive time steps each. Based on these phases, we show the claim by an induction
over every phase $i \in [100\cdot\tMETA0', 100\cdot\tMETA4]$.
By induction, we will show that we have \whp at time step $t_i = i \cdot 100 \cdot n$ \[
a_{t_i} \geq 17 n / 20 - i \cdot \sqrt{n} \cdot \log n \enspace .\]

Let now $i$ be an arbitrary but fixed phase. We distinguish two cases.
\paragraph{Case 1: $a_{t_{i}} \geq 18n/20$.} In this case the induction step holds trivially, since in the
worst case $a_{t_{i+1}} \geq a_{t_{i}} - \left( t_{i+1} - t_{i} \right) = 18n/20 - n/100 > 17n/20$.
\paragraph{Case 2: $a_{t_{i}} \leq 18n/20$.}
Observe that we have, by induction hypothesis, that
for every $t \in [t_i, t_{i+1}]$ that
$a_{t}\geq 17 n/20 - i\cdot \sqrt{n} \cdot \log n - n/100 \geq 16.5n/20$.
Furthermore, by assumption of the lemma we have $a_t \geq 19n/20$
at time step $t = n\cdot\tMETA0'$. We conclude that there are at
least $n/20$ nodes that have already passed \tMETA2 and changed their color
away from \A. However, by assumption of the lemma, these nodes have not yet
passed \tMETA4. These nodes can thus switch to \A if they are selected to tick
and choose two nodes of color \A.
 
We define the random variable $X_t$ as $1$ when a node of color $\C{j} \neq \A$
is selected to tick and changes its color to \A and as $-1$ if a node of color
\A is selected to tick and changes its color to any other color $\C{j} \neq
\A$. If neither of these cases apply, we define $X_t$ to be zero. 
Observe, that the probability for $X_t$ to be negative is maximized when
$b_t=n-a_t$. Therefore, we have
\begin{align*}
	X_t = 
	\begin{cases}
		1 & \text{with probability at least } 1/20 \cdot (16.5n/20)^2/n^2=272.25/20^3 \\
		-1 & \text{with probability at most } 19/20 \cdot (3.5n/20)^2/n^2=232.75/20^3 \\
		0 & \text{otherwise.}
	\end{cases}
\end{align*}
We now define $Y_t$ as $Y_t = \sum_{k\leq t} X_k$ and show that $Y_t$ is a
sub-martingale.
\begin{align*}
\Expected{Y_t\left|Y_{t-1},\dots, Y_1\right.} &= Y_{t-1} +\Expected{ X_t\left|Y_{t-1},\dots, Y_1\right.} \\
& \geq Y_{t-1} - 19/20 \cdot (3.5n/20)^2/n^2 +1/20 \cdot (16.5n/20)^2/n^2 \\
& \geq Y_{t-1} \enspace .
\intertext{Since $|Y_t-Y_{t-1}|\leq 1$, applying the Azuma-Hoeffding bound to $Y_t$ gives us}
\Probability{Y_{t_{i+1}} -Y_{t_i}  \geq - \sqrt{n} \cdot \log n} &\leq \Exp{-\frac{n \cdot \log^2 n}{ 2\cdot n / 100} } \enspace ,
\end{align*}
which yields that the induction steps hold \whp. This completes the proof.
\end{proof}

The following is a version of the multiplicative drift theorem which we will
use in \autoref{lem:endgame} to derive a bound on the number of required
periods until all nodes agree on one opinion.

\begin{theorem}[{\cite[Theorem 5]{LS16}}]\label{thm:drift-theorem}
Let $(X_t)_{t \in\mathbb{N}_0}$ be a Markov chain with state space $\mathcal{S}
\subseteq \{0\} \cup [1,\infty)$ and with $X_0=n$. Let $T$ be the random
variable that denotes the earliest point in time $t \geq 0$ such that $X_t =
0$. Assume that there is $\delta>0$ such that for all $x \in \mathcal{S}$
\begin{equation*}
\Expected{X_{t+1}\mid X_t=x} \leq (1-\delta)x \enspace .
\end{equation*}
Then
\begin{equation*}
\Probability{T > \left\lceil \frac{\log n +k}{|\log(1-\delta)|} \right\rceil}\leq e^{-k} \enspace .
\end{equation*}
\end{theorem}

\begin{lemma}\label{lem:endgame}
Assume that all nodes have a working time in $[\tMETA1,\tMETA4]$ during the
time steps in $[n\cdot\tMETA2, n\cdot\tMETA3]$. Furthermore assume that $a_{t}
\geq 4n/5$ for any time step $t\in [n\cdot\tMETA2,n\cdot\tMETA3]$. Then at
reference point \tMETA3 all nodes have opinion $\A$ \whp, that is,
$a_{\tMETA3}=n$.
\end{lemma}
\begin{proof}
W.l.o.g.\ let $b_t=n-a_t$. We have
\begin{align*}
\Expected{b_{t+1}-b_{t}| \mathcal{F}_t } &= (+1) \frac{a_t}{n} \cdot \frac{b_t^2}{n^2} + (-1) \frac{b_t}{n} \frac{a_t^2}{n^2} \\
& = \frac{a_t \cdot b_t(b_t - a_t)}{n^3} \leq \frac{a_t \cdot b_t\cdot (-3/5)n}{n^3} \leq   -\frac{\ifrac{4}{5}n \cdot b_t\cdot \ifrac{3}{5} n}{n^3}  \\
& = -\frac{12 \cdot b_t}{25 n} \enspace .
\end{align*}
Let $\delta = 12/(25n)$ and define $\Phi(x_t)=b_t$. Note that
$\Phi(x_{max}) \leq n$ and at any time step $t$ we have
$\Expected{\Phi(x_{t+1})| \Phi(x_t)} \leq \left(1-\delta\right)\Phi(x_t)$. Let
$\mathcal{T}$ be the first point in time where all nodes agree on color \A,
that is, $\mathcal{T}=\min\left\{ t \geq 0\colon \Phi(x_t)=0\right\}$. We
derive from \autoref{thm:drift-theorem} with parameters $\delta$ and 
$k=5\log n$ that 
$\Probability{\mathcal{T} \geq 20 /\delta\cdot\ln n} \leq n^{-5}$, where we
used the Taylor series approximation for $\log(1-\delta)$. Since
$\tMETA3-\tMETA2 \geq 20 /\delta\ln n$, the claim follows.
\end{proof}

\subsection{Putting Everything Together: Proof of \autoref{thm:async}}

We use \autoref{prop:bit-propagation} (which builds on \autoref{prop:two-choices}) and \autoref{lem:endgame}
to show \autoref{thm:async}, which is restated as follows.
\begin{restate}[thm:async]
\theoremasync
\end{restate} 
\begin{proof}
The first part of the proof is analogous to Case 2 of the proof of the synchronous version, \autoref{thm:memory}.
By \autoref{prop:bit-propagation} we have 
\[ x_j(\tBP{2}) =  \frac{ \hat c_j(\tau_0)^2}{ x(\tp{BP}{1}) }  \cdot \left(1 \pm \LittleO{1}\right)  + \BigO{n^{1-{4}/{\log\log n}}}  \enspace . \]
Observe that due to the definition of $x_j$ and \bulk, we have $x_j(\tFIN)=x_j(\tBP{2})$.
Furthermore, note that $\hat c_1(\tau_0) \geq n/k \geq n^{1-1/\log\log n}$ and hence
\[ 
\frac{\hat c_1(\tau_0)^2}{ x(\tp{BP}{1}) }  \geq n^{1-2/\log\log n} = \omega \left(   n^{1-{4}/{\log\log n}}     \right)
\]
Let $a':= \hat c_1(\tau_0+T)$ the nodes of color \A belonging to \bulk at the 
the beginning of the next round.
Define $b'$ analogously for color \B.
We consider 
the ratio between and show a quadratic growth w.r.t.\ $\hat c_1(\tau_0)^2/\hat c_2(\tau_0+T)^2$. 
We derive
\begin{equation*}
\frac{a'}{b'} \geq \frac{\frac{\hat c_1(\tau_0)^2}{ x(\tp{BP}{1}) } \cdot \left(1 - o(1) \right) }{\frac{\hat c_2(\tau_0)^2}{x(\tTC{})} \cdot \left(1 + o(1)) + \BigO{n^{1-{4}/{\log\log n}}}    \right)}
\geq \frac{\hat c_1(\tau_0)^2}{\hat c_2(\tau_0)^2} \cdot \left(1 - \LittleO{1}\right).
\end{equation*}
Hence, for sufficiently large constant $\ell$, we have after
$\ell\cdot\log\log n$ phases
\begin{equation}\label{eq:firstphaseisnice}
\hat c_1 \geq 19n/20 \enspace .
\end{equation}  

As mentioned before (see \autoref{obs:max-number-of-ticks}), using Chernoff
bounds, we can show that \whp:
\begin{enumerate}
\item All nodes have a working time in $[\tMETA0, \tMETA1)$ at reference point
$\tMETA0'$. This implies that no node starts with two choices phase before all
nodes finished Part 1 (\autoref{alg:asynchronous}).
\item All nodes have a working time in $[\tMETA0,\tMETA4]$ during the reference
points in $[\tMETA0', \tMETA3]$. This together with above statement and
\eqref{eq:firstphaseisnice} are the assumptions of \autoref{lem:neverback}.
\item All nodes have a working time in $[\tMETA1,\tMETA4]$ during the reference
points in $[\tMETA2, \tMETA3]$. This is the assumption required by
\autoref{lem:endgame}.
\end{enumerate}
Thus, by \autoref{lem:neverback} and \autoref{lem:endgame}, \whp all nodes agree
 on $\A$ at \tMETA3. Clearly, no node can change to any other color
afterwards and, by Chernoff bounds, after additional $\Theta(\log n)$ periods
all nodes will have completed the execution of \autoref{alg:endgame}. Thus the
total run time is $\Theta(\log n)$.
\end{proof}

\subsection{Increasing the Number of Opinions}
In our proofs we considered for the ease of presentation the setting where
$k\leq \Exp{\log n/\log\log n}$. 

However, it is possible to allow for any $k=\BigO{n^\varepsilon}$ (we still require that $a\geq (1+\varepsilon)b$).
This requires the algorithm to have a bound on $k$ so that
the length of block $\Delta$ is adapted to $\Delta=\Theta{(\log k + \log n/\log \log n)}$.
This is sufficient to get an equivalent notion of weak synchronicity. 
Due to the quadratic doubling, the algorithm requires  $\BigO{\log \log n}$ phases.
The length of the second part of the algorithm remains untouched resulting in 
a run time of   $\BigO{\log k \cdot \log \log n + \log n}$.

\section{Conclusion}
We introduced an algorithm to solve the plurality consensus in the asynchronous
setting. Our algorithm achieves the best the possible asymptotic run time in
the setting where the number of opinions $k$ is bounded by $\Exp{\log
n/\log\log n}$. We believe that the concept of weak synchronicity (including
the \SG and the tactical waiting) as well as our analysis techniques may well
prove to be of independent interest. Moreover, we feel that the ideas presented
here may be applicable to the adaptation of synchronous protocols to
asynchronous settings for a much wider class of problems, perhaps even
eventually leading to a generic framework. It  remains an open question whether
their exists an algorithm with the same run time allowing for
$k=\BigO{n^\varepsilon}$ opinions; we note that even in the synchronous setting
this questions remains open.

\section*{Acknowledgement.}
We would like to thank Gregor Bankhamer for helpful discussions and important hints. 

\nocite{*}
\clearpage
\bibliographystyle{abbrv}
\bibliography{paper}

\end{document}